\documentclass[twoside]{article}

\usepackage[accepted]{aistats2023}
%



\usepackage[round]{natbib}


\usepackage{graphicx}
\usepackage{amsmath,amsthm}
\usepackage{amsfonts}
\usepackage{bm}
\usepackage{algorithmic,algorithm}
\usepackage{soul}
\usepackage{booktabs}
\usepackage{hyperref}
\usepackage{cleveref}
\usepackage{xcolor}
\hypersetup{colorlinks,linkcolor={blue},citecolor={green!50!black},urlcolor={blue}}
\usepackage{dblfloatfix}
\usepackage{paralist}
\usepackage{subfig}


\DeclareMathOperator*{\argmax}{arg\,max}
\DeclareMathOperator*{\argmin}{arg\,min}

\newtheorem{assumption}{Assumption}

\newtheorem{example}{Example}
\newtheorem{proposition}{Proposition}
\newtheorem{lemma}{Lemma}
\newtheorem{theorem}{Theorem}

\renewcommand{\cal}[1]{\mathcal{#1}}

\renewcommand{\rm}[1]{\mathrm{#1}}
\renewcommand{\sf}[1]{\mathsf{#1}}

\renewcommand{\r}{\mathbb{R}}

\newcommand{\rn}{\mathbb{R}^n}

\newcommand{\mmag}[1]{\left|#1\right|}
\newcommand{\norm}[1]{\left|\left|{#1}\right|\right|}
\newcommand{\iprod}[2]{\left\langle{#1},{#2}\right\rangle}

\newcommand{\Ebb}[1]{\mathbb{E}\left[#1\right]}

\usepackage{minitoc}


\begin{document}
\doparttoc 
\faketableofcontents 

%

%

\twocolumn[

\aistatstitle{Particle algorithms for maximum likelihood training of latent variable models}

\aistatsauthor{ Juan Kuntz \And Jen Ning Lim \And Adam M.\ Johansen}

\aistatsaddress{Department of Statistics, University of Warwick.} ]

\begin{abstract}
\citet{Neal1998} recast maximum likelihood estimation of any given latent variable model as the minimization of a free energy functional $F$, and the EM algorithm as coordinate descent applied to $F$. Here, we explore  alternative ways to optimize the functional. In particular, we identify various gradient flows associated with $F$ and show that their limits coincide with $F$'s stationary points. By  discretizing the flows, we obtain practical particle-based algorithms for maximum likelihood estimation in broad classes of latent variable models. The novel algorithms scale to high-dimensional settings and perform well in numerical experiments.
\end{abstract}
\section{INTRODUCTION}\label{sec:intro}
In machine learning and statistics, we often use a probabilistic model, $p_\theta(x,y)$, defined in terms of a vector of  parameters, $\theta$, to infer some quantities, $x$, that we cannot observe experimentally from some, $y$, that we can.  A pragmatic middle ground between Bayesian and frequentist approaches to this type of problem is the \emph{empirical Bayes} (EB) paradigm~\citep{Robbins1955} wherein we
\begin{compactenum}
\item[(S1)] learn the parameters from the data: we search for parameters $\theta_*$ that explain the data $y$ well;
\item[(S2)] use  $\theta_*$ to infer, and quantify the uncertainty in, $x$.
\end{compactenum}
Because this approach does not require eliciting a prior over the parameters, it is particularly appealing for models whose parameters lack physical interpretations or meaningful prior information; e.g.\ the generator network in Sec.~\ref{sec:gen}.  Steps (S1,2) are typically reformulated technically as
\begin{compactenum}
\item[(S1)] find a $\theta_*$ maximizing the \emph{marginal likelihood},  
$$p_\theta(y):=\int p_\theta(x,y)dx;$$
\item[(S2)] obtain the corresponding \emph{posterior distribution}, 
$$p_{\theta_*}(x|y):=\frac{p_{\theta_*}(x,y)}{p_{\theta_*}(y)}.$$
\end{compactenum}
Perhaps the most well-known method for tackling (S1,2) is the \emph{expectation maximization} (EM) algorithm~\citep{Dempster1977}: starting from an initial guess $\theta_0$, alternate,
\begin{compactenum}
\item[(E)] compute $q_{k}:=p_{\theta_{k}}(\cdot|y)$,
\item[(M)] solve for $\theta_{k+1}:=\argmax_{\theta\in\Theta}\int \ell(\theta,x)q_{k+1}(x)dx$,
\end{compactenum}
where $\ell(\theta,x):=\log(p_\theta(x,y))$ denotes the log-likelihood. 
Under general conditions~\citep[Chap.~3]{McLachlan2007}, $\theta_k$ converges  to a stationary point $\theta_*$ of the marginal likelihood and $q_k$ to the corresponding posterior $p_{\theta_*}(\cdot|y)$. In cases where the above steps are not analytically tractable, it is common to approximate (E) using Monte Carlo (or Markov chain Monte Carlo if $p_\theta(\cdot|y)$ cannot be sampled directly) and (M) using numerical optimization (e.g.\ with a single gradient or Newton step in  Euclidean spaces); cf.~\citet{Wei1990,Kuk1997,Delyon1999,Younes1999,Kuhn2004,Han2017,Qiu2020,Cai2010,Nijkamp2020,Debortoli2021}.

Here, we take a different approach that builds on an insightful observation made by~\citet{Neal1998} (see \citet{Csiszar1984} for a precedent): EM can be recast as a well-known optimization routine applied to a certain objective. The objective is the `free energy':
\begin{align}F(\theta,q):
&=\int \log(q(x))q(x)dx-\int \ell(\theta,x) q(x)dx\label{eq:vfe}\end{align}
for all $(\theta,q)$ in  $\Theta\times\cal{P}(\cal{X})$, where $\Theta$ denotes the parameter space and $\cal{P}(\cal{X})$ the space of probability distributions over the latent space $\cal{X}$. The optimization routine is coordinate descent: starting from an initial guess $\theta_0$, alternate,
\begin{compactenum}
\item[(E)] solve for $q_{k}:=\argmin_{q\in\cal{P}(\cal{X})}F(\theta_{k},q)$,
\item[(M)] solve for $\theta_{k+1}:=\argmin_{\theta\in\Theta}F(\theta,q_{k})$.
\end{compactenum}
The key here is the following result associating the maxima of $p_\theta(y)$  with the minima of $F$:
\begin{theorem}\label{thrm:vfe}For any $\theta$ in $\Theta$, the posterior $p_\theta(\cdot|y):=p_\theta(\cdot,y)/p_\theta(y)$ minimizes $q\mapsto F(\theta,q)$. 
Moreover, $p_\theta(y)$ has a global maximum at $\theta$ if and only if $F$ has a global minimum at $(\theta,p_\theta(\cdot|y))$.
\end{theorem}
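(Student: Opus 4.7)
The plan is to prove both parts simultaneously via the well-known decomposition
\[
F(\theta,q)=\log p_\theta(y)-\mathrm{KL}(q\,\|\,p_\theta(\cdot|y)),
\]
which I would derive in one or two lines by substituting $p_\theta(x,y)=p_\theta(x|y)p_\theta(y)$ inside the logarithm in~\eqref{eq:vfe} and splitting the integral, using $\int q(x)dx=1$ to pull out the $\log p_\theta(y)$ term. This reduces every claim in the theorem to standard properties of Kullback--Leibler divergence, so there is no real analytic obstacle; the content is purely algebraic manipulation plus non-negativity of KL.

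From this identity, the first assertion is immediate: since $\mathrm{KL}(q\,\|\,p_\theta(\cdot|y))\ge 0$ with equality iff $q=p_\theta(\cdot|y)$, the map $q\mapsto F(\theta,q)$ is maximized at $q=p_\theta(\cdot|y)$, and the maximal value equals $\log p_\theta(y)$. I would state the non-negativity of KL as a standard fact (Gibbs' inequality) rather than re-derive it.

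For the second assertion, I would argue both directions using that $F(\theta,p_\theta(\cdot|y))=\log p_\theta(y)$ and that $\sup_q F(\theta',q)=\log p_{\theta'}(y)$ for every $\theta'$, both obtained in the previous step. For the forward direction, if $\theta$ globally maximizes $p_\theta(y)$, then for any $(\theta',q')$ we have $F(\theta',q')\le\log p_{\theta'}(y)\le\log p_\theta(y)=F(\theta,p_\theta(\cdot|y))$. For the converse, if $(\theta,p_\theta(\cdot|y))$ globally maximizes $F$, then for any $\theta'$, taking $q'=p_{\theta'}(\cdot|y)$ yields $\log p_{\theta'}(y)=F(\theta',p_{\theta'}(\cdot|y))\le F(\theta,p_\theta(\cdot|y))=\log p_\theta(y)$, and I would apply monotonicity of $\log$ to conclude.

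The main (mild) obstacle is just bookkeeping around quantifiers in the ``if and only if'': one must be careful that the converse direction uses the freedom to pick $q'=p_{\theta'}(\cdot|y)$ rather than an arbitrary $q'$, since only at this choice does $F(\theta',q')$ attain $\log p_{\theta'}(y)$. I would also briefly note measurability/integrability caveats (implicit in the definition of $F$) so that the KL identity is well-defined, but would not belabor them.
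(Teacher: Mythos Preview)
Your proposal is correct and follows essentially the same approach as the paper, which does not spell out a proof but simply notes that the theorem follows from the arguments in \cite{Neal1998}; those arguments are precisely the KL decomposition $F(\theta,q)=\log p_\theta(y)-\mathrm{KL}(q\,\|\,p_\theta(\cdot|y))$ that you use. Your handling of both directions of the ``if and only if'' is clean and there is nothing to add.
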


The theorem follows easily from the same type of arguments as those used to prove \citet[Lem.~1, Thrm.~2]{Neal1998}. Similar statements can also be made for local optima, but we refrain from doing so here because it involves specifying what we mean by `local' in $\Theta\times \cal{P}(\cal{X})$. The point is that finding a maximum of $p_\theta(y)$ and computing the corresponding posterior is equivalent to finding a minimum of $F$, and this is precisely what EM does. It has the same drawback as coordinate descent: we must be able to carry out the coordinate descent steps (or, equivalently, the EM steps) exactly. Consequently, at least in its original presentation, EM is limited to relatively  simple models.

For more complex models, it is natural to ask: `Could we instead solve (S1,2) by applying a different optimization routine to $F$? What about perhaps the most basic of them all, gradient descent?'. To affirmatively answer  both questions, we need (a) a sensible notion of a `gradient' for functionals on $\Theta\times\cal{P}(\cal{X})$ and (b) practical methods implementing the gradients steps, at least approximately. At the time of \citet{Neal1998}'s publication, these obstacles had already begun to crumble: Otto and coworkers had introduced~\citep{Otto1998,Otto2001} a notion of gradients for functionals on $\cal{P}(\cal{X})$ 
(w.r.t.\ to the Wasserstein-2 geometry\footnote{Defining a gradient or `direction of maximum ascent' for a functional on $\cal{P}(\cal{X})$ requires quantifying the relative distances of neighbouring points in $\cal{P}(\cal{X})$ and, consequently, a metric. Otto et al.'s original work used the Wasserstein-$2$ metric on $\cal{P}(\cal{X})$, hence the `Wasserstein-2 geometry' jargon; cf.\ App.~\ref{app:crash} for more details.})  
and an associated calculus; and~\citet{Ermak1975,Parisi1981}  had proposed the  unadjusted Langevin algorithm (ULA, name coined in~\citet{Roberts1996}) that turned out to be a practical Monte Carlo approximation of  the corresponding gradient descent algorithm applied to a particular functional (although this connection has only been fleshed out much more recently in papers such as~\citet{Cheng2018}). In the ensuing two decades, these two lines of work have progressed greatly: Otto et al.'s ideas have been consolidated and imbued with rigour~\citep{Villani2009, Ambrosio2005}, analogues have been established for other geometries on $\cal{P}(\cal{X})$~\citep{Duncan2019,Garbuno-Inigo2020,Lu2019}, and more practical methods have been published~\citep{Liu2016,Garbuno-Inigo2020,Lu2019,Reich2021,Chen2018}.

Here, we capitalize on these developments and obtain scalable, easy-to-implement algorithms that tackle (S1,2) for broad classes of models (any for which $\Theta$~and~$\cal{X}$ are euclidean  and the density $p_\theta(x,y)$ is differentiable in $\theta$ and $x$). We consider three methods: an approximation to gradient descent (Sec.~\ref{sec:pgd}), one to Newton's method (App.~\ref{app:pqn}), and a further `marginal gradient' method (App.~\ref{app:pmgd}) applicable to models for which the (M) step is tractable but the (E) step is not --- a surprisingly common situation in practice.  
 We then study their performance in three examples (Sec.~\ref{sec:examples}). We conclude with a discussion of our methods, their limitations, and future research directions (Sec.~\ref{sec:discussion}).  
Code for our examples can be found at~\href{https://github.com/juankuntz/ParEM}{https://github.com/juankuntz/ParEM}.
\paragraph{Related literature and contributions.}Procedures reminiscent of those in Sec.~\ref{sec:pgd} and Apps.~\ref{app:pqn}, \ref{app:pmgd} are commonplace in variational inference, e.g.\ see \citet[Sec.~2]{Kingma2019}. 
Here, practitioners choose a tractable parametric family $\cal{Q}:=(q_\phi)_{\phi\in\Phi}\subseteq\cal{P}(\cal{X})$, parametrized by $\phi$s in some set $\Phi$, and solve
\begin{equation}\label{eq:vi}(\theta_*,\phi_*)=\argmin_{(\theta,\phi)\in\Theta\times \Phi}F(\theta,q_\phi)\end{equation}
using an appropriate optimization algorithm. If $\cal{Q}$ is sufficiently rich, then  $(\theta_*,q_{\phi_*})$ will be close to an optimum of $(\theta,q)\mapsto F(\theta,q)$ if $(\theta_*,\phi_*)$ is an optimum of $(\theta,\phi)\mapsto F(\theta,q_\phi)$. How rich $\cal{Q}$ needs to be is a complicated question and, in practice, $\cal{Q}$'s choice is usually dictated by computational considerations. Because the optimization of interest is that of $(\theta,q_\phi)$ over $\Theta\times \cal{Q}$  rather than that of $(\theta,\phi)$ over $\Theta\times\Phi$, it often proves beneficial to  adapt the optimization routine appropriately. For instance, one could use \emph{natural gradients}~\citep{Martens2020} defined not w.r.t.\ the  Euclidean geometry on $\Phi$ but instead w.r.t.\ a geometry that accounts for the effect that changes in $\phi$ have in $q_\phi$, with changes in $q_\phi$  measured by the KL divergence. In this paper, we circumvent these issues by working directly in $\cal{P}(\cal{X})$. We are also guided by  similar considerations when choosing $\theta$ updates (see Apps.~\ref{app:pqn},~\ref{app:pmgd} in particular): the object of interest here is the distribution $p_\theta(\cdot,y)$ indexed by $\theta$ rather than $\theta$ itself (but, $p_\theta(\cdot,y)$ is unnormalized, and it is no longer obvious that natural gradients are sensible).

Well-known algorithms are corner cases of ours. If the parameter space is trivial ($\Theta=\{\theta\}$) and we use a single particle ($N=1$ in what follows), the methods in Sec.~\ref{sec:pgd} and Apps.~\ref{app:pqn},~\ref{app:pmgd} reduce to ULA applied to the unnormalized density $p_\theta(\cdot,y)$. If, on the other hand, the latent space is trivial,  the algorithm in Sec.~\ref{sec:pgd} collapses to gradient descent applied to $\theta\mapsto p_\theta(y)$ and that in App.~\ref{app:pqn} to Newton's method.
Lastly, although we find the EB setting a natural one for introducing our methods, the EM algorithm can also be used to tackle many other problems, e.g.\ see \citet[Chap.~8]{McLachlan2007}, and, subject to the limitations discussed in Sec.~\ref{sec:discussion}, so can ours.

The contributions of this paper are as follows:
\begin{compactenum}
\item[(1)] We identify various gradient flows associated with $F$ (Sec.~\ref{sec:pgd},~Apps.~\ref{app:pqn},~\ref{app:pmgd}), review the pertinent theory (App.~\ref{app:crash}), and provide theoretical evidence for their convergence to $F$'s optima (Thrm.~\ref{thrm:expconv}, App.~\ref{app:conv}).
\item[(2)] Building on the insights afforded by (1), we derive three novel particle-based alternatives to EM (Sec.~\ref{sec:pgd},~Apps.~\ref{app:pqn},~\ref{app:pmgd}), study them theoretically (Sec.~\ref{sec:pgd},~Apps.~\ref{app:biasfinitepop},~\ref{app:biasdisc}), consider modifications that enhance their practical utility (Secs.~\ref{sec:pgd},~\ref{sec:gen}), and demonstrate the latter via several examples (Sec.~\ref{sec:examples}).
\item[(3)] We pave the way to other novel  methods for maximum likelihood estimation in latent variable models~(Sec.~\ref{sec:discussion}), be they, for example, optimization-inspired ones like those in Sec.~\ref{sec:pgd} and Apps.~\ref{app:pqn},~\ref{app:pmgd} or pure Monte Carlo approaches like those in App.~\ref{app:MH}.
\end{compactenum}
\paragraph{Our setting, notation, assumptions, rigour, and lack thereof.}In this  methodological paper, we favour intuition and clarity of presentation over mathematical rigour. We believe that all of the statements we make can be argued rigorously under the appropriate technical conditions, but we do not dwell on what these are. Except where strictly necessary, we avoid measure-theoretic notation, and we commit the usual notational abuse of conflating measures and kernels with their densities w.r.t.\ to the Lebesgue measure (this can be remedied by interpreting equations weakly and replacing density ratios with Radon-Nikodym derivatives). We also focus on Euclidean parameter and latent spaces ($\Theta=\r^{D_\theta}$ and $\cal{X}=\r^{D_x}$ for $D_\theta,D_x>0$), although our results and methods apply almost unchanged were these to be differentiable Riemannian manifolds. Throughout, $\bm{1}_{d}$ and $I_d$ respectively denote the $d$-dimensional vector of ones and identity matrix, $\cal{N}(\mu,\Sigma)$ the normal distribution with mean vector $\mu$ and covariance matrix $\Sigma$, and $\cal{N}(x;\mu,\Sigma)$ its density evaluated at $x$. We also tacitly assume that 
$p_\theta(x,y)>0$ for all $\theta,x,$ and $y$; and that  $(\theta,x)\mapsto p_\theta(x,y)$ is sufficiently regular that any gradients or Hessians we use are well-defined and any integral-derivative swaps and applications of integration-by-parts we do are justified. Furthermore, we make the following assumption, the violation of which indicates a poorly parametrized model or insufficiently informative data.
\begin{assumption}\label{ass:nodiv}The marginal likelihood's super-level sets $\{\theta\in\Theta: p_\theta(y)\geq l\}$, for any $l > 0$, are bounded.
\end{assumption}

\begin{figure*}
  \centering
  \includegraphics[width=1\linewidth]{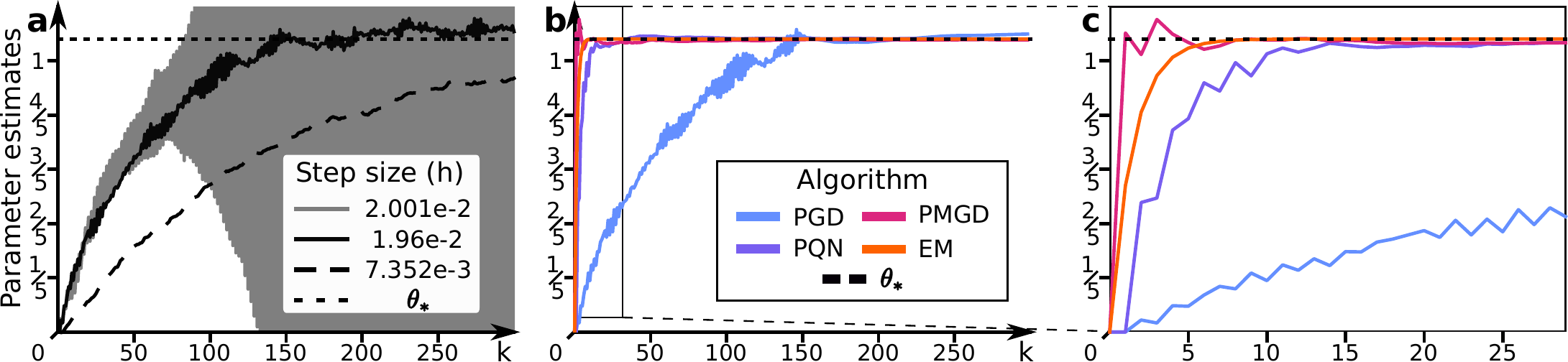}
  \caption{\textbf{Toy hierarchical model.} Parameter estimates for Ex.~\ref{ex:hier} with $D_x=100$ latent variables, $N=10$ particles, and both particles and estimates initialized at zero. \textbf{a} PGD estimates $\theta_k$ for three step sizes $h$. \textbf{b} PGD, PQN, PMGD, and EM parameter estimates. EM converges without averaging over time. For PGD, PQN,  and PMGD,  we use optimal step sizes (respectively, $h=1/51,2/3,1$, cf.\ App.~\ref{app:hierconvrates}) and start averaging once the estimates reach stationarity  (i.e.\ plot shows $\theta_k$ for $k<k_b$ and $\bar{\theta}_k$ for $k\geq k_b$ with $k_b=150, 15, 5$ for PGD, PQN, PMGD, respectively). \textbf{c} First $30$ steps in b.}\label{fig:hier}
\end{figure*}

\section{PARTICLE GRADIENT DESCENT}\label{sec:pgd}

The basic gradient descent algorithm for minimizing a differentiable function $f:\rn\to\r$,
\begin{equation}\label{eq:Euclideangradascent}x_{k+1}=x_k-h\nabla_x f(x_k),\end{equation}
is the Euler discretization with step size $h>0$ of $f$'s continuous-time \emph{gradient flow} $\dot{x}_t=-\nabla_x f(x_t)$, where $\nabla_x$ denotes the usual Euclidean gradient w.r.t. to $x$. To obtain an analogue of~\eqref{eq:Euclideangradascent} applicable to $F$ in~\eqref{eq:vfe}, we identify an analogue of $f$'s gradient flow and discretize it. Here, we require a sensible notion for $F$'s gradient. We use $\nabla F (\theta,q)= (\nabla_\theta F(\theta,q),\nabla_q F(\theta,q))$, where 
\begin{align}\nabla_\theta F(\theta,q)&=-\int \nabla_\theta \ell(\theta,x)q(x)dx,\label{eq:Fgradient1}\\
\nabla_q F(\theta,q)&=\nabla_x\cdot\left[ q\nabla_x \log\left(\frac{p_{\theta}(\cdot,y)}{q}\right)\right].\label{eq:Fgradient2}
\end{align}
This is the gradient obtained if we endow $\Theta$ with the Euclidean geometry and $\cal{P}(\cal{X})$ with the Wasserstein-2 one (see App.~\ref{app:grad}). It vanishes if and only if $\theta$ is a stationary point of $p_\theta(y)$ and $q$ is its corresponding posterior:
\begin{theorem}[$1^{st}$ order optimality condition]\label{thrm:statpoints}\(\nabla F(\theta,q)=0\) if and only if \(\nabla_\theta p_\theta(y)=0\) and \(q=p_\theta(\cdot|y)\).
\end{theorem}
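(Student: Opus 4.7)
The proof factors naturally into two parts, one for each component of $\nabla F = (\nabla_\theta F, \nabla_q F)$, exploiting the fact that the $q$-component admits a clean characterization of its zeros via Theorem~\ref{thrm:vfe}, after which the $\theta$-component reduces to Fisher's identity.

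\textbf{Step 1: the $q$-gradient vanishes iff $q = p_\theta(\cdot|y)$.} For fixed $\theta$, formula \eqref{eq:Fgradient} gives $\nabla_q F(\theta,q) = -\nabla_x\cdot[q\nabla_x\log(p_\theta(\cdot,y)/q)]$. The standard manipulation: pair this against $\log(p_\theta(\cdot,y)/q)$ and integrate by parts to obtain
\begin{equation*}
\int q(x)\,\bigl\lVert\nabla_x\log(p_\theta(x,y)/q(x))\bigr\rVert^2\,dx = 0,
\end{equation*}
which forces $p_\theta(\cdot,y)/q$ to be constant on the support of $q$. Since $q$ is a probability density and $p_\theta(\cdot,y)$ integrates to $p_\theta(y)$, the constant must be $p_\theta(y)$ and therefore $q = p_\theta(\cdot,y)/p_\theta(y) = p_\theta(\cdot|y)$. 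Conversely, if $q = p_\theta(\cdot|y)$, then $p_\theta(\cdot,y)/q \equiv p_\theta(y)$ is constant, so $\nabla_x\log(p_\theta(\cdot,y)/q) = 0$ and $\nabla_q F(\theta,q) = 0$. (One may equivalently invoke Theorem~\ref{thrm:vfe}: it tells us $q\mapsto F(\theta,q)$ is maximized uniquely at $p_\theta(\cdot|y)$, so any critical point of this concave functional must be the posterior.)

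\textbf{Step 2: the $\theta$-gradient at $q = p_\theta(\cdot|y)$ is the marginal score.} Substituting $q = p_\theta(\cdot|y) = p_\theta(\cdot,y)/p_\theta(y)$ into the first formula in \eqref{eq:Fgradient} and recognizing Fisher's identity,
\begin{equation*}
\nabla_\theta F(\theta,p_\theta(\cdot|y)) = \int \nabla_\theta\log p_\theta(x,y)\,\frac{p_\theta(x,y)}{p_\theta(y)}\,dx = \frac{1}{p_\theta(y)}\int \nabla_\theta p_\theta(x,y)\,dx = \frac{\nabla_\theta p_\theta(y)}{p_\theta(y)},
\end{equation*}
where the interchange of $\nabla_\theta$ and the $x$-integral is permitted by the blanket regularity assumption stated at the end of the Introduction. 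Since $p_\theta(y) > 0$, this vanishes iff $\nabla_\theta p_\theta(y) = 0$. Combining the two steps yields both directions of the equivalence.

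\textbf{Main obstacle.} The nontrivial step is Step~1: the $q$-gradient is a divergence, so its vanishing as a distribution does not immediately pin down the density $q$. The integration-by-parts manipulation circumvents this by converting the first-order condition into a pointwise equality on $\mathrm{supp}(q)$, but it relies on boundary terms dropping out and on $q$ assigning full mass (so that the constancy of $p_\theta(\cdot,y)/q$ on $\mathrm{supp}(q)$ together with the normalization condition determines $q$ uniquely). These are precisely the kinds of regularity caveats flagged in the paper's standing assumptions, so we treat them as granted.
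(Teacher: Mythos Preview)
Your proof is correct and follows essentially the same two-step decomposition as the paper's own argument: first characterize the zeros of $\nabla_q F$ as $q=p_\theta(\cdot|y)$, then reduce $\nabla_\theta F(\theta,p_\theta(\cdot|y))=0$ to $\nabla_\theta p_\theta(y)=0$ via Fisher's identity. The only difference is expository: in Step~1 you spell out the integration-by-parts argument yielding $\int q\,\lVert\nabla_x\log(p_\theta/q)\rVert^2\,dx=0$, whereas the paper simply asserts by inspection of~\eqref{eq:Fgradient} that $\nabla_q F(\theta,q)=0$ iff $q\propto p_\theta(\cdot,y)$ and then normalizes; your version is a touch more explicit but neither adds nor omits anything substantive.
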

\begin{proof}
Examining~(\ref{eq:Fgradient1},\ref{eq:Fgradient2}) we see that $\nabla_q F(\theta,q)=0$ if and only if $q\propto p_\theta(\cdot,y)$. Given that $q$ is a probability distribution, it follows that $\nabla_q F(\theta,q)=0$ if and only if $q=p_\theta(\cdot|y)$. The result then follows from
\begin{align}
\nabla_\theta p_\theta(y)&=\int \nabla_\theta p_\theta(x,y)dx=\int \nabla_\theta \ell(\theta,x)p_\theta(x,y)dx\nonumber\\
&=p_\theta(y)\int \nabla_\theta \ell(\theta,x)p_\theta(x|y)dx\label{eq:nsya8bnyfsanj7ufa}\\
&=-p_\theta(y)\nabla_\theta F(\theta,p_\theta(\cdot|y)).\nonumber
\end{align}\vspace{-35pt}\flushright{$\qquad$}\vspace{-15pt}
\end{proof}
The gradient flow corresponding to (\ref{eq:Fgradient1},\ref{eq:Fgradient2}) reads
\begin{align}
\dot{\theta}_t&= \int \nabla_\theta \ell(\theta_t,x)q_t(x)dx,\label{eq:IW2Gradflow1}\\
\dot{q}_t&=\nabla_x\cdot\left[ q_t\nabla_x \log\left(\frac{q_t}{p_{\theta_t}(\cdot,y)}\right)\right].\label{eq:IW2Gradflow2}
\end{align}
Given Assumpt.~\ref{ass:nodiv} and Thrm.~\ref{thrm:statpoints}, we expect that an extension of LaSalle's principle~\citep[Thrm.~1]{Carrillo2020} will show  that, as $t$ tends to infinity, 
$\theta_t$ approaches  a stationary point $\theta_*$ of $\theta\mapsto p_\theta(y)$ and $q_t$ the corresponding posterior $p_{\theta_*}(\cdot|y)$; see App.~\ref{app:conv} for more on this. Here, we settle for exponential convergence in the strongly log-concave case:

\begin{theorem}\label{thrm:expconv}Suppose there exists $\lambda>0$ and $C>0$ s.t.
\[\nabla^2 \ell(\theta,x) 	\preceq -\lambda I_{D_x+D_\theta},\quad\norm{\nabla_\theta \ell(\theta,x)}\leq C\]
for all $(\theta,x)$ in $\Theta\times\cal{X}$. The marginal likelihood $\theta\mapsto p_\theta(y)$ has a unique maximizer $\theta_*$ and there exists $C'>0$ s.t.\
\[\norm{\theta_t-\theta_*}\leq C'e^{-\lambda t}\enskip\text{and}\enskip \norm{q_t-p_{\theta_*}(\cdot|y)}_{L^1}\leq C'e^{-\lambda t}\]
for all $t\geq0$.
\end{theorem}
See App.~\ref{app:expoproof} for a proof. Eqs.~(\ref{eq:IW2Gradflow1},\ref{eq:IW2Gradflow2}) can rarely be solved analytically. To overcome this, note that (\ref{eq:IW2Gradflow1},\ref{eq:IW2Gradflow2}) is a mean-field Fokker-Planck equation satisfied by the law of a McKean-Vlasov SDE~\citep[Sec.~2.2.2]{Chaintron2022}:
\begin{align}\label{eq:IW2gradSDE1}
d\theta_t &= \left[\int \nabla_\theta \ell(\theta_t,x)q_t(x)dx\right]dt,\\
dX_t&=\nabla_x \ell(\theta_t,X_t)dt+\sqrt{2}dW_t,\label{eq:IW2gradSDE2}
\end{align}
where $q_t$ denotes $X_t$'s law and $(W_t)_{t\geq0}$ a  $D_x$-dimensional Brownian motion. To obtain an implementable algorithm, we now require a tractable approximation to the integral in~\eqref{eq:IW2gradSDE1} and a discretization of the time axis. For the former, we use a finite-sample approximation to $q_t$:  we generate $N\geq1$ particles $X^1_t,\dots,X^N_t$ with law $q_t$ by solving
\begin{equation}\label{eq:mlemckean3}dX_t^n=\nabla_x \ell(\theta_t,X_t^n)dt+\sqrt{2}dW_t^n\quad\forall n\in[N],\end{equation}
with $[N]:=\{1,\dots,N\}$ and $(W_t^1)_{t\geq0},\dots,(W_t^N)_{t\geq0}$ denoting $N$ independent Brownian motions, and exploit
\begin{align}&q_t\approx\frac{1}{N}\sum_{n=1}^N\delta_{X_t^n}\label{eq:finsampleapprox}\\
&\Rightarrow\quad  \int \nabla_\theta \ell(\theta_t,x)q_t(x)dx\approx \frac{1}{N}\sum_{n=1}^N\nabla_\theta \ell(\theta_t,X_t^n),\nonumber\end{align}
where $\delta_x$ denotes a Dirac delta at $x$. We then obtain the following approximation to (\ref{eq:IW2gradSDE1},\ref{eq:mlemckean3}):
\begin{align*}
d\theta_t &= \left[\frac{1}{N}\sum_{n=1}^N\nabla_\theta  \ell(\theta_t,X_t^n)\right]dt,\\
dX_t^n&=\nabla_x  \ell(\theta_t,X_t^n)dt+\sqrt{2}dW_t^n\enskip\forall n\in[N].
\end{align*}
To obtain an implementable algorithm (PGD in Alg.~\ref{alg:pgd}), we discretize the above using the Euler-Maruyama scheme. 

After running PGD for a large enough number of steps $K$, we approximate a stationary point  \(\theta_*\) of the marginal likelihood \(\theta\mapsto p_\theta(y)\) and its corresponding posterior \(p_{\theta_*}(\cdot|y)\) with either (a) the final parameter estimates $\theta_K$ and the final particle cloud's empirical distribution $q_K=N^{-1}\sum_{n=1}^N\delta_{X_K^n}$; or (b) with time-averaged versions thereof:
\begin{equation}\label{eq:estimators}\bar{\theta}_K:=\frac{1}{(K-k_b)}\sum_{k=k_b+1}^K\theta_k, \enskip \bar{q}_K:=\frac{1}{(K-k_b)}\sum_{k=k_b+1}^Kq_k,\end{equation}
where $k_b$ denotes the number of steps discarded as burn-in. 
\setlength{\textfloatsep}{10pt}
\begin{algorithm}[h]
\begin{algorithmic}[1]
\STATE{\textbf{Inputs:} step size $h$, step number $K$, particle number $N$, and initial particles $X^1_0,\dots,X_0^N$ \& parameters $\theta_0$.\hspace{-1pt}}
\FOR{$k=0,\dots, K-1$}
\STATE{Update the parameter estimates:\vspace{-2pt}
\begin{equation}\label{eq:IW2gradalg11}\theta_{k+1} = \theta_{k} + \frac{h}{N}\sum_{n=1}^N \nabla_\theta  \ell(\theta_k,X_k^n).\end{equation}\vspace{-8pt}}
\STATE{Update the particles: for all $n=1,\dots,N$,\begin{equation}
X_{k+1}^n=X_k^n+h\nabla_x \ell(\theta_k,X_k^n)+\sqrt{2h}W_k^n,\label{eq:IW2gradalg12}
\end{equation}
with $W_k^1,\dots,W_k^N$ denoting i.i.d.\ $\cal{N}(0,I_{D_x})$ R.V.s.}
\ENDFOR
\RETURN{$(\theta_k,q_k:=N^{-1}\sum_{n=1}^N\delta_{X_k^n})_{k=0}^K$.}
\end{algorithmic}
 \caption{Particle gradient descent (PGD).}
 \label{alg:pgd}
\end{algorithm}
\paragraph{PGD's behavior.}Given the analogy between \eqref{eq:Euclideangradascent} and (\ref{eq:IW2gradalg11},\ref{eq:IW2gradalg12}), we can formulate conjectures for PGD's behaviour based on that of (stochastic) gradient descent (note that~\eqref{eq:IW2gradalg11} involves noisy estimates of $F$'s $\theta$-gradient) and Thrm.~\ref{thrm:statpoints}:
\begin{compactenum}
\item[(C1)] If the step size $h$ is set too large, $\theta_k$ will be unstable.   
\item[(C2)] Otherwise, after a transient phase, $\theta_k$ will hover around a stationary point $\theta_*$ of $\theta\mapsto p_\theta(y)$, $q_k$ around the corresponding posterior $p_{\theta_*}(\cdot|y)$, and $(\bar{\theta}_k,\bar{q}_k)$ will converge to $(\theta_*,p_{\theta_*}(\cdot|y))$.
\item[(C3)] Small step sizes lead to long transient phases but low estimator variance in the stationary phase.  
\end{compactenum}
\begin{figure*}[t]
    \begin{center}
    \includegraphics[width=1\linewidth]{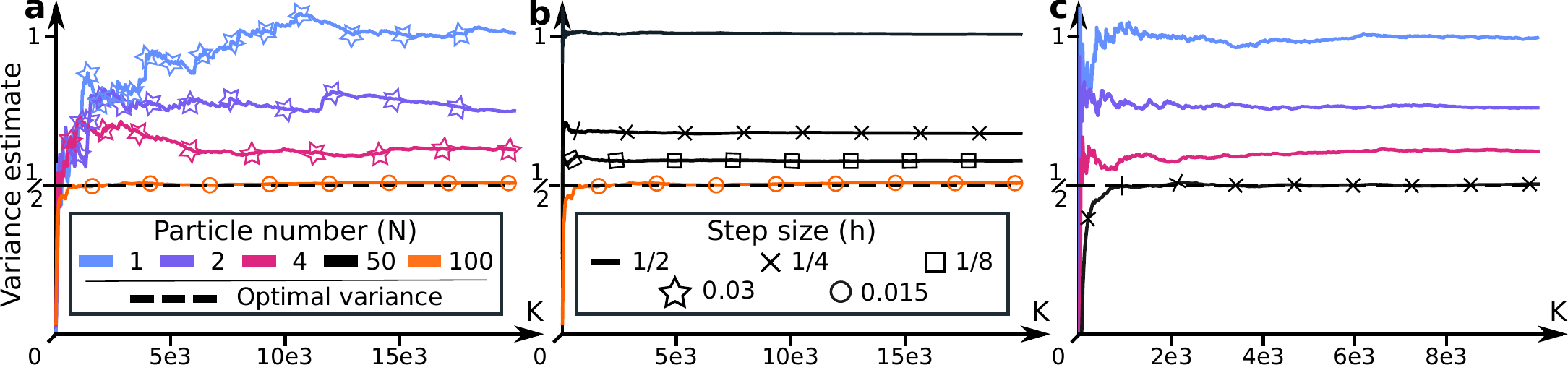}
    \end{center}
\caption{\textbf{Toy hierarchical model, bias.} PMGD estimates for the posterior variance in the $D_x=1$ case using the time-averaged posterior approximation $\bar{q}_K$  and no burn-in ($k_b=0$), as a function of step number $K$. \textbf{(a)} Even with a small step size $h$, using a single particle leads to a significant bias (blue). Growing the particle number $N$ reduces the bias (purple, magenta). The bias becomes negligible for  small $h$ and large $N$ (orange). \textbf{(b)} Even with a large $N$, a large $h$ leads to significant bias (solid). Decreasing $h$ reduces the bias (crosses, squares, circles). \textbf{(c)} Adding an accept-reject step removes (B1) regardless of the $h$ employed, and the remaining bias can be removed by choosing a sufficiently large $N$.}
\label{fig:hierbias}
\end{figure*}
Modulo the bias we discuss at the end of this section, (C1--3) are what we observe in our experiments:
\begin{example}\label{ex:hier}Consider a toy hierarchical model involving a single scalar unknown parameter $\theta$, $D_x$ i.i.d.\  mean-$\theta$ unit-variance Gaussian latent variables, and, for each of these, an independent  observed variable  with unit-variance Gaussian law centred at the latent variable:
$$p_\theta(x,y):=\prod_{d=1}^{D_x}\frac{1}{2\pi}\exp\left(-\frac{(x_d-\theta)^2}{2}-\frac{(y_d-x_d)^2}{2}\right).$$

It is straightforward to verify that the marginal likelihood $\theta\mapsto p_\theta(y)$ has a unique maximum, $\theta_*=D_x^{-1}\sum_{d=1}^{D_x}y_d$, and obtain expressions for the corresponding posterior (see App.~\ref{app:hier}). Running PGD, we find that $\theta_k$ is unstable if the step size $h$ is too large (Fig.~\ref{fig:hier}a, grey). If $h$ is chosen well, $\theta_k$ approaches $\theta_*$ and hovers around it (Fig.~\ref{fig:hier}a, black solid) in such a way $\bar{\theta}_k$ converges to it (Fig.~\ref{fig:hier}b, blue). If $h$ is too small, the convergence is slow (Fig.~\ref{fig:hier}a, black dashed). 
\end{example}
\begin{table*}[b]  \caption{\textbf{Bayesian logistic regression.} Test errors achieved using time-averaged posterior approximation $\bar{q}_{400}$, with $N=1,10,100$, and corresponding computation times (averaged over $100$ replicates).
}
  \label{tab:blrmt}
  \centering
  \begin{tabular}{lllllll}
    \toprule
    &\multicolumn{2}{c}{$N=1$}&\multicolumn{2}{c}{$N=10$}&\multicolumn{2}{c}{$N=100$}\\
    \midrule
    & Error (\%)  & Time (s)& Error (\%)  & Time (s)& Error (\%)  & Time (s)\\
    \midrule
    PGD          & 3.58 $\pm$ 0.78 & 0.03 $\pm$ 0.01 & 3.55 $\pm$ 0.60 & 0.09 $\pm$ 0.01 & 3.46 $\pm$ 0.32 &  1.22 $\pm$ 0.34 \\
    PQN          & 3.54 $\pm$ 0.77 & 0.03 $\pm$ 0.00 & 3.49 $\pm$ 0.60 & 0.09 $\pm$ 0.00 & 3.47 $\pm$ 0.33 &  1.17  $\pm$ 0.26 \\
    PMGD         & 3.56 $\pm$ 0.69 & 0.03 $\pm$ 0.00 & 3.65 $\pm$ 0.68 & 0.09 $\pm$ 0.01 & 3.44 $\pm$ 0.33 &  1.15 $\pm$ 0.18 \\
    SOUL         & 3.53 $\pm$ 0.72 & 0.03 $\pm$ 0.00 & 3.60 $\pm$ 0.60 & 0.25 $\pm$ 0.01 & 3.43 $\pm$ 0.35 &  13.4 $\pm$ 0.23 \\
    \bottomrule
  \end{tabular}
\end{table*}

\paragraph*{Computational complexity and stochastic gradients.} PGD's complexity is $\cal{O}(KN[\text{eval. cost of }(\nabla_\theta \ell,\nabla_x \ell)])$. We can mitigate the $N$ factor by vectorizing computations across particles. In big data settings where evaluating $\ell$'s gradients is expensive, we replace them with stochastic estimates thereof similarly as in~\citep{Robbins1951,Welling2011}; cf.\  Sec.~\ref{sec:gen} for an example.

\paragraph*{Ill-conditioning, a heuristic, adaptive step sizes, PQN, and PMGD.}For many models, each component of $\nabla_\theta \ell$ is a sum of $D\gg1$ terms and, consequently, takes large values. On the other hand,  each component of $\nabla_x \ell$ typically involves far fewer terms (for instance, two in Ex.~\ref{ex:hier} while $D=D_x$). 
Hence, the parameter updates in (\ref{eq:IW2gradalg11}) are often much larger steps than the particle updates in~(\ref{eq:IW2gradalg12}). 
This ill-conditioning forces us to use small step sizes $h$ to keep $\theta_k$ stable. This results in `poor mixing' for the particles and overall slow convergence. In our experiments, we found that a simple heuristic mitigates the issue: in the parameter update, divide each component of $\nabla_\theta \ell$  by the corresponding number of terms ($D_x$ in Ex.~\ref{ex:hier}, see Sec.~\ref{sec:bnn} for another example). This amounts to pre-multiplying $\nabla_\theta \ell$ in \eqref{eq:IW2gradalg11} by positive definite matrix $\Lambda$ and does not alter  (\ref{eq:IW2gradalg11})'s fixed points. That is, we replace~(\ref{eq:IW2gradalg11}) with
\begin{align}\label{eq:IW2gradalg2}\theta_{k+1} = \theta_{k} + \frac{h}{N}\sum_{n=1}^N \Lambda\nabla_\theta \ell(\theta_k,X_k^n).\end{align}

For models with varying time-scales within the $\theta$-components, we found it helpful to adapt $\Lambda$ with $k$ similarly as in Adagrad or RMSProp (e.g., cf.~\citet{Ruder2016}); see Sec.~\ref{sec:gen} for an example. In cases where inverting $\ell$'s $\theta$-Hessian is not prohibitively expensive, we can alternatively mitigate the ill-conditioning using the PQN algorithm in App.~\ref{app:pqn}. (Indeed, for some simple models, (\ref{eq:IW2gradalg2},\ref{eq:IW2gradalg12}) coincides with PQN; in more complicated ones, it can be viewed as a crude approximation thereof.) Lastly, in cases where the E step is tractable, we can  circumvent this issue with the PMGD algorithm in App.~\ref{app:pmgd}.

\begin{figure*}[t]
  \centering
  \includegraphics[width=1\linewidth]{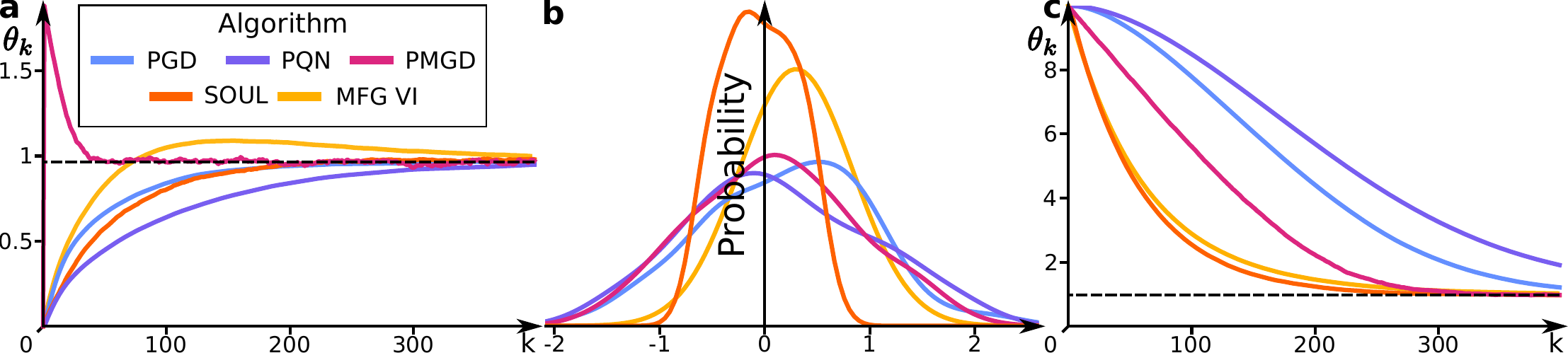}
  \caption{\textbf{Bayesian logistic regression.} \textbf{a} Parameter estimates $\theta_k$ 
  initialized at zero (as in~\citet{Debortoli2021}). For PGD, PQN, PMGD, and SOUL, we use $N=100$ particles and a step size of $h=0.01$. \textbf{b} KDE of the second coordinate of the posterior approximation $100^{-1}\sum_{n=1}^{100}X_{400}^{n}$ for PGD, PQN, PMGD, and SOUL, and corresponding marginal of the VI approximation. \textbf{c} As in a, but with estimates initialized at ten.}\label{fig:blr}
\end{figure*}

\paragraph*{The bias.}For PGD, PQN, and PMGD, (C2) above is not quite true: the estimates produced by the algorithms are biased in the sense that $(\bar{\theta}_k,\bar{q}_k)$ does not converge exactly to $(\theta_*,p_{\theta_*}(\cdot|y))$, but rather to a point in its vicinity. This bias stems from two sources:
\begin{compactenum}
\item[(B1)] \textbf{$\bm{h>0}$.} Euler-Maruyama discretizations of the Langevin diffusion do not preserve stationary distributions: this can be seen by examining the mean-field limits of (\ref{eq:IW2gradalg11},\ref{eq:IW2gradalg12}), cf.\ App.~\ref{app:biasdisc}.
\item[(B2)] \textbf{$\bm{N<\infty}$.} Our use of finite particle populations: this is best understood by studying the continuum limits of (\ref{eq:IW2gradalg11},\ref{eq:IW2gradalg12}), cf.\ App.~\ref{app:biasfinitepop}.
\end{compactenum}
B1 can be mitigated by decreasing the step size and B2 by increasing the particle number:
\begin{example}\label{ex:hierbias}Consider again Ex.~\ref{ex:hier}. In this simple case, (B1,2) do not feature 
 in the $\theta$-estimates  (Fig.~\ref{fig:hier})  because the model `is linear in $\theta$', cf.\ App.~\ref{app:hierfinitepopbias}. To observe (B1,2), we must examine the model's `non-linear aspects';  for instance, the posterior variance whose estimates are biased  (Fig.~\ref{fig:hierbias}).
\end{example}
Of course, increasing $N$ grows the algorithm's cost, and excessively lowering $h$ slows its convergence. 
It also seems  possible to eliminate (B1) altogether by adding population-wide accept-reject steps as described in App.~\ref{app:MH}, see Fig.~\ref{fig:hierbias}c. However, we do not dwell on this approach because a practical downside limits its scalability: the acceptance probability degenerates for large $D_x$ and $N$, forcing small choices of $h$ and slow convergence.

\begin{table*}[b]  \caption{\textbf{Bayesian neural network.} Test errors achieved using the final particle cloud $X_{500}^{1},\dots,X_{500}^N$, with $N=1,10,100$, and corresponding computation times (averaged over $10$ replicates).}
  \label{tab:bnn}
  \centering
  \begin{tabular}{lllllll}
    \toprule
    &\multicolumn{2}{c}{$N=1$}&\multicolumn{2}{c}{$N=10$}&\multicolumn{2}{c}{$N=100$}\\
    \midrule
    & Error (\%)  & Time (s)& Error (\%)  & Time (s)& Error (\%)  & Time (s)\\
    \midrule
    PGD          & 7.45 $\pm$ 2.03 & 4.10 $\pm$ 0.26 & 3.20 $\pm$ 1.12 & 10.4 $\pm$ 1.2 & 2.45 $\pm$ 0.99 &  76.6  $\pm$ 0.4 \\
    PQN          & 7.45 $\pm$ 1.60 & 4.12 $\pm$ 0.21 & 3.45 $\pm$ 1.04 & 10.0 $\pm$ 0.2 & 2.34 $\pm$ 0.81 &  74.0  $\pm$ 0.3 \\
    PMGD         & 7.24 $\pm$ 1.75 & 3.27 $\pm$ 0.13 & 3.75 $\pm$ 1.38 & 9.12 $\pm$ 0.2 & 2.45 $\pm$ 0.81 &  72.1  $\pm$ 0.5 \\
    SOUL         & 6.25 $\pm$ 1.54 & 5.02 $\pm$ 0.20 & 7.25 $\pm$ 1.38 & 36.5 $\pm$ 0.1 & 6.85 $\pm$ 1.42 &  364.0 $\pm$ 5.3 \\
    \bottomrule
  \end{tabular}
\end{table*}

\section{NUMERICAL EXPERIMENTS}\label{sec:examples}We examine the performance of our methods by applying them to train a Bayesian logistic regression model for breast cancer prediction (Sec.~\ref{sec:blr}), a Bayesian neural network for MNIST classification (Sec.~\ref{sec:bnn}), and a generator network for image reconstruction and synthesis (Sec.~\ref{sec:gen}).

\subsection{Bayesian logistic regression}\label{sec:blr}We consider the set-up described in~\citet[Sec.~4.1]{Debortoli2021} and employ the same dataset with $683$ datapoints, cf.\ App.~\ref{app:blr} for details. The latent variables are the $9$ regression weights. We  assign  an isotropic Gaussian prior $\cal{N}(\theta \bm{1}_{D_x},5 I_{D_x})$ to the weights,  and  we estimate the marginal likelihood's unique maximizer $\theta_*$ (cf.\ Prop.~\ref{prop:blr} in App.~\ref{app:blr} for the uniqueness).

\begin{figure*}[t]
  \centering
  \includegraphics[width=1\linewidth]{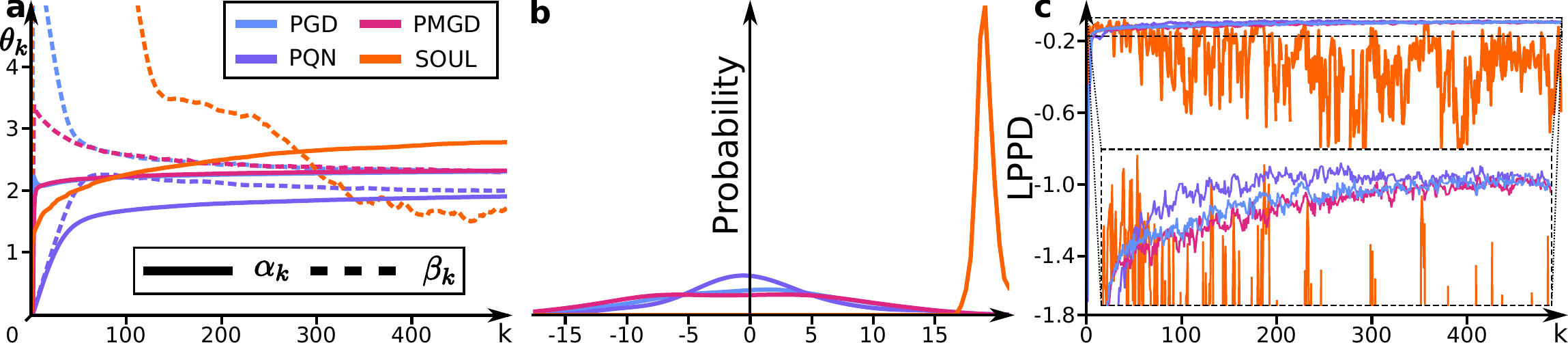}
  \caption{\textbf{Bayesian neural network.} \textbf{a} Parameter estimates as a function of $k$ with $N=100$ particles and step size of $h=0.1$. \textbf{b} KDE of a randomly-chosen coordinate of the posterior approximation $100^{-1}\sum_{n=1}^{100}X_{500}^{n}$.  \textbf{c} Log pointwise predictive density as a function of $k$. \textbf{c, inset} c zoomed-in to y-axis range $[-0.16,-0.08]$.}\label{fig:bnn}
\end{figure*}

We benchmark our algorithms against the Stochastic Optimization via Unadjusted Langevin (SOUL) algorithm\footnote{In \cite{Debortoli2021}, the authors allow for step sizes and particle numbers that change with $k$. To simplify the comparison and place all methods on equal footing, we fix a single step size $h$ and particle number $N$.}, recently proposed~\citep{Debortoli2021} to overcome the limited scalability of traditional MCMC EM variants. Because it is a coordinate-wise cousin of PGD~(Alg.~\ref{alg:pgd}), it allows for straightforward meaningful comparisons with our methods. SOUL approximates the (M) step by updating the parameter estimates using a single (stochastic) gradient step as we do in~(\ref{eq:IW2gradalg11}). For the (E) step, it instead runs a single ULA chain for $N$ steps, `warm-started' using the previous chain's final state ($X_{k}^1:=X_{k-1}^N$): for all $n\leq N-1$,
\begin{equation}X_{k}^{n+1}=X_{k}^n+h\nabla_x \ell(\theta_k,X_{k}^n)+\sqrt{2h}W_{k}^n;\label{eq:SOUL}\end{equation}
and then approximates $p_{\theta_{k}}(\cdot|y)$ using the chain's  empirical distribution $q_{k}:=N^{-1}\sum_{n=1}^N\delta_{X_{k}^n}$. 

The parameter estimates produced by PGD, PQN~(App.~\ref{app:pqn}), PMGD~(App.~\ref{app:pmgd}), and SOUL all converge to the same limit (Fig.~\ref{fig:blr}a). SOUL is known~\citep{Debortoli2021} to return accurate estimates of $\theta_*$ for this example, so we presume that this limit approximately equals $\theta_*$. All algorithms produce posterior approximations with similar predictive power  regardless of the particle number $N$ (Tab.~\ref{tab:blrmt}; see also  Tab.~\ref{tab:blrapp} in App.~\ref{app:blr}): the task is simple and it is straightforward to achieve good performance. In particular, the posteriors are unimodal and peaked (e.g.\ see \citet[Fig.~2]{Debortoli2021}) and approximated well using a single particle in the vicinity of their modes.  The variance of the stationary PGD, PQN, and PMGD estimates seems to decay linearly with $N$ (Tab.~\ref{tab:blrapp}); which is unsurprising given that these algorithms are Monte Carlo methods.

We found three noteworthy differences between SOUL and our methods. First, the computations in~(\ref{eq:IW2gradalg12},\ref{eq:W2gradalg}) are easily vectorized across particles while those in~\eqref{eq:SOUL} must be done in serial. This results in our algorithms running faster, with the gap in computation times growing with $N$ (Tab.~\ref{tab:blrmt}). Second, SOUL tends to produce narrower approximations than our methods (Fig.~\ref{fig:blr}b). This stems from the strong sequential correlations of the particles $X_{k}^{1},\dots,X_{k}^N$ in~\eqref{eq:SOUL}. In contrast, the particles in our algorithms are only weakly correlated through the (mean-field) parameter estimates.  
Last, if the parameter estimates are initialized far from $\theta_*$ and the particles are initialized far from $p_{\theta_*}(\cdot|y)$'s mode, then SOUL exhibits a shorter transient than our algorithms (Fig.~\ref{fig:blr}c). This is because SOUL updates a single particle $N$ times per parameter update and quickly locates the current posteriors's mode, while our algorithms are stuck slowly moving $N$ particles, one update per parameter update, to the posteriors's mode.  However, in this example, 
%
 we found little benefit in using multiple particles until the transient phase is over. Low variance estimates of $\theta_*$ are most efficiently obtained using a single particle in the transient phase and switching to PGD or PQN with multiple particles in the stationary phase (App.~\ref{app:blr}); if predictive performance is the sole concern, then any method with a single particle performed well.

As an additional baseline, we run mean-field Gaussian variational inference (MFG VI); c.f.\ App.~\ref{app:blr} for details. MFG VI's parameter estimates converge to the same limit as those of the other algorithms (Fig.~\ref{fig:blr}a,c). The algorithm achieves similar test errors ($3.65\% \pm 0.01\%$) as PGD, PQN, and PMGD but produces narrower posterior approximations  (Fig.~\ref{fig:blr}b).

\subsection{Bayesian neural network}\label{sec:bnn}To test our algorithms on an example with more complex posteriors, we turn to Bayesian neural networks whose posteriors are notoriously multimodal. In particular, we consider the setting of~\citet[Sec.~6.5]{Yao2022} and apply a simple two-layer neural network to classify MNIST images, cf.~App.~\ref{app:bnn} for details.  Similarly to \citet{Yao2022}, we avoid big data issues by  subsampling $1000$ data points with labels $4,9$.  The input layer has $40$ nodes and $784$ inputs, and the output layer has $2$ nodes. The latent variables are the weights, $w\in\r^{40\times 784}$, of the input layer and those, $v\in\r^{2\times 40}$, of the output layer. As in \citet{Yao2022}, we assign zero-mean isotropic Gaussian priors to the weights with respective variances $e^{2\alpha}$ and $e^{2\beta}$. However, rather than assigning hyperpriors to $\alpha$ and $\beta$, we instead learn them from the data (i.e.\ $\theta:=(\alpha, \beta)$). To avoid memory issues, we only store the current particle cloud and use its empirical distribution to approximate the posteriors (rather than the time-averaged version in~\eqref{eq:estimators}). 
\\\\
PGD~(Sec.~\ref{sec:pgd}), PQN~(App.~\ref{app:pqn}), PMGD~(App.~\ref{app:pmgd}), and SOUL~(Sec.~\ref{sec:blr}) all exhibit a short transient in their parameter estimates and predictive performances, after which the estimates  appear to converge to different local maxima of the marginal likelihood (Fig.~\ref{fig:bnn}a) and the performances of PGD, PQN, and PMGD show a slow, moderate increase (Fig.~\ref{fig:bnn}c). SOUL achieves noticeably worse predictive performance (Fig~\ref{fig:bnn}c) and shows little improvement with larger particle numbers $N$ (Tab.~\ref{tab:bnn}). We believe this is due to the peaked SOUL posterior approximations (Fig.~\ref{fig:bnn}b) caused by the strong correlations among the SOUL particles. Just as in Sec.~\ref{sec:blr}, PGD, PQN, and PMGD all run significantly faster than SOUL due to the former three's vectorization, and the gap also widens with $N$ (Tab.~\ref{tab:bnn}).

\subsection{Generator network}\label{sec:gen}

To test our methods on a more challenging example, we turn to generator networks~\citep{Goodfellow2020,Han2017,Nijkamp2020} applied to two image datasets: MNIST and CelebA (both $32\times32$). These are generative models used for a variety of tasks, including image reconstruction and synthesis. They assume that each image $y$ in the dataset is generated by independently sampling a latent variable $x$ from a Gaussian prior, 
mapping $x$ to the image space through a convolutional neural network $f_\theta$ parametrized by $\theta$, and adding Gaussian noise $\epsilon$: $y=f_\theta(x)+\epsilon$. We use $10,000$ training images for MNIST, $40,000$ for CelebA, and a network  with $13$ layers and $D_\theta\approx 350,000$ parameters similar to those in~\citet{Nijkamp2020}. In total, the model involves  $D_x=640,000$ latent variables for  MNIST and $D_x=2,560,000$ for CelebA ($64$ per training image). We train it as in~\citet{Han2017,Nijkamp2020} by searching for parameters $\theta$ that maximize the likelihood of the training set. To do so, we use PGD, slightly tweaked to cope with the problem's high dimensionality and exploding/vanishing gradient issues caused by $f_\theta$'s depth. In particular, we replace the gradients in~(\ref{eq:IW2gradalg11},\ref{eq:IW2gradalg12}) with subsampled versions thereof and adapt the step sizes in~\eqref{eq:IW2gradalg11} similarly as in RMSProp~\citep{Hinton2012}. To benchmark PGD's performance, we also train the model as a variational autoencoder (VAE; i.e.\ using variational approximations to the posteriors rather than particle-based ones,~\citet{Kingma2014}), with alternating back propagation (ABP;~\citet{Han2017}), and with short-run MCMC (SR;~\citet{Nijkamp2020}). The latter two are variants of  (\ref{eq:IW2gradalg11},~\ref{eq:SOUL}) specifically proposed  for training generator networks. They both  approximate the posterior $p_{\theta_{k}}(\cdot|y)$ using only~\eqref{eq:SOUL}'s final state (i.e.\ with $q_{k}:=\delta_{X_k^N}$) and, in the case of SR, the chains are not `persistent' (i.e.\ rather than initializing $X_k^1$ at $X_{k-1}^N$ it is sampled from the prior). For ABP and SR, we also subsample gradients and adapt the step size just as with PGD. See App.~\ref{app:gen} for the full details.

We evaluate the learned generators $f_\theta$ by applying them to inpaint occluded test images and synthesize fake images. In the inpainting task,  the generator learned with PGD outperformed the others for MNIST (Tab.~\ref{tab:gen}, see also Fig.~\ref{fig:inpaint} in App.~\ref{app:gen}). For CelebA, both SR and PGD did well. In the synthesis task,   all methods did poorly when we followed the usual approach of  generating images by drawing latent variables from the prior and mapping them through $f_\theta$ (cf.~Fig.~\ref{fig:syn_prior} in App.~\ref{app:gen}). For the reasons explained in App.~\ref{app:gen}, we instead opted to draw latent variables from a Gaussian approximation to the aggregate posterior~\citep{Aneja2021} which significantly improved the fidelity of the images generated (Fig.~\ref{fig:syn_1} in App.~\ref{app:gen}). With this approach, PGD outperformed the other algorithms, although all four methods performed comparably for CelebA (Tab.~\ref{tab:gen}). Using more refined approximations to the aggregate posterior led to further improvements (Fig.~\ref{fig:syn_500} in App.~\ref{app:gen}). 

\begin{table}[h]  \caption{\textbf{Generator network.} \textbf{(Inpainting)} Mean squared error averaged for $1000$ test images. \textbf{(Synthesis)} Fr\'{e}chet Inception distance~\citep{Heusel2017} computed using $200$ test images. (All results averaged over $3$ replicates).
}\label{tab:gen}
  \centering
  \begin{tabular}{lllll}
    \toprule
    &\multicolumn{2}{c}{Inpainting ($10^{-2}$)}&\multicolumn{2}{c}{Synthesis}\\
    \midrule
    & MNIST   & CelebA & MNIST  & CelebA \\
    \midrule
    PGD          & $\pmb{4.1} \pm 0.3$ & $\pmb{2.0} \pm 0.0$ & $\pmb{71} \pm 2.4$ & $\pmb{100} \pm 2.7$  \\
    ABP          & $5.2 \pm 0.1$ & $2.9 \pm 0.1$ & $92 \pm 3.0$ & $106 \pm 1.3$  \\
    SR           & $7.4 \pm 0.3$ & $\pmb{2.0} \pm 0.0$ & $95 \pm 1.5$ & $102 \pm 2.3$  \\
    VAE          & $10 \pm 0.7$ & $3.3 \pm 0.1$ & $148 \pm 9.3$ & $104 \pm 0.4$  \\
    \bottomrule
  \end{tabular}
\end{table}

\section{DISCUSSION}\label{sec:discussion}
In contrast to EM and its many variants, we view maximum likelihood estimation of latent variable models as a joint problem over $\theta$ and $q$ rather than an alternating-coordinate-wise one, and thereby open the door to numerous new algorithms for solving the problem (be they, for instance, optimization-inspired ones, along the lines of those in Sec.~\ref{sec:pgd} and Apps.~\ref{app:pqn},~\ref{app:pmgd}, or purely Monte-Carlo ones of the type in App.~\ref{app:MH}). This perspective, of course, is not entirely unprecedented: even in p.6 of our starting point~\citep{Neal1998}, the authors mention in passing the possibility of optimizing $F$ `simultaneously' over $\theta$ and $q$, and this idea has been taken up enthusiastically in the VI literature, e.g.~\citet[Sec.~2]{Kingma2019}. However, outside of variational inference, we have struggled to locate papers following up on the idea.

We propose three particle-based algorithms for maximum likelihood training of latent variable models: PGD (Sec.~\ref{sec:pgd}), PQN (App.~\ref{app:pqn}), and PMGD (App.~\ref{app:pmgd}).  Practically, we find these algorithms appealing because they are simple to implement and tune, apply to broad classes of models (i.e.\ those on Euclidean spaces with differentiable densities), and, above all, are scalable. For instance, as discussed in Sec.~\ref{sec:pgd}, PGD's total cost is $\cal{O}(KN[\text{eval. cost of }(\nabla_\theta \ell,\nabla_x \ell)])$ which, for many models in the literature, is linear in the dimensions of the data, latent variables, and parameters.  For big data scenarios where this still proves prohibitive, we advise replacing $\ell$'s derivatives with unbiased estimates thereof as we did for the generator network (Sec.~\ref{sec:gen}; see also~\citet{Robbins1951,Welling2011,Nemeth2021}). Lastly, much like in~\citet{Debortoli2021}, we circumvent the degeneracy with latent variable dimension that plagues common MCMC methods (e.g.\ see~\citet{Beskos2013,Vogrinc2022,Kuntz2018,Kuntz2019} and references therein) by avoiding accept-reject steps and employing ULA kernels (known to have favourable properties; cf.~\cite{Dalalyan2017,Durmus2017,Durmus2019}).

Theoretically, we find PGD, PQN, and PMGD attractive because they re-use the previously computed posterior approximation at each update step, and `warm-starts' along these lines are known to be beneficial for methods reliant on the ULA kernel~\citep{Dalalyan2017,Durmus2017,Durmus2019}. This stands in contrast with previous Monte Carlo EM alternatives~(cf.\ Sec.~\ref{sec:intro}) which, at best, initialize the chain for the parameter current update at the final state of the preceding update's chain. This results in our methods achieving better performance for models with complex multimodal posteriors (Sec.~\ref{sec:bnn}). It proved a disadvantage for models with simple peaked unimodal posteriors where piecemeal evolving an entire particle cloud leads to long transients for poor initializations (Sec.~\ref{sec:blr}).  However, this issue was easily mitigated by warm-starting our algorithms using a preliminary single-particle run (App.~\ref{app:blr}).

We see several interesting lines of future work including (a) the theoretical analysis of the algorithms proposed in this paper, (b) the study of variants thereof, and (c) the investigation of other particle-based methods obtained by viewing the EM problem `jointly over $\theta$ and $q$' rather than in a coordinate-wise manner. For (a), we believe that \cite{Dalalyan2017,Durmus2017,Durmus2019,Debortoli2021} might be good jumping-off points. Aside from the variants discussed in Sec.~2, for (b), we have in mind adapting step sizes and particle numbers as the algorithms run: it seems natural to use cruder posterior approximations and larger step sizes early on in $F$'s optimization, cf.\ \cite{Wei1990,Gu1998,Delyon1999,Younes1999,Kuhn2004,Cai2010,Debortoli2021,Robbins1951} for similar ideas. In particular, by decreasing the step size $h$ and increasing the particle number $N$ with the step number $k$, it is likely possible to eliminate the asymptotic bias (Sec.~\ref{sec:pgd}). For (c), this might amount to switching the geometry on $\Theta\times \cal{P}(\cal{X})$ w.r.t.\ which we define gradients and following a discretization procedure analogous to that in Sec.~\ref{sec:pgd}. For instance, using a Stein geometry leads to a generalization of SVGD~\citep{Liu2016} which makes more extensive use of the particle cloud at the price of a higher computational cost. Alternatively, one could search for analogues of other well-known optimization algorithms applied to $F$ aside from gradient descent (e.g.\ ones for Nesterov acceleration and mirror descent along the lines of~\citet{Ma2019,Cheng2018a,Taghvaei2019,Wang2022} and \citet{Ahn2021,Jiang2021,Hsieh2018,Chewi2020,Zhang2020}, resp.) or a Metropolis-Hastings method of the type in App.~\ref{app:MH}.

\textbf{Limitations.} Our algorithms, like EM and most alternatives thereto (but not all, e.g.\ \cite{Doucet02marginal,Johansen08particle}), only return stationary points of the marginal likelihood and not necessarily global optima. Moreover, at least as presented here, our algorithms are limited to Euclidean parameter and latent spaces and models with differentiable densities. This said, they apply almost unchanged were the spaces to be Riemannian manifolds (e.g.\ see \cite{Boumal2022}). For discrete spaces, it might be possible to adapt the techniques in~\cite{Zhang2022,Grathwohl2021,Sun2022}. Lastly, some common non-differentiabilities can be dealt with by incorporating proximal operators into our algorithms along the lines of~\cite{Parikh2014,Pereyra2016,Durmus2019a,Durmus2018,Bernton2018,Vidal2020,Debortoli2020,Salim2020,Salim2020a}.

\subsubsection*{Acknowledgements}
We thank Valentin De Bortoli, Arnaud Doucet, and Jordan Ang for insightful discussions. We also thank the anonymous referees for their helpful comments. JK and AMJ acknowledge support from the Engineering and Physical Sciences Research Council (EPSRC; grant \# EP/T004134/1) and the Lloyd's Register Foundation Programme on Data-Centric Engineering at the Alan Turing Institute. AMJ acknowledges further support from the EPSRC (grant \#  EP/R034710/1). JNL is supported by the Feuer International Scholarship in Artificial Intelligence.


\bibliographystyle{plainnat}
\bibliography{../MMLE}

\onecolumn\appendix
\aistatstitle{Particle algorithms for maximum likelihood training of latent variable models: \\
Supplementary Materials}

\renewcommand\ptctitle{}
\setcounter{parttocdepth}{1}
\part{} 
\parttoc 

\section{AN INFORMAL CRASH COURSE IN CALCULUS ON $\Theta\times \cal{P}(\cal{X})$}\label{app:crash}
This appendix assumes that the reader is familiar with rudimentary Riemannian geometry not exceeding the level of \citet[Chap.~3]{Boumal2022}.

Otto et al.'s observation~\citep{Otto1998,Otto2001} was that, even though $\cal{P}(\cal{X})$ is not technically a Riemannian manifold, we can often treat it as one and apply the rules we have for calculus on Riemannian manifolds almost unchanged. While rigorously establishing these facts is an involved matter~\citep{Ambrosio2005,Villani2009}, the basic ideas are very accessible. Here we review these ideas, but in the slightly generalized setting of $\cal{M}:=\Theta\times \cal{P}(\cal{X})$. To treat $\cal{M}$ as a Riemannian manifold we require three things:
\begin{itemize}
\item for each $(\theta,q)$ in $\cal{M}$, a tangent space $\cal{T}_{(\theta,q)}\cal{M}$: a linear space containing the directions we can move in from $(\theta,q)$;
\item for each $(\theta,q)$ in $\cal{M}$, a cotangent space $\cal{T}_{(\theta,q)}^*\cal{M}$ dual to $\cal{T}_{(\theta,q)}\cal{M}$ with a duality pairing 
\[\iprod{\cdot}{\cdot}_{(\theta,q)}:\cal{T}_{(\theta,q)}\cal{M}\times \cal{T}_{(\theta,q)}^*\cal{M}\to\r;\]
\item and a Riemannian metric $g=(g_{(\theta,q)})_{(\theta,q)\in\cal{M}}$, with $g_{(\theta,q)}$ denoting an inner product on $\cal{T}_{(\theta,q)}\cal{M}$ for each $(\theta,q)$ in $\cal{M}$.
\end{itemize}
Once we have chosen the above, defining a sensible notion for the gradient of a functional on $\cal{M}$ will be a simple matter. 

\noindent\textbf{An abuse of notation.} The tangent spaces $(\cal{T}_{(\theta,q)}\cal{M})_{(\theta,q)\in\cal{M}}$ that we use will be copies of a single space $\cal{T}\cal{M}$ (and, in particular, independent of $(\theta,q)$). Hence, we drop the $(\theta,q)$ subscripts to simplify the notation. Similarly for the cotangent spaces and duality pairings.
\subsection{Tangent and cotangent  spaces}\label{app:tancon}
$\cal{M}$ is defined as the product of $\Theta$ and $\cal{P}(\cal{X})$, so we find sensible tangent spaces, $\cal{T}\Theta$ and $\cal{T}\cal{P}(\cal{X})$, for these two and set that for $\cal{M}$ to be their product:
\[\cal{T}\cal{M}=\cal{T}\Theta\times\cal{T}\cal{P}(\cal{X}).\]
The cotangent spaces then obey an analogous relationship,
\[\cal{T}^*\cal{M}=\cal{T}^*\Theta\times\cal{T}^*\cal{P}(\cal{X}),\]
and we can express the duality pairing for $(\cal{T}\cal{M},\cal{T}^*\cal{M})$ in terms of those for $(\cal{T}\Theta,\cal{T}^*\Theta)$ and $(\cal{T}\cal{P}(\cal{X}),\cal{T}^*\cal{P}(\cal{X}))$:
\[
\iprod{(\tau,m)}{(v,f)}=\iprod{\tau}{v}+\iprod{m}{f}  \quad\forall (\tau,m)\in\cal{T}\cal{M},\enskip (v,f)\in\cal{T}^*\cal{M}.
\]
\paragraph{Tangent and cotangent spaces for $\Theta$.}Throughout the paper we focus on Euclidean parameter spaces ($\Theta=\r^{D_\theta}$), in which case the tangent spaces are just copies of the parameter space: $\cal{T}\Theta=\r^{D_\theta}$. The cotangent spaces are also copies of $\r^{D_\theta}$ and the duality pairing is the Euclidean inner product:
\[\iprod{\tau}{v}:=\sum_{i=1}^{D_\theta}\tau_iv_i\quad\forall \tau\in\cal{T}\Theta,\enskip v\in\cal{T}^*\Theta.\]%
The above said, modulo the re-insertion of $\theta$ subscripts, the ensuing discussion would apply unchanged were $\Theta$ to be any sufficiently-differentiable finite-dimensional Riemannian manifold.
\\\\
\noindent\textbf{Tangent and cotangent spaces for $\cal{P}(\cal{X})$.}  To keep the exposition simple, we restrict $\cal{P}(\cal{X})$ to the set of probability measures with strictly positive densities w.r.t.\ to the Lebesgue measure $dx$ and identify a measure with its density. (Circumventing this restriction and giving a fully rigorous treatment of our results requires employing the techniques of \cite{Ambrosio2005}.) With this restriction, the tangent spaces are simple and do not depend on $q$:
\[\cal{T}\cal{P}(\cal{X}):=\left\{\text{functions }m:\cal{X}\to\r\text{ satisfying }\int m(x)dx=0\right\}.\]
The cotangent spaces can be identified with the space of equivalence classes of functions that differ by an additive constant,
\[\cal{T}^*\cal{P}(\cal{X}):=\{f:\rn\to\r\}/\r;\]
and the duality pairing is given by
\[\iprod{m}{f}:=\int f(x)m(x)dx\quad\forall m\in \cal{T}\cal{P}(\cal{X}),\enskip f\in \cal{T}^*\cal{P}(\cal{X}).\]
Note that, in the above and throughout, we commit the usual notational abuse using $f$ to denote both a function and the equivalence class to which it belongs. We also tacitly assume that the measurability and integrability conditions required for our integrals to make sense are satisfied.
\subsection{Riemannian metrics}\label{app:metrics}
We define each metric $g=(g_{(\theta,q)})_{(\theta,q)\in\cal{M}}$ in terms of a \emph{tensor} $G=(G_{(\theta,q)})_{(\theta,q)\in\cal{M}}$  and the duality pairing:
\[g_{(\theta,q)}((\tau,m),(\tau',m')):=\iprod{(\tau,m)}{G_{(\theta,q)}(\tau',m')}\quad\forall (\theta,q)\in\cal{M}.\]
By a tensor $G$ we mean a  collection indexed by $(\theta,q)$ in $\cal{M}$ of invertible, self-adjoint, positive-definite, linear maps from $\cal{T}\cal{M}$ to $\cal{T}^*\cal{M}$.  Most of the tensors $G_{(\theta,q)}$ we will  consider are `block-diagonal':
\[\iprod{(\tau,m)}{G_{(\theta,q)}(\tau',m')}=\iprod{\tau}{\rm{G}_{(\theta,q)}\tau'}+\iprod{m}{\sf{G}_{(\theta,q)}m'}=:\rm{g}_{(\theta,q)}(\tau,\tau')+\sf{g}_{(\theta,q)}(m,m'),\]
where $\rm{G}_{(\theta,q)}$ and $\sf{G}_{(\theta,q)}$ respectively denote tensors on $\Theta$ and $\cal{P}(\cal{X})$. In this case, we write diag$(\rm{G}_{(\theta,q)},\sf{G}_{(\theta,q)})$ for $G_{(\theta,q)}$ If any of the above do not depend on $\theta$, we omit it from the subscript, and similarly for $q$.

Although there are many options that one could consider for the  $\sf{G}_{(\theta,q)}$ block (e.g.\ see~\cite{Duncan2019,Garbuno-Inigo2020,Lu2019}), we focus on two, the first for practical reasons and the second for theoretical ones:
\begin{description}

\item[Wasserstein-2.] The tensor $\sf{G}_q^W$ is defined by its inverse  
\begin{equation}\label{eq:W2inv}(\sf{G}_q^W)^{-1}f:=-\nabla_x\cdot(q\nabla_x f).\end{equation}
Using integration-by-parts, we find that
\[\sf{g}_q^W(m,m')=\int \iprod{\nabla_x f(x)}{\nabla_x f'(x)}q(x)dx\quad\forall q\in\cal{P}(\cal{X}),\]
where $f,f'$ are the unique (up to an additive constant) solutions to  $m=(\sf{G}_q^W)^{-1}f$ and $m'=(\sf{G}_q^W)^{-1}f'$ and $\iprod{\cdot}{\cdot}$ denotes the Euclidean inner product on $\r^{D_x}$. The tensor's name stems from the fact that the distance metric induced by $\sf{g}_q^W$ on $\cal{P}(\cal{X})$ coincides with the Wasserstein-2 distance from optimal transport, e.g.\ see~\citet[p.\ 168]{Ambrosio2005}.
\item[Fisher-Rao.] The tensor is $\sf{G}_q^{FR}m:=m/q$ and has inverse 
\begin{equation}\label{eq:FRinv}((\sf{G}_q^{FR})^{-1}f)(x)=q(x)\left[f(x)-\int f(x) q(x)dx\right].\end{equation}
Hence,
\[\sf{g}_q^{FR}(m,m'):=\int \frac{m(x)}{q(x)}\frac{m'(x)}{q(x)}q(x)dx\quad\forall q\in\cal{P}(\cal{X}).\]
Its name stems from the fact that the usual Fisher-Rao metric on a parameter space $\Phi\subseteq\r^{D_\phi}$ indexing a parametric family   $(q_\phi)_{\phi\in\Phi}$ is obtained by   pulling $\sf{g}_q^{FR}$ back through $\phi\mapsto q_\phi$: 
\begin{align*}
\sf{g}_\phi^{pullback}(\beta,\beta')&=\sf{g}_{q_\phi}^{FR}(\iprod{\beta}{\nabla_\phi q_\phi},\iprod{\beta'}{\nabla_\phi q_\phi})=\int \frac{\iprod{\beta}{\nabla_\phi q_\phi(x)}}{q_\phi(x)}\frac{\iprod{\beta'}{\nabla_\phi q_\phi(x)}}{q_\phi(x)}q_\phi(x)dx\\
&=\int\iprod{\beta}{\nabla_\phi \log(q_\phi(x))}\iprod{\beta'}{\nabla_\phi \log(q_\phi(x))}q_\phi(x)dx=\iprod{\beta}{\cal{I}^\phi\beta'},\end{align*}
where $\cal{I}^\phi$ denotes the Fisher information matrix, i.e.\
\[\cal{I}^\phi:=\left(\int\frac{\partial \log(q_\phi)}{\partial\phi_i}(x)\frac{\partial \log(q_\phi)}{\partial\phi_j}(x)q_\phi(x)dx\right)_{ij=1}^{D_\phi}.\]
\end{description}

\subsection{Gradients}
Given a Riemannian metric $g$ on $\cal{M}$, the   gradient  of a functional $E$ on $\cal{M}$ is  defined as the unique  vector field  $\nabla^g E:\cal{M}\to\cal{T}\cal{M}$ satisfying
\begin{equation}\label{eq:gradientdef}g_{(\theta,q)}(\nabla^g E(\theta,q),(\tau,m))=\lim_{t\to0}\frac{E(\theta+t\tau,q+tm)-E(\theta,q)}{t}\quad\forall (\tau,m)\in\cal{T}\cal{M},\enskip (\theta,q)\in\cal{M}.\end{equation}
%
The following identity often simplifies gradient calculations:
\begin{equation}\label{eq:gradref}\nabla^gE(\theta,q)=G_{\theta,q}^{-1}\delta E(\theta,q)\quad\forall  (\theta,q)\in\cal{M},\end{equation}
where $\delta E:\cal{M}\to \cal{T}^*\cal{M}$ denotes $E$'s \emph{first variation}\footnote{Here lies the reason why we use the extra machinery of cotangent vectors, duality pairings, etc. Ideally, we would like to define $\delta E$ for a functional $E$ on $\cal{P}(\cal{X})$ to be the gradient w.r.t.\ the `flat $L^2$' metric on $\cal{P}(\cal{X})$: $\sf{g}^{L^2}_q(m,m'):=\int m(x)m'(x)dx$. (This is precisely what we do in Euclidean spaces, only w.r.t.\ the Euclidean metric.) However, doing so would require replacing $\delta E(q)$ with $\delta E(q)-\int \delta E(q) dx$ so that it lies in the tangent space (all tangent vectors must have zero mass). But the integral $\int \delta E(q) dx$ will not be well-defined in most cases and we hit a wall.}: the unique cotangent vector field  satisfying 
\begin{equation}\label{eq:ney8agneywa}\iprod{(\tau,m)}{\delta E(\theta,q)}=\lim_{t\to0}\frac{E(\theta+t\tau,q+tm)-E(\theta,q)}{t}\quad\forall (\tau,m)\in\cal{T}\cal{M},\enskip (\theta,q)\in\cal{M}.\end{equation}
%
In turn, $\delta E$'s computation can be simplified using $\delta E=(\delta_\theta E,\delta_q E)$, where $\delta_{\theta}$ and $\delta_{q}$ denote the first variations on $\Theta$ and $\cal{P}(\cal{X})$ (defined analogously to~\eqref{eq:ney8agneywa} but for the maps $\theta\mapsto E(\theta,q)$ and $q\mapsto E(\theta,q)$, respectively).
\begin{lemma}\label{lem:Ffirstvar}In the case of the free energy, $F$ in~\eqref{eq:vfe}, $\delta F(\theta,q)=(\delta_\theta F(\theta,q),\delta_q F(\theta,q))$ where 
\[\delta_{\theta}F(\theta,q)=-\int \nabla_\theta \ell(\theta,x)q(x)dx,\quad \delta_{q}F(\theta,q) = \log\left(\frac{q}{p_\theta(\cdot,y)}\right),\quad\forall (\theta,q)\in\cal{M}.\]
\end{lemma}

\begin{proof}We need to show that, for any given $(\theta,q)$ in $\cal{M}$,
\begin{align}
 F(\theta+t\tau,q)&=F(\theta,q)-t\iprod{\tau}{\int \nabla_\theta \ell(\theta,x)q(x)dx}+o(t)\quad\forall  \tau\in\cal{T}\Theta,\label{eq:Ffvart}\\
F(\theta,q+tm)&=F(\theta,q)+t\iprod{m}{\log\left(\frac{q}{p_\theta(\cdot,y)}\right)}+o(t)\quad\forall  m\in\cal{T}\cal{P}(\cal{X}).\label{eq:Ffvarq}
\end{align}
We begin with \eqref{eq:Ffvart}: $\ell(\theta+t\tau,x)=\ell(\theta,x)+t\iprod{\tau}{\nabla_\theta \ell(\theta,x)}+o(t)$ and, so\footnote{\label{foot:ot}The $o(t)$ term in $\ell(\theta+t\tau,x)$'s expansion depends on $x$. Hence, to rigorously derive the ensuing expansion for $F(\theta,t+\tau,q)$, we require conditions on $\ell$ and/or $q$ guaranteeing that $\int o(t,x)q(x)dx=o(t)$. We abstain from stating such conditions to not complicate the exposition.},
\begin{align*}
F(\theta+t\tau,q)=&F(\theta,q) + \int \ell(\theta,x)q(x)dx - \int \ell(\theta+t\tau,x)q(x)dx\\
=&F(\theta,q)-\int\left[t\iprod{\tau}{ \nabla_\theta \ell(\theta,x)}+o(t)\right]q(x)dx\\
=&F(\theta,q)-t\iprod{\tau}{\int \nabla_\theta \ell(\theta,x)q(x)dx}+o(t).
\end{align*}
For~\eqref{eq:Ffvarq} instead note that $\log(z+t)(z+t)=\log(z)z+[\log(z)+1]t+o(t)$, whence
\begin{align*}
F(\theta,q+tm)=&\int \log(q(x)+tm(x))(q(x)+tm(x))dx-\int \ell(\theta,x)(q(x)+tm(x))dx\\
=&\int \left[\log(q(x))q(x)+[\log(q(x))+1]tm(x)+o(t)\right]dx\\
&-\int \ell(\theta,x)q(x)dx - t\int \log(p_\theta(x,y))m(x)dx \\
=&F(\theta,q)+ t\int [\log(q(x))-\log(p_\theta(x,y))]m(x)dx+t\int m(x)dx+o(t);
\end{align*}
and \eqref{eq:Ffvarq} follows because $\int m(x)dx=0$ given that $m$ belongs to $\cal{T}\cal{P}(\cal{X})$ (cf.\ App.~\ref{app:tancon}).
\end{proof}
For metrics $g$ with a block-diagonal tensor diag$(\rm{G}_{(\theta,q)},\sf{G}_{(\theta,q)})$, we have one final simplification:
\begin{equation}\label{eq:gradrefblock}\nabla ^gE(\theta,q)=(\rm{G}_{\theta,q}^{-1}\delta_{\theta}E(\theta,q),\sf{G}_{\theta,q}^{-1}\delta_{q}E(\theta,q))\quad\forall (\theta,q)\in\cal{M}.\end{equation}
\paragraph{The direction of maximum descent.}To gain some intuition regarding what we actually do by `taking a step in the direction of $-\nabla^gE(\theta,q)$', note that, for sufficiently regular functionals $E$,
\begin{equation}\label{eq:ETaylor1}E(\theta+\tau,q+m)\approx E(\theta,q)+\iprod{(\tau,m)}{\delta E(\theta,q)}=E(\theta,q)+g_{(\theta,q)}((\tau,m),\nabla^g E(\theta,q))\end{equation}
for any given point given point $(\theta,q)$ in $\cal{M}$ and `small' tangent vectors $(\tau,m)$ in $\cal{T}\cal{M}$, with the equality holding exactly in the limit as ``$(\tau,m)$'s size tends to zero''. To quantify ``$(\tau,m)$'s size'', we use the norm on $\cal{T}\cal{M}$ induced by our metric:
\[\norm{(\tau,m)}^g_{(\theta,q)}:=g_{(\theta,q)}((\tau,m),(\tau,m)).\]
Armed with the above, we can then ask `out of all tangent vectors $(\tau,m)$ of size $\varepsilon$, which lead to the greatest decrease in $E$ at $(\theta,q)$?'. That is, which $(\tau^*,m^*)$ solve
\[\min_{\norm{(\tau,m)}^g_{(\theta,q)}=\varepsilon} E(\theta+\tau,q+m)?\] 
Were we to swap $E(\theta+\tau,q+m)$ in the above with its approximation in~\eqref{eq:ETaylor1}, the Cauchy-Schwarz inequality would then tell us that $(\tau^*,m^*)$ equals the (appropriately rescaled) gradient $-\varepsilon \nabla^g E(\theta,q)/\norm{\nabla^g E(\theta,q)}^g_{(\theta,q)}$:
\[\argmin_{\norm{(\tau,m)}^g_{(\theta,q)}=\varepsilon} E(\theta+\tau,q+m)\approx\argmin_{\norm{(\tau,m)}^g_{(\theta,q)}=\varepsilon}g_{(\theta,q)}((\tau,m),\nabla^g E(\theta,q))=\frac{\varepsilon\nabla^g E(\theta,q)}{\norm{\nabla^g E(\theta,q)}^g_{(\theta,q)}}.\]
Assuming that the above equality holds exactly as $\varepsilon\to0$, we find that  $\nabla^g E(\theta,q)$ points in the direction of steepest descent for $E$ at $(\theta,q)$ in the geometry defined by $g$ (that is, using the norm induced by $g$ to measure the length of vectors).

\paragraph{Minimizing quadratic functionals.}We are now faced with the question `which geometry or metric  $g$ should we use to define gradients?'. While in practice this question often gets usurped by the more pragmatic `which geometries lead to gradient flows that can be efficiently approximated?', considering which geometries are most attractive, even if only in a theoretical sense, still proves insightful. A straightforward way to approach this question is noting that \eqref{eq:gradref} implies that $\nabla^gE(\theta,q)$ solves 
\[G_{\theta,q}(\tau,m)=\delta E(\theta,q).\]
It follows that $(\theta,q)-\nabla_{(\theta,q)}^gE(\theta,q)$ minimizes a quadratic approximation to $E$ around $(\theta,q)$:
\begin{equation}\label{eq:quadratic}\nabla^gE(\theta,q)=\argmin_{(\tau,m)\in\cal{T}\cal{M}}\left\{E(\theta,q)+\iprod{(\tau,m)}{\delta E(\theta,q)} +\frac{1}{2}\iprod{G_{\theta,q}(\tau,m)}{(\tau,m)}\right\}.\end{equation}
From this vantage point, it seems natural to pick $G$ so that the objective in~\eqref{eq:quadratic} closely approximates $E$ around $(\theta,q)$.  
We revisit this point for the free energy $F$ in App.~\ref{app:pqn}.
\section{PROOF OF THEOREM~\ref{thrm:expconv} AND FURTHER THEORETICAL DETAILS FOR SEC.~\ref{sec:pgd}}\label{app:stateqthrmproof}

\subsection{(\ref{eq:Fgradient1},\ref{eq:Fgradient2}) as a gradient}\label{app:grad}
Here, we use the geometry on $\cal{M}=\Theta\times\cal{P}(\cal{X})$ which leads to the gradient flow with the cheapest and most straightforward approximations that we know of (e.g.\ compare with the geometries in~\cite{Liu2016,Garbuno-Inigo2020}, analogously extended from $\cal{P}(\cal{X})$ to $\cal{M}$): the one obtained as the product of the Euclidean geometry on $\Theta$ and the Wasserstein-2 geometry on $\cal{P}(\cal{X})$. More formally, the geometry induced  by the metric with block-diagonal tensor diag$(\rm{I}_{D_\theta},\sf{G}_{q}^W)$ (cf.\ App.~\ref{app:metrics}), where $\rm{I}_{D_\theta}$ denotes the identity operator on $\cal{T}\Theta$ (i.e.~$\rm{I}_{D_\theta}\tau=\tau$ for all $\tau$ in $\cal{T}\Theta$) and $\sf{G}_{q}^W$ the Wasserstein-$2$ tensor on $\cal{P}(\cal{X})$ in~\eqref{eq:W2inv}. Combining Lem.~\ref{lem:Ffirstvar} and (\ref{eq:W2inv},\ref{eq:gradrefblock}) we find that $F$'s gradient is given by~(\ref{eq:Fgradient1},\ref{eq:Fgradient2}), and its corresponding gradient flow by~(\ref{eq:IW2Gradflow1},\ref{eq:IW2Gradflow2}).


\subsection{On the convergence of the gradient flow}\label{app:conv}

As we will show below, if $(\theta_t,q_t)_{t\geq0}$ satisfies~(\ref{eq:IW2Gradflow1},\ref{eq:IW2Gradflow2}), then
\begin{equation}
\label{eq:It}I_t:=-\frac{d F(\theta_t,q_t)}{dt}=||\dot{\theta}_t||^2+\int\norm{\nabla_x R_t(x)}^2q_t(x)dx\geq0,
\end{equation}
where  $R_t(x):=\log(p_{\theta_t}(x,y)/q_t(x))$ and  $\norm{\cdot}$ denotes the  Euclidean norm on $\r^{D_\theta}$ or $\r^{D_x}$, as appropriate. In other words, the free energy is non-increasing along~(\ref{eq:IW2Gradflow1},\ref{eq:IW2Gradflow2})'s solutions: $F(\theta_t,q_t)\leq F(\theta_0,q_0)$ for all $t\geq0$. Moreover, because $q\mapsto F(\theta,q)$ is minimized at $p_\theta(\cdot|y)$ (Thrm.~\ref{thrm:vfe}),
\[\log(p_{\theta_t}(y))=\int\log\left(\frac{p_{\theta_t}(x,y)}{p_{\theta_t}(x|y)}\right)p_{\theta_t}(x|y)dx=-F(\theta_t,p_{\theta_t}(\cdot|y))\geq -F(\theta_t,q_t)\geq -F(\theta_0,q_0)\quad\forall t\geq0;\]
and it follows from Assumpt.~\ref{ass:nodiv} that $\{\theta_t\}_{t\geq0}$ is relatively compact. Hence, an extension of LaSalle's principle along the lines of~\cite{Carrillo2020} should imply that,  as $t$ tends to infinity,  $(\theta_t,q_t)$ approaches the set of points that make~\eqref{eq:It}'s RHS vanish. But we can re-write the RHS as
\[g((\dot{\theta}_t,\dot{q}_t),(\dot{\theta}_t,\dot{q}_t))=g(\nabla F({\theta}_t,q_t),\nabla F({\theta}_t,q_t))\leq0,\]
where $g$ denotes the metric described in App.~\ref{app:grad} and $\nabla$ the corresponding gradient (whose components are given by~(\ref{eq:Fgradient1},\ref{eq:Fgradient2})). In other words,  $(\theta_t,q_t)$ approaches the set of pairs that make $F$'s gradient vanish. Thrm.~\ref{thrm:statpoints} tells us that these pairs  $(\theta_*,q_*)$ are precisely those for which $\theta_*$ is a stationary point of the marginal likelihood and $q_*$ is its corresponding posterior $p_{\theta_*}(\cdot|y)$.

\begin{proof}[Proof of~\eqref{eq:It}]Using the chain rule and integration by parts, we find that
\begin{align*}
\frac{d}{dt}\int \ell(\theta_t,x) q_t(x)dx&=\int \frac{d\ell(\theta_t,x)}{dt} q_t(x)dx+\int \ell(\theta_t,x) \dot{q}_t(x)dx\\
&=\int \iprod{\nabla_\theta \ell(\theta_t,x)}{\dot{\theta}_t} q_t(x)dx-\int \ell(\theta_t,x) \nabla_x\cdot \left[q_t(x)\nabla_x R_t(x)\right]dx\\
&=||\dot{\theta}_t||^2+\int \iprod{\nabla_x \ell(\theta_t,x)}{\nabla_x R_t(x)}q_t(x)dx,
\end{align*}
and 
\begin{align*}
\frac{d}{dt}\int \log(q_t(x))q_t(x)dx=&\int [\log(q_t(x))+1]\dot{q}_t(x)dx=-\int [\log(q_t(x))+1]\nabla_x\cdot \left[q_t(x)\nabla_x R_t(x)\right]dx\\
=&\int \iprod{\nabla_x \log(q_t(x))}{\nabla_x R_t(x)}q_t(x)dx,
\end{align*}
where $\iprod{\cdot}{\cdot}$ denotes the  Euclidean inner product on $\r^{D_\theta}$ or $\r^{D_x}$, as appropriate. Re-arranging, we obtain~\eqref{eq:It}.
\end{proof}

\subsection{Proof of Theorem~\ref{thrm:expconv}}\label{app:expoproof}

For models with sufficiently regular strongly log-concave densities, it is straightforward to give a more complete argument for $(\theta_t,q_t)_{t\geq0}$'s convergence than that in App.~\ref{app:conv}. In these cases, the marginal likelihood has a unique maximizer:
\begin{theorem}\label{thrm:strilogconc}Suppose that $(\theta,x)\mapsto \ell(\theta,x)$ is twice continuously differentiable. Moreover, that $\ell$ is strictly concave or, in other words, that its Hessian negative definite everywhere:
\begin{equation}\label{eq:Llc}\nabla^2\ell(\theta,x)=\begin{bmatrix}\nabla_\theta^2 \ell(\theta,x)&\nabla_\theta\nabla_x \ell(\theta,x)\\\nabla_x\nabla_\theta \ell(\theta,x)&\nabla_x^2 \ell(\theta,x)\end{bmatrix}\prec 0\quad\forall \theta\in\Theta,\enskip x\in\cal{X}.\end{equation}
Then, the marginal likelihood $\theta\mapsto p_\theta(y)$ has a unique maximizer and no other stationary point.
\end{theorem}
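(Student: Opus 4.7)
The plan is to show that $\theta\mapsto \log p_\theta(y)$ is strictly concave; combined with Assumption~\ref{ass:nodiv} (bounded super-level sets), strict concavity gives existence and uniqueness of the maximizer and rules out any other stationary point. I would attack the strict concavity directly by computing $\nabla_\theta^2 \log p_\theta(y)$ and showing it is negative definite everywhere.

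The first step is a standard log-derivative computation (essentially the one already used in~\eqref{eq:nsya8bnyfsanj7ufa}): differentiating $\log p_\theta(y) = \log\int e^{\ell(\theta,x)}dx$ twice and swapping integral and derivative yields
\begin{equation*}
\nabla_\theta^2 \log p_\theta(y) = \Ebb{\nabla_\theta^2 \ell(\theta,X)} + \mathrm{Cov}\left[\nabla_\theta \ell(\theta,X)\right],
\end{equation*}
where the expectation and covariance are taken w.r.t.\ $X\sim p_\theta(\cdot|y)$. The covariance term is positive semidefinite and a priori could cancel the first (negative definite) term, so a naive bound does not suffice.

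The key step is to control the covariance via the Brascamp--Lieb inequality applied to the posterior. Assumption~\eqref{eq:Llc} implies in particular that $-\nabla_x^2 \ell(\theta,x)\succ 0$, so $x\mapsto p_\theta(x|y)\propto e^{\ell(\theta,x)}$ is strictly log-concave. Applied coordinate-wise to $x\mapsto v^\top \nabla_\theta \ell(\theta,x)$ for an arbitrary direction $v\in\r^{D_\theta}$, Brascamp--Lieb gives
\begin{equation*}
\mathrm{Var}\left[v^\top \nabla_\theta\ell(\theta,X)\right] \leq \Ebb{v^\top \nabla_\theta\nabla_x \ell(\theta,X)\bigl(-\nabla_x^2 \ell(\theta,X)\bigr)^{-1} \nabla_x\nabla_\theta \ell(\theta,X) v}.
\end{equation*}
Substituting into the Hessian formula,
\begin{equation*}
v^\top \nabla_\theta^2 \log p_\theta(y) v \leq \Ebb{v^\top \bigl(\nabla_\theta^2 \ell + \nabla_\theta\nabla_x \ell\bigl(-\nabla_x^2 \ell\bigr)^{-1}\nabla_x\nabla_\theta \ell\bigr) v}.
\end{equation*}

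The final step is to recognise the integrand as (minus) the Schur complement of $-\nabla_x^2 \ell$ in $-\nabla^2 \ell$. Since~\eqref{eq:Llc} says $-\nabla^2 \ell\succ 0$, the standard Schur complement criterion gives
\begin{equation*}
-\nabla_\theta^2 \ell - \nabla_\theta\nabla_x \ell\bigl(-\nabla_x^2 \ell\bigr)^{-1}\nabla_x\nabla_\theta \ell \succ 0,
\end{equation*}
so the integrand is strictly negative for every $v\neq 0$, yielding $v^\top \nabla_\theta^2 \log p_\theta(y) v<0$. Hence $\log p_\theta(y)$ is strictly concave, and Assumption~\ref{ass:nodiv} then supplies a unique (and only) stationary point, which is the global maximizer. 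The main obstacle is the variance term in step two; everything else is essentially bookkeeping once Brascamp--Lieb and Schur complements are brought in. (An alternative route, which I would mention in passing but not pursue, is to deduce log-concavity of $p_\theta(y)$ directly from Pr\'ekopa--Leindler applied to the jointly log-concave density $e^{\ell(\theta,x)}$ and then upgrade to strict log-concavity using the case of equality; the Brascamp--Lieb route is more direct because it yields strictness quantitatively.)
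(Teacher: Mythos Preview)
Your proposal is correct and follows essentially the same route as the paper: compute $\nabla_\theta^2\log p_\theta(y)$ as the posterior expectation of $\nabla_\theta^2\ell$ plus the posterior covariance of $\nabla_\theta\ell$, bound the covariance via Brascamp--Lieb, and recognise the resulting integrand as the Schur complement of the full Hessian~\eqref{eq:Llc}. The paper adds a brief continuity remark to pass from pointwise negative-definiteness of the integrand to negative-definiteness of the integral, and does not explicitly invoke Assumption~\ref{ass:nodiv} for existence, but otherwise the arguments coincide.
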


\begin{proof}Because $\nabla_\theta p_\theta(y)=p_\theta(y)\nabla_\theta \log(p_\theta(y))$ and $z\mapsto\log(z)$ is a strictly increasing function, it suffices to show that $\theta\mapsto \log(p_\theta(y))$ is strictly concave. To this end, note that
\[
\nabla_\theta^2 \log(p_\theta(y))=\nabla_\theta \frac{\nabla_\theta p_\theta(y)}{p_\theta(y)}= \frac{\nabla_\theta^2 p_\theta(y)}{p_\theta(y)}-\frac{\nabla_\theta p_\theta(y)\otimes \nabla_\theta p_\theta(y)}{p_\theta(y)^2},
\]
where $v\otimes v':=(v_iv_j)_{ij=1}^{D_\theta}$ for any vectors $v,v'\in\r^{D_\theta}$. But,
\begin{align*}
\nabla_\theta p_\theta(y) &= \int \nabla_\theta p_\theta(x,y)dx=\int \nabla_\theta \ell(\theta,x)p_\theta(x,y)dx,\\
\nabla_\theta^2 p_\theta(y) &= \int \nabla_\theta^2 \ell(\theta,x)p_\theta(x,y)dx+\int \nabla_\theta \ell(\theta,x)\otimes \nabla_\theta p_\theta(x,y)dx\\
&= \int \nabla_\theta^2 \ell(\theta,x)p_\theta(x,y)dx+\int \nabla_\theta \ell(\theta,x)\otimes \nabla_\theta \ell(\theta,x)p_\theta(x,y)dx;
\end{align*}
and, so,
\begin{align*}
\nabla_\theta^2 \log(p_\theta(y))&=\int \nabla_\theta^2 \ell(\theta,x)p_\theta(x|y)dx+\Sigma(\theta),
\end{align*}
where
\[\Sigma(\theta):=\int \nabla_\theta \ell(\theta,x)\otimes \nabla_\theta \ell(\theta,x)p_\theta(x|y)dx-\int \nabla_\theta \ell(\theta,x)p_\theta(x|y)dx\otimes \int \nabla_\theta \ell(\theta,x)p_\theta(x|y)dx.\]
By the Brascamp-Lieb concentration inequality~\citep[Thrm.~4.1]{Brascamp1976},
\begin{align*}
\Sigma(\theta) &\preceq  \int \nabla_\theta \nabla_x \ell(\theta,x)[-\nabla_x^2\log (p_\theta(x|y))]^{-1} \nabla_x \nabla_\theta \ell(\theta,x)p_\theta(x|y)dx\\
&	=-\int\nabla_\theta \nabla_x \ell(\theta,x)[\nabla_x^2\ell(\theta,x)]^{-1} \nabla_x \nabla_\theta \ell(\theta,x)p_\theta(x|y)dx.
\end{align*}
In short,
\begin{align*}
\nabla_\theta^2 \log(p_\theta(y))\preceq \int\left[ \nabla_\theta^2 \ell(\theta,x)-\nabla_\theta \nabla_x \ell(\theta,x)[\nabla_x^2\ell(\theta,x)]^{-1} \nabla_x \nabla_\theta \ell(\theta,x)\right]p_\theta(x|y)dx.
\end{align*}
The integrand is the Schur complement of $\ell$'s Hessian and, hence, negative definite for all $(\theta,x)$. Moreover, because $\nabla_x^2\ell$ is negative definite everywhere and $\ell$ is twice-continuously differentiable, the integrand varies continuously in $x$; whence it follows that the integral is negative definite for all $\theta$. In other words, $\theta\mapsto \log(p_\theta(y))$ is strictly concave.
\end{proof}
%

We are now ready to tackle Theorem~\ref{thrm:expconv}'s proof:

\begin{proof}[Proof of Theorem~\ref{thrm:expconv}]As we will show below,
\begin{equation}\label{eq:dIt}
\frac{dI_t}{dt}\leq 2 \lambda  I_t;
\end{equation}
from which it follows that $I_t\leq e^{-2 \lambda t}I_0$. Hence,
\[\int_0^\infty \norm{\dot{\theta}_t}dt\leq \int_0^\infty e^{-\lambda t}\sqrt{I_0}dt=\frac{\sqrt{I_0}}{\lambda}.\]
Thus, $\theta_\infty:=\int_0^\infty \dot{\theta}_tdt$ is well-defined and $\theta_t$ converges to $\theta_\infty$ exponentially fast:
\[\norm{\theta_\infty-\theta_t}\leq \norm{\int_t^\infty \dot{\theta_s}ds}\leq \int_t^\infty\norm{ \dot{\theta_s}}ds\leq \frac{\sqrt{I_0}}{\lambda}e^{-\lambda t}.\]

Next, using the fact that $\log(p_{\theta_t}(x|y)/q_t(x))=R_t-\log(p_{\theta_t}(y))$, we find that
\[\int\norm{\nabla_x R_t(x)}^2q_t(x)dx=\int\norm{\nabla_x\log\left(\frac{p_{\theta_t}(x|y)}{q_t(x)}\right)}^2q_t(x)dx.\]
Because $\nabla^2_x\log(p_{\theta_t}(x|y))=\nabla^2_x \ell(\theta_t,x)\preceq -\lambda I_{D_x}$ for all $t\geq0$, a logarithmic Sobolev inequality and the Csiszár-Kullback-Pinsker inequality, Thrm.~1 and (12) in \cite{Markowich2000} respectively, then imply that
\[\frac{1}{2}\norm{q_t-p_{\theta_t}(\cdot|y)}_{L^1}^2\leq KL(q_t||p_{\theta_t}(\cdot|y))\leq\int\norm{\nabla_x R_t(x)}^2q_t(x)dx\leq e^{-2\lambda t}I_0.\]
As we will show below, the boundedness assumption on $\ell$'s $\theta$-gradient implies that $\theta\mapsto p_\theta(\cdot|y)$ is a Lipschitz map from $(\Theta,\norm{\cdot})$ to $(\cal{P}(\cal{X}),\norm{\cdot}_{L_1})$:
\begin{equation}\label{eq:lipschitz}\norm{p_\theta(\cdot|y)-p_{\theta'}(\cdot|y)}\leq 2C \norm{\theta-\theta'}.
\end{equation}
Applying the triangle inequality we then find that $q_t$ converges exponentially fast to $p_{\theta_\infty}(\cdot|y)$:
\begin{align*}
\norm{q_t-p_{\theta_\infty}(\cdot|y)}_{L^1}&\leq \norm{p_{\theta_t}(\cdot|y)-p_{\theta_\infty}(\cdot|y)}_{L^1}+\norm{q_t-p_{\theta_t}(\cdot|y)}_{L^1}\leq 2C\norm{\theta_t-\theta_\infty}+\sqrt{2 I_0}e^{-\lambda t}\\
&\leq (2C+\sqrt{2})\sqrt{I_0}e^{-\lambda t}.
\end{align*}
Given Thrm.~\ref{thrm:strilogconc}, the only thing we have left to do is argue that the limit $\theta_\infty$ is a stationary point of the marginal likelihood. This follows from \eqref{eq:nsya8bnyfsanj7ufa}, the bounded convergence theorem, and our assumption that $\nabla_\theta \ell$ is bounded:
\begin{align*}
\frac{\nabla_\theta p_{\theta_\infty}(y)}{p_{\theta_\infty}(y)}&=\int\nabla_\theta \ell(\theta_\infty,x)p_{\theta_\infty}(x|y)dx=\lim_{n\to\infty} \int\nabla_\theta \ell(\theta_n,x)p_{\theta_n}(x|y)dx\\
&=\lim_{n\to\infty}\left[\dot{\theta}_n +\int\nabla_\theta \ell(\theta_n,x)[p_{\theta_n}(x|y)-q_n(x)]dx\right]=0.
\end{align*}
\end{proof}

\begin{proof}[Proof of~\eqref{eq:dIt}]Here, we adapt the arguments in \citet[Sec.~5]{Markowich2000} and \citet[Sec.~2.3]{Arnold2001}. Let's start: $(d||\dot{\theta}_t||^2/dt)=2\iprod{\dot{\theta}_t}{\ddot{\theta}_t}$ and, using the notation introduced in~\eqref{eq:Llc},
\begin{align*}
\ddot{\theta}_t&=\frac{d }{dt}\dot{\theta}_t=\frac{d }{dt}\int\nabla_\theta \ell(\theta_t,x)q_t(x)dx=\int \left[\frac{d }{dt}\nabla_\theta \ell(\theta_t,x)\right]q_t(x)dx+\int\nabla_\theta \ell(\theta_t,x)\dot{q}_t(x)dx\\
&=\int \nabla_\theta^2 \ell(\theta_t,x)\dot{\theta}_tq_t(x)dx-\int \nabla_\theta \ell(\theta_t,x)\nabla_x\cdot \left[q_t(x)\nabla_x R_t(x)\right]dx\\
&=\int \nabla_\theta^2 \ell(\theta_t,x)\dot{\theta}_tq_t(x)dx+\int [\nabla_\theta\nabla_x  \ell(\theta_t,x)]\nabla_x R_t(x)q_t(x)dx,
\end{align*}
where the last equality follows from integration by parts. Hence,
\begin{align}
\frac{d}{dt}||\dot{\theta}_t||^2=&2\int\left[\iprod{\dot{\theta}_t}{ \nabla_\theta^2 \ell(\theta_t,x)\dot{\theta}_t}+\iprod{\dot{\theta}_t}{[\nabla_\theta \nabla_x  \ell(\theta_t,x)]\nabla_x R_t(x)}\right]q_t(x)dx.
\end{align}
Similarly, 
\begin{align*}
\frac{d }{dt}\int\norm{R_t(x)}^2q_t(x)dx=&\int\left[\frac{d }{dt}\norm{\nabla_x R_t(x)}^2\right]q_t(x)dx+\int\norm{\nabla_x R_t(x)}^2\dot{q}_t(x)dx.
\end{align*}
But, with $l_t(x):=\log(q_t(x))$,
\begin{align*}
&\frac{d}{dt}\nabla_x R_t(x)=\frac{d}{dt}\nabla_x \ell(\theta_t,x)-\frac{d}{dt}\nabla_xl_t(x)=[\nabla_x\nabla_\theta \ell(\theta_t,x)]\dot{\theta}_t-\nabla_x\frac{d}{dt}l_t(x),\\
\Rightarrow&\frac{d}{dt}\norm{\nabla_x R_t(x)}^2=2\iprod{\nabla_x R_t(x)}{\frac{d}{dt}\nabla_x R_t(x)}\\
&=2\iprod{\nabla_x R_t(x)}{[\nabla_x\nabla_\theta \ell(\theta_t,x)]\dot{\theta}_t}-2\iprod{\nabla_x R_t(x)}{\nabla_x\frac{d}{dt} l_t(x)},\\
\Rightarrow&\int\left[\frac{d }{dt}\norm{\nabla_x R_t(x)}^2\right]q_t(x)dx\\
&=2\int\left[\iprod{\nabla_x R_t(x)}{[\nabla_x\nabla_\theta \ell(\theta_t,x)]\dot{\theta}_t}-\iprod{\nabla_x R_t(x)}{\nabla_x\frac{d}{dt} l_t(x)}\right]q_t(x)dx;
\end{align*}
and
\begin{align*}
&\int\norm{\nabla_x R_t(x)}^2\dot{q}_t(x)dx=\int\iprod{\nabla_x \norm{\nabla_x R_t(x)}^2}{\nabla_x R_t(x)}q_t(x)dx\\
&=2\int\iprod{\nabla_x R_t(x) }{\nabla_x^2 R_t(x)\nabla_x R_t(x)}q_t(x)dx\\
&=2\int\left[\iprod{\nabla_x R_t(x) }{\nabla_x^2 \ell(\theta_t,x)\nabla_x R_t(x)}-\iprod{\nabla_x R_t(x) }{\nabla_x^2  l_t(x)\nabla_x R_t(x)}\right]q_t(x)dx.
\end{align*}
Putting the above together, we find that
\begin{align*}
\frac{dI_t}{dt}=&2\int \iprod{(\dot{\theta}_t,\nabla_x R_t(x))}{\nabla^2 \ell(\theta_t,x) (\dot{\theta}_t,\nabla_x R_t(x))}q_t(x)dx\\
&-2\int\iprod{\nabla_x R_t(x) }{\nabla_x\frac{d}{dt} l_t(x)+\nabla_x^2  l_t(x)\nabla_x R_t(x)}q_t(x)dx\\
\leq&-2\lambda I_t -2\int\iprod{\nabla_x R_t(x) }{\nabla_x\frac{d}{dt} l_t(x)+\nabla_x^2  l_t(x)\nabla_x R_t(x)}q_t(x)dx=:-2\lambda I_t -2A.
\end{align*}
(The inequality follows from our assumption that $(\theta,x)\mapsto p_\theta(x,y)$ is $\lambda$-strongly log-concave.) We now need to show that $A$ is no greater than zero. To this end, note that
\begin{align*}
\frac{d}{dt} l_t(x)=\frac{\dot{q}_t(x)}{q_t(x)}&=-\frac{\nabla_x\cdot [q_t(x)\nabla_x R_t(x)]}{q_t(x)}=-\frac{\iprod{\nabla_x q_t(x)}{\nabla_x R_t(x)}}{q_t(x)}-\Delta_x R_t(x)\\
&=-\iprod{\nabla_x l_t(x)}{\nabla_x R_t(x)}-\Delta_x R_t(x),
\end{align*}
where $\Delta_x$ denotes the Laplacian operator; from which it follows that
\begin{align*}
\nabla_x\frac{d}{dt} l_t(x)=&-\nabla_x^2l_t(x)\nabla_x R_t(x)-\nabla_x^2R_t(x)\nabla_x l_t(x)-\nabla_x\Delta_x R_t(x).
\end{align*}
Bochner's formula tells us that
\[-\iprod{\nabla_x R_t(x)}{\nabla_x\Delta_x R_t(x)}=\textrm{tr}([\nabla^2_x R_t(x)]^T\nabla^2_x R_t(x))-\frac{1}{2}\Delta_x\norm{\nabla_x R_t(x)}^2,\]
where $\textrm{tr}(\cdot)$ denotes the trace operator. But
\begin{align*}
-\frac{1}{2}\int q_t(x)\Delta_x\norm{\nabla_x R_t(x)}^2dx&=\frac{1}{2}\int \iprod{\nabla_x q_t(x)}{\nabla_x\norm{\nabla_x R_t(x)}^2}dx\\
&=\int \iprod{\nabla_x l_t(x)}{\nabla_x^2 R_t(x)\nabla_x R_t(x)}q_t(x)dx.
\end{align*}
Hence,
\begin{align*}
A=\int\textrm{tr}([\nabla^2_x R_t(x)]^T\nabla^2_x R_t(x))q_t(x)dx\geq0.
\end{align*}
\end{proof}
\begin{proof}[Proof of \eqref{eq:lipschitz}]The mean value theorem tells us that, for each $\theta,\theta',x$, there exists a $\psi$ such that
\begin{align*}
\mmag{p_\theta(x|y)-p_{\theta'}(x|y)}=\mmag{\iprod{\theta-\theta'}{\nabla_\theta p_{\psi}(x|y)}}\leq\norm{\theta-\theta'}\norm{\nabla_\theta p_{\psi}(x|y)}.
\end{align*}
We will now show that $\norm{\nabla_\theta p_{\psi}(x|y)}\leq 2C  p_{\psi}(x|y)$, from which the claim will follow:
\begin{align*}\int \mmag{p_\theta(x|y)-p_{\theta'}(x|y)}dx&\leq \norm{\theta-\theta'}\int\norm{\nabla_\theta p_{\psi}(x|y)}dx\leq 2C\norm{\theta-\theta'}.\end{align*}
To obtain $\norm{\nabla_\theta p_{\psi}(x|y)}\leq 2C  p_{\psi}(x|y)$, note that
\begin{align*}
\nabla_\psi p_\psi(x|y)&=\frac{\nabla_\theta p_\psi(x,y)}{p_\psi(y)}-\frac{\nabla_\theta p_\psi(y)}{p_\psi(y)}\frac{p_\psi(x,y)}{p_\psi(y)}\\
&=\left[\nabla_\theta \ell(\psi,x)-\int \nabla_\theta \ell(\psi,x')p_\psi(x'|y)dx'\right]p_\psi(x|y).
\end{align*}
But,
\begin{align*}
\norm{\nabla_\theta \ell(\psi,x)-\int \nabla_\theta \ell(\psi,x')p_\psi(x'|y)}&\leq \norm{\nabla_\theta \ell(\psi,x)}+\norm{\int \nabla_\theta \ell(\psi,x')p_\psi(x'|y)dx'}\\
&\leq C+\int \norm{\nabla_\theta \ell(\psi,x')}p_\psi(x'|y)dx'\leq 2C.
\end{align*}
\end{proof}

\section{PARTICLE QUASI-NEWTON (PQN)}\label{app:pqn}
A variant of PGD (Alg.~\ref{alg:pgd}) that also seems to resolve the ill-conditioning discussed in Sec.~\ref{sec:pgd} and, furthermore, achieves faster convergence is PQN (Alg.~\ref{alg:pqn}). 
\begin{algorithm}[h]
\begin{algorithmic}[1]
\STATE{\textbf{Inputs:} step size $h$, step number $K$, particle number $N$, and initial particles $X^1_0,\dots,X_0^N$ and  parameters $\theta_0$.}
\FOR{$k=0,\dots, K-1$}
\STATE{Update the parameter estimates:\vspace{-2pt}
\begin{equation}\label{eq:newtonalg}\theta_{k+1} = \theta_{k} - h\left[\sum_{n=1}^{N}\nabla_\theta^2\ell(\theta_k,X_k^n)\right]^{-1}\sum_{n=1}^N \nabla_\theta \ell(\theta_k,X_k^n).\end{equation}\vspace{-8pt}}
\STATE{Update the particles: for all $n=1,\dots,N$,\[
X_{k+1}^n=X_k^n+h\nabla_x \ell(\theta_k,X_k^n)+\sqrt{2h}W_k^n,\]
with $W_k^1,\dots,W_k^N$ denoting i.i.d.\ $\cal{N}(0,I_{D_x})$ R.V.s.}
\ENDFOR
\RETURN{$(\theta_k,q_k:=N^{-1}\sum_{n=1}^N\delta_{X_k^n})_{k=0}^K$.}
\end{algorithmic}
 \caption{Particle Quasi-Newton (PQN).}
 \label{alg:pqn}
\end{algorithm}
In short, it amounts to replacing the parameter estimates' update equation~(\ref{eq:IW2gradalg11}) with \eqref{eq:newtonalg}, where $\nabla_\theta^2\ell(\theta,x)$ denotes the log-likelihood's $\theta$-Hessian (which we assume is full-rank for all $(\theta,x)$ in $\Theta\times\cal{X}$).  In PQN, we also use $\theta_K,q_K$, or~\eqref{eq:estimators} to obtain estimates of the marginal likelihood's stationary points and their associated posteriors.  (\ref{eq:newtonalg},\ref{eq:IW2gradalg12}) arises as a discretization of~\eqref{eq:IW2Gradflow2} and
\begin{align}
\dot{\theta}_t&= -\left[\int\nabla_\theta^2\ell(\theta_t,x)q_t(x)dx\right]^{-1}\int \nabla_\theta \ell(\theta_t,x)q_t(x)dx,\label{eq:newtonflowapprox}
\end{align}
In turn, (\ref{eq:newtonflowapprox},\ref{eq:IW2Gradflow2}) is satisfied by the law of the following McKean-Vlasov SDE:
\begin{align}\label{eq:newtonSDE1}
d\theta_t = -\left[\int\nabla_\theta^2\ell(\theta_t,x)q_t(x)dx\right]^{-1}\left[\int \nabla_\theta \ell(\theta_t,x)q_t(x)dx\right]dt,\quad
dX_t=\nabla_x \ell(\theta_t,X_t)dt+\sqrt{2}dW_t,
\end{align}
where $q_t$ denotes $X_t$'s law and $(W_t)_{t\geq0}$ a standard $D_x$-dimensional Brownian motion. We obtain~(\ref{eq:newtonalg}) by following the same steps as in Sec.~\ref{sec:pgd}, only with (\ref{eq:newtonSDE1}) replacing (\ref{eq:IW2gradSDE1},\ref{eq:IW2gradSDE2}) and an extra approximation in~\eqref{eq:finsampleapprox}:
\[\left[\int\nabla_\theta^2\ell(\theta_t,x)q_t(x)dx\right]^{-1}\approx \left[\frac{1}{N}\sum_{n=1}^N\nabla_\theta^2\ell(\theta_t,X_t^n)\right]^{-1} .\]

(\ref{eq:newtonflowapprox},\ref{eq:IW2Gradflow2}) form an approximation to $F$'s Newton flow (analogous to (\ref{eq:IW2Gradflow1},\ref{eq:IW2Gradflow2}) except that we follow the Newton direction rather than the negative gradient), see Apps.~\ref{app:basicnewton}--\ref{app:newtonflow} below. Our full-rank assumption implies that (\ref{eq:newtonflowapprox},\ref{eq:IW2Gradflow2})'s fixed points are $F$'s stationary points, and Thrm.~\ref{thrm:statpoints} applies as before.

At first glance, \eqref{eq:newtonalg} mitigates the ill-conditioning discussed in Sec.~\ref{sec:pgd} for the same reason that \eqref{eq:IW2gradalg2} does: $\nabla_\theta^2\ell$'s entries generally have a similar number of terms to $\nabla_\theta \ell$'s, which prevents excessively large parameter updates. In fact, for the toy model in Ex.~\ref{ex:hier}, $\nabla_\theta^2\ell\equiv -\Lambda^{-1}= -D_x$ and (\ref{eq:IW2gradalg2},\ref{eq:newtonalg}) coincide.  A bit less superficially, this might be because the RHS of the equations in~(\ref{eq:newtonflowapprox},\ref{eq:IW2Gradflow2}) approximate $F$'s Newton direction (c.f.\ App.~\ref{app:newtonflow}) at $(\theta_t,q_t)$ and, hence, better account for the effect that the updates have on $F$'s value. This is also the reason why we believe that (\ref{eq:newtonalg},\ref{eq:IW2gradalg12}) often converges faster than (\ref{eq:IW2gradalg11},\ref{eq:IW2gradalg12}), e.g.\ see Fig.~\ref{fig:hier}b,c and App.~\ref{app:hierconvrates}. The price to pay is the extra cost incurred by the Hessian evaluations and the matrix inversion in~\eqref{eq:newtonalg}, which, absent any special structure in $ \nabla_\theta^2\ell$ (e.g.\, diagonal or banded), results in PQN's computational complexity equalling
\[\cal{O}(K[D_\theta^3+N[\text{eval. cost of } (\nabla_\theta\ell,\nabla_x\ell,\nabla_\theta^2\ell)]]).\]
The evaluation costs of $(\nabla_\theta\ell,\nabla_x\ell,\nabla_\theta^2\ell)$ is often linear in $D_x$ and $D_\theta$, making PQN an attractive choice  for models with $D_x\gg D_\theta$ like those in Ex.~\ref{ex:hier} and Sec.~\ref{sec:bnn}; see also the Bayesian GANs in~\cite{Saatci2017} and the generalized Bradley-Terry models in~\cite{Caron2012} for more examples.
\subsection{A differential geometry perspective on Newton's method for minimizing functions on Euclidean spaces}\label{app:basicnewton}
Throughout this section and Apps.~\ref{sec:vfetaylor}, \ref{app:newtonflow}, we assume that the reader is acquainted with the contents of App.~\ref{app:crash}. To motivate the flow~(\ref{eq:newtonflowapprox},\ref{eq:IW2Gradflow2}) we discretized to obtain PQN, recall that Newton's method for minimizing a (say, twice-differentiable and strictly convex) function $f:\rn\to\r$,
\[x_{k+1}=x_k-h[\nabla_x^2f(x_k)]^{-1}\nabla_xf(x_k)\quad\forall k=1,2,\dots,\]
is the Euler discretization of the \emph{Newton flow}:
\begin{equation}\label{eq:basicnewtonflow}\dot{x}_t=-[\nabla_x^2f(x_t)]^{-1}\nabla_xf(x_t)\quad\forall t\geq0,\end{equation}
At each point in time $t$, the flow follows the \emph{Newton direction} $v_N(x):=-[\nabla_x^2f(x)]^{-1}\nabla_xf(x)$ at $x_t$ (i.e.\ with $x=x_t$). The appropriate analogue of~\eqref{eq:gradref} shows that $v_N$ is precisely $f$'s gradient $\nabla^{\rm{g}^N} f$ w.r.t.\ the Riemmanian metric $\rm{g}^N$ associated with the tensor $(\nabla_x^2f(x))_{x\in\rn}$. This is an appealing choice because the geometry induced by $\rm{g}^N$ on $\rn$ makes $f$ isotropic, at least to second order:
\begin{align}f(x+tv)&= f(x) +t\iprod{\nabla_xf(x)}{v}+\frac{t^2}{2}\iprod{\nabla_x^2f(x) v}{v}+o(t^2)\label{eq:ftaylor}\\
&=f(x) +t\rm{g}^N(\nabla^{\rm{g}^N}f(x),v)+\frac{t^2}{2}\rm{g}^N_x(v,v)+o(t^2),\nonumber
\end{align}
by Taylor's Theorem. In other words, by replacing $\nabla_x$ with $\nabla^{\rm{g}^N}$ we mitigate bad conditioning in $f$ which, for the reasons discussed in \citet[Secs.~9.4.4,~9.5.1]{Boyd2004} and illustrated in \citet[Figs.~9.14,~9.15]{Boyd2004}, generally makes $v_N(x)$ a much better update direction than the Euclidean gradient $\nabla_x f(x)$. In what follows, we derive the analogue of the Newton direction for the free energy $F$. Doing so requires identifying an appropriate notion for $F$'s  Hessian, which we achieve using an expansion of the form in~\eqref{eq:ftaylor}.
\subsection{A second order Taylor expansion for $F$}\label{sec:vfetaylor}
By definition,
\begin{align*}F(\theta+t\tau,q+tm)=&\int \log(q(x)+tm(x))(q(x)+tm(x))dx\\
&-\int \ell(\theta+t\tau,x) (q(x)+tm(x))dx.\end{align*}
But $\log(z+t)(z+t)=\log(z)z+[\log(z)+1]t+t^2/(2z)+o(t^2)$ and, so,
\begin{align}
\int\log(q(x)+tm(x))(q(x)+tm(x))dx =&\int\log(q(x))q(x)dx+t\int\log(q(x))m(x)dx\label{eq:fndsa87fbwaytfbwatfwa}\\
&+\frac{t^2}{2}\int\left(\frac{m(x)}{q(x)}\right)^2q(x)dx+o(t^2).\nonumber
\end{align}
(Here, we have used that $\int m(x)dx=0$ because $m$ belongs to $\cal{T}\cal{P}(\cal{X})$. Rigorously arguing the above requires considerations similar to those in Footnote~\ref{foot:ot}.) Similarly,
\begin{align*}
\int \ell(\theta+t\tau,x)(q(x)+tm(x))=& \int\log(p_{\theta}(x,y))q(x)dx+t\int \iprod{\nabla_\theta \ell(\theta,x)}{\tau} q(x)dx\\
&+t\int \log(p_{\theta}(x,y))m(x)dx+\frac{t^2}{2}\int \iprod{\tau}{\nabla_\theta^2\ell(\theta,x)\tau} q(x)dx\\
&+t^2 \int \iprod{\nabla_\theta \ell(\theta,x)}{\tau} m(x)dx+o(t^2).
\end{align*}
Putting the above together with~\eqref{eq:fndsa87fbwaytfbwatfwa} and applying Lem.~\ref{lem:Ffirstvar}, we obtain that
\begin{align}\label{eq:FTaylor}
F(\theta+t\tau,q+tm)=F(\theta,q) +t\iprod{(\tau,m)}{\delta F(\theta,q)}+\frac{t^2}{2}\iprod{(\tau,m)}{\cal{H}_F(\theta,q)(\tau,m)}+o(t^2).
\end{align}
where $\cal{H}_F(\theta,q)$ denotes the linear map from $\cal{T}\cal{M}$ to $\cal{T}^*\cal{M}$ defined by
\begin{align}\label{eq:FHessop}\cal{H}_F(\theta,q)(\tau,m)=\left(-\left[\int\nabla_\theta^2\ell(\theta,x) q(x)dx\right]\tau-\int \nabla_\theta \ell(\theta,x)m(x)dx,\frac{m}{q}-\iprod{\nabla_\theta \ell(\theta,\cdot)}{\tau}\right).\end{align}
A comparison of (\ref{eq:ftaylor},\ref{eq:FTaylor}) seems to imply that $\cal{H}_F(\theta,q)$ might be a sensible analogue for $F$'s Hessian. Alternatively, we may view $\cal{H}_F(\theta,q)$  as  the `matrix'
\begin{align}\label{eq:FHessmatr}
\cal{H}_F(\theta,q):=\begin{bmatrix}
-\int\nabla_\theta^2\ell(\theta,x)q(x)dx&-\nabla_\theta \ell(\theta,\cdot)\\-\nabla_\theta \ell(\theta,\cdot)&q^{-1}\end{bmatrix}.
\end{align}

\subsection{The Newton direction and flow, and tractable approximations thereof} \label{app:newtonflow}
Suppose that $F$'s Hessian operator, $\cal{H}_F$ in~\eqref{eq:FHessop}, is invertible everywhere on $\cal{M}$. Similarly as with $f$ in App.~\ref{app:basicnewton}, we set $F$'s Newton direction at $(\theta,q)$ to be
\begin{equation}\label{eq:Fnewtondir}(\tau_N,m_N)(\theta,q):=-[\cal{H}_F(\theta,q)]^{-1}\delta F(\theta,q),\end{equation}
where $\delta F$ denotes $F$'s first variation in Lem.~\ref{lem:Ffirstvar}. Alternatively, assuming further that $\cal{H}_F$ is positive definite everywhere, we can view $(\tau_N,m_N)$ as $F$'s negative gradient $\nabla^{g^N}F$ with respect to the metric,
\[g^N_{(\theta,q)}((\tau,m),(\tau,m)):=\iprod{(\tau,m)}{\cal{H}_F(\theta,q)(\tau,m)},\]
which makes $F$ isotropic, at least to second order: by~\eqref{eq:FTaylor},
\[F(\theta+t\tau,q+tm)=F(\theta,q) +tg^N_{(\theta,q)}((\tau,m),\nabla^{g^N} F(\theta,q))+\frac{t^2}{2}g^N_{(\theta,q)}((\tau,m),(\tau,m))+o(t^2).\]
Unfortunately, we know of no closed-form expressions for $(\tau_N,m_N)$ or computationally tractable approximations to the corresponding flow. However, it is straightforward to find approximations to $\cal{H}_F$ that have both:
\\\\
\noindent\textbf{Block diagonal approximations $\cal{H}_F$ and quasi-Newton directions.} Consider the block-diagonal approximation to $\cal{H}_F$ obtained by zeroing the off-diagonal blocks in~\eqref{eq:FHessmatr}:
\begin{equation}\label{eq:mfeua9fmuea8nfdsaa}\cal{H}_{F}(\theta,q)\approx\rm{diag}\left(-\int\nabla_\theta^2\ell(\theta,x)q(x)dx,q^{-1}\right)=:\rm{diag}(\rm{G}_{(\theta,q)},\sf{G}_q^{FR}).\end{equation}
In other words, $\sf{G}_q^{FR}$ is the Fisher-Rao tensor on $\cal{P}(\cal{X})$ (cf.\ App.~\ref{app:metrics}), while  $\rm{G}_{(\theta,q)}$ is the tensor obtained by integrating the negative log-likelihood's $\theta$-Hessian w.r.t.\ $q$. Using~\eqref{eq:gradrefblock} and Lem.~\ref{lem:Ffirstvar}, we find that the  resulting `quasi-Newton' direction $(\tau_{QN},m_{QN})$ equals
\begin{align*}
\tau_{QN}(\theta,q)&=-\left[\int\nabla_\theta^2\ell(\theta,x)q(x)dx\right]^{-1}\int \nabla_\theta \ell(\theta,x)q(x)dx,\\
(m_{QN}(\theta,q))(x)&=q(x)\left[\log\left(\frac{p_{\theta}(x,y)}{q(x)}\right)-\int\log\left(\frac{p_{\theta}(x,y)}{q(x)}\right)q(x)dx\right];
\end{align*}
and the corresponding gradient flow reads
\[\dot{\theta}_t=\tau_{QN}(\theta_t,q_t)\quad \dot{q}_t=m_{QN}(\theta_t,q_t).\]
While it is likely possible that the above flow can be approximated computationally using techniques along the lines of those in~\citet{Lu2019,Zhang2021}, this would require estimating the log-density $\log(q(x))$ of particle approximations $q$, a complication we opted to avoid in this paper. Instead, we (crudely) further approximate~\eqref{eq:mfeua9fmuea8nfdsaa}  by replacing the Fisher-Rao block $\sf{G}_q^{FR}$ with a Wasserstein-2 block $\sf{G}_q^{W}$ (cf.\ App.~\ref{app:metrics}). The $\tau_{QN}(\theta,q)$-component of the quasi-Newton remains unchanged, the $m_{QN}(\theta,q)$-component is now given by $\nabla_x\cdot[q\nabla_x \log(q/p_{\theta}(\cdot,y))]$, and we obtain the flow in~(\ref{eq:newtonflowapprox},\ref{eq:IW2Gradflow2}).

\section{PARTICLE MARGINAL GRADIENT DESCENT (PMGD)}\label{app:pmgd}
For a surprising number of models in the literature, the (M) step is tractable. In particular:
\begin{assumption}\label{ass:Mstep}For each \(q\) in \(\cal{P}(\cal{X})\),  \(\theta\mapsto F(\theta,q)\) has a unique stationary point $\theta_*(q)$.\end{assumption}
Moreover, we are able to compute this point $\theta_*(x^{1:N}):=\theta_*(q)$ whenever $q=N^{-1}\sum_{n=1}^N\delta_{x^n}$ for $x^{1:N}=(x^1,\dots,x^N)$ in $\cal{X}^N$. In these cases, we can run PMGD (Alg.~\ref{alg:pmgd}) instead of PGD (Alg.~\ref{alg:pgd}).

\begin{algorithm}[h]
\begin{algorithmic}[1]
\STATE{\textbf{Inputs:} step size $h$, step number $K$, particle number $N$, and initial particles $X^1_0,\dots,X_0^N$ and parameters $\theta_0$. }
\FOR{$k=0,\dots, K-1$}
\STATE{Update the particles: for all $n=1,\dots,N$,\begin{align}\label{eq:W2gradalg}
X_{k+1}^n&=X_k^n+h\nabla_x \ell(\theta_*(X_k^{1:N}),X_k^n)+\sqrt{2h}W_k^n
\end{align}
with $W_k^1,\dots,W_k^N$ denoting i.i.d.\ $\cal{N}(0,I_{D_x})$ R.V.s.}
\ENDFOR
\RETURN{$(\theta_k:=\theta_*(X_k^{1:N}),q_k:=N^{-1}\sum_{n=1}^N\delta_{X_k^n})_{k=0}^K$.}
\end{algorithmic}
 \caption{Particle Marginal Gradient Descent (PMGD).}
 \label{alg:pmgd}
\end{algorithm}

PMGD's update equation~\eqref{eq:W2gradalg} approximates the  Wasserstein-2 gradient flow (cf.\ App.~\ref{app:marflow} below) of the `marginal objective' $F_*(q):=F(\theta_*(q),q)$: 
\begin{align}\label{eq:Fstargrad}
\dot{q}_t=-\nabla F_*(q_t),\quad\text{where}\quad\nabla F_*(q)= \nabla_x\cdot\left[ q\nabla_x \log\left(\frac{p_{\theta_*(q)}(\cdot,y)}{q}\right)\right].
\end{align}
In particular, \eqref{eq:Fstargrad} is satisfied by the law of the following McKean-Vlasov SDE:
\begin{align}\label{eq:W2gradSDE}
dX_t=\nabla_x \ell(\theta_*(q_t),X_t)dt+\sqrt{2}dW_t,
\end{align}
where $q_t$ denotes $X_t$'s law and $(W_t)_{t\geq0}$ a standard Brownian motion. We obtain~\eqref{eq:W2gradalg} by following the same steps as in  Sec.~\ref{sec:pgd}, only with (\ref{eq:W2gradSDE}) substituting (\ref{eq:IW2gradSDE1},\ref{eq:IW2gradSDE2}) and the approximations in~\eqref{eq:finsampleapprox} replaced by
\[q_t\approx\frac{1}{N}\sum_{n=1}^N\delta_{X_t^n}\quad\Rightarrow\quad \theta_*(q_t)\approx \theta_*\left(\frac{1}{N}\sum_{n=1}^N\delta_{X_t^n}\right).\]

Thrm.~\ref{thrm:statpoints} is easily adapted to this setting:
\begin{theorem}\label{thrm:statpointsmar}$\theta=\theta_*(q)$ and $\nabla F_*(q)=0$ if and only if $\nabla_\theta p_{\theta}(y)=0$ and $q=p_{\theta}(\cdot|y)$.
\end{theorem}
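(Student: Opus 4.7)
The plan is to reduce the statement to Theorem~\ref{thrm:statpoints} by leveraging the envelope-type identity $\nabla F_*(q)=\nabla_q F(\theta_*(q),q)$. The key preliminary observation is that the formula for $\nabla F_*(q)$ in~\eqref{eq:Fstargrad} is literally the formula for $\nabla_q F(\theta,q)$ in~\eqref{eq:Fgradient} with $\theta$ replaced by $\theta_*(q)$. Alternatively, a chain-rule argument gives $\nabla F_*(q)=\nabla_q F(\theta_*(q),q)+[\nabla\theta_*(q)]^\top\nabla_\theta F(\theta_*(q),q)$, and the second term vanishes because $\theta_*(q)$ is, by definition, a stationary point of $\theta\mapsto F(\theta,q)$. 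Either way, we may freely use $\nabla F_*(q)=\nabla_q F(\theta_*(q),q)$ throughout.

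For the forward implication, suppose $\theta=\theta_*(q)$ and $\nabla F_*(q)=0$. Since $\theta_*(q)$ is a stationary point of $\theta\mapsto F(\theta,q)$, we have $\nabla_\theta F(\theta,q)=0$; combined with $\nabla_q F(\theta,q)=\nabla F_*(q)=0$, this gives $\nabla F(\theta,q)=0$. Theorem~\ref{thrm:statpoints} then yields $\nabla_\theta p_\theta(y)=0$ and $q=p_\theta(\cdot|y)$.

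For the converse, assume $\nabla_\theta p_\theta(y)=0$ and $q=p_\theta(\cdot|y)$. Theorem~\ref{thrm:statpoints} gives $\nabla F(\theta,q)=0$, so in particular $\nabla_\theta F(\theta,q)=0$, meaning $\theta$ is a stationary point of $\theta\mapsto F(\theta,q)$. Assumption~\ref{ass:Mstep} declares this stationary point to be unique, so $\theta=\theta_*(q)$. Finally, $\nabla F_*(q)=\nabla_q F(\theta_*(q),q)=\nabla_q F(\theta,q)=0$.

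The only step that requires any real care is the envelope identity $\nabla F_*(q)=\nabla_q F(\theta_*(q),q)$; everything else is bookkeeping on top of Theorem~\ref{thrm:statpoints} and the uniqueness clause in Assumption~\ref{ass:Mstep}. In the spirit of the paper, I would justify the envelope identity simply by inspecting~\eqref{eq:Fstargrad} and~\eqref{eq:Fgradient}, relegating the (rigorous) chain-rule derivation—together with regularity conditions ensuring $q\mapsto\theta_*(q)$ is differentiable in the appropriate Wasserstein sense—to an appendix.
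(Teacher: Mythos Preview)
Your proposal is correct and follows essentially the same route as the paper's proof: both reduce to Theorem~\ref{thrm:statpoints} via Assumption~\ref{ass:Mstep} and the observation that the formula for $\nabla F_*(q)$ in~\eqref{eq:Fstargrad} coincides with $\nabla_q F(\theta_*(q),q)$ from~\eqref{eq:Fgradient}. The paper's version is terser---it simply reads off ``$\nabla F_*(q)=0$ iff $q=p_{\theta_*(q)}(\cdot|y)$'' directly from~\eqref{eq:Fstargrad} rather than naming it an envelope identity---but the logic is identical.
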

\begin{proof}Given Thrm.~\ref{thrm:statpoints}, we need only show that $\nabla F(\theta,q)=0$ if and only if $\theta=\theta_*(q)$ and $\nabla F_*(q)=0$. However, Assumpt.~\ref{ass:Mstep} implies that $\nabla_\theta F(\theta,q)=0$ if and only if $\theta=\theta_*(q)$. The result then follows because~\eqref{eq:Fstargrad} implies that $\nabla F_*(q)=0$ if and only if $q=p_{\theta_*(q)}(\cdot|y)$.
\end{proof}
Exploiting the availability of $\theta_*(q)$ seems to improve the convergence. For example, see Fig.~\ref{fig:hier}b,c (in fact, for this simple model, it is straightforward to find theoretical evidence supporting this, cf.\ App.~\ref{app:hierconvrates}). PMGD's complexity is
$$\cal{O}(K[N[\text{eval. cost of }\nabla_x\ell]+[\text{eval. cost of }\theta_*]]).$$
Lastly, we point out that in cases where $\theta_*(x^{1:N})$ is not analytically tractable, but $D_\theta$ is small (at least in comparison to $D_x$), we can instead approximately compute $\theta_*(X_{k}^{1:N})$ using an appropriate optimization routine (warm-starting $\theta_*(X_{k}^{1:N})$'s computation using  $\theta_*(X_{k-1}^{1:N})$).

%

%
%
\subsection{The marginal objective's gradient}\label{app:marflow}

Here, we use the Wasserstein-2 geometry on $\cal{P}(\cal{X})$: that induced by the Wasserstein-2 metric $\sf{g}^W$ with tensor $(\sf{G}^W_q)_{q\in\cal{P}(\cal{X})}$, cf.\ App.~\ref{app:metrics}. As we will now show, the marginal objective $F_*$'s gradient $\nabla F_*(q_t)$ w.r.t.\ to this metric is given by (\ref{eq:Fstargrad})'s RHS.. Given~\eqref{eq:W2inv}, substituting $\cal{P}(\cal{X})$  for $\cal{M}$ in~\eqref{eq:gradref}, we find that
\[\nabla F_*(q)= -\nabla_x \cdot [q \nabla_x \delta F_*(q)],\]
where $\delta F_*$ denotes $F_*$'s first variation (defined analogously to~\eqref{eq:ney8agneywa}). Hence, we need only show that $\delta F_* = \log(q/p_{\theta_*(q)}(\cdot,y))$ or, equivalently, that
\begin{equation}\label{eq:dfs}
F_*(q+tm)=F_*(q)+t\iprod{\log\left(\frac{q}{p_{\theta_*(q)}(\cdot,y)}\right)}{m}+o(t).
\end{equation}
To argue~\eqref{eq:dfs}, we assume that   $\theta_*:\cal{P}(\cal{X})\to\r$ defines a differentiable functional: for each $q$ in $\cal{P}(\cal{X})$ there exists a linear map $D_q\theta_*$  from $\cal{T}\cal{P}(\cal{X})$ to $\cal{T}\Theta$ satisfying
\[(D_q\theta_*) m=\lim_{t\to0}\frac{\theta_*(q+tm)-\theta_*(q)}{t}\quad\forall m\in\cal{T}_q\cal{P}(\cal{X}).\]
Because, with $\norm{\tau}$ denoting the Euclidean norm of $\tau:=\theta_*(q+tm)-\theta_*(q)$,
\[\ell(\theta_*(q+tm),x)=\ell(\theta_*(q),x)+t\iprod{\tau}{\nabla_\theta \ell(\theta_*(q),x)}+o(\norm{\tau}),\]
it follows from $\theta_*$'s differentiability that
\begin{align*}
\ell(\theta_*(q+tm),x)=\ell(\theta_*(q),x)+t\iprod{(D_q\theta_*) m}{\nabla_\theta \ell(\theta_*(q),x)}+o(t).
\end{align*}
For this reason,
\begin{align}
&\int \ell(\theta_*(q+tm),x)(q(x)+tm(x))dx\nonumber\\
&=\int\left[\ell(\theta_*(q),x)+t\iprod{(D_q\theta_*) m}{\nabla_\theta \ell(\theta_*(q),x)}+o(t)\right](q(x)+tm(x))dx\nonumber\\
&=\int \ell(\theta_*(q),x)q(x)dx+t\int \ell(\theta_*(q),x)m(x)dx\nonumber\\
&\quad +t\iprod{(D_q\theta_*) m}{\int\nabla_\theta \ell(\theta_*(q),x)q(x)dx}+o(t).\label{eq:nfe8yanf8eabnfyabnfea}
\end{align}
(Rigorously arguing the above requires considerations similar to those in Footnote~\ref{foot:ot}.) But, by definition, $\theta_*(q)$ minimizes $\theta\mapsto F(\theta,q)$, and we have that
\[\int\nabla_\theta \ell(\theta_*(q),x)q(x)dx=\nabla_\theta F(\theta_*(q),q)=0.\]
Given that $F_*(q)=F(\theta_*(q),q)$, combining the above with (\ref{eq:fndsa87fbwaytfbwatfwa},\ref{eq:nfe8yanf8eabnfyabnfea}) then yields~\eqref{eq:dfs}.
%

%

\section{EXPERIMENTAL DETAILS AND FURTHER NUMERICAL RESULTS}\label{app:examples}

We implement the methods using Python 3, JAX~\citep{Jax2018}, and PyTorch~\citep{Paszke2019}, and we carry out all experiments using a Google Colab Pro subscription.

\subsection{Toy hierarchical model}\label{app:hier}
\paragraph{Synthetic data.} We generate the data $y$ synthetically by sampling $p_\theta(x,y)$ in~Ex.~\ref{ex:hier} with $\theta$ set to $1$. 

\paragraph{The  marginal likelihood's global maximum and the corresponding posterior.}To obtain closed-form expressions for these, we rewrite the model density, $p_\theta(x,y)$ in Ex.~\ref{ex:hier}, in matrix-vector notation:
\begin{equation}\label{eq:hiervm}p_\theta(x,y)=\cal{N}(y;x,I_{D_x})\cal{N}(x;\theta\bm{1}_{D_x},I_{D_x})\quad\forall \theta\in\r,\enskip x,y\in\r^{D_x}.\end{equation}
Combining the expressions in  \citet[p.\ 92]{Bishop2006} with the Sherman-Morrison formula, we then find that
\begin{align}\label{eq:ndw7a8ndwa87bndwa}
p_\theta(y)=\cal{N}(y;\theta\bm{1}_{D_x},2I_{D_x}),\quad p_\theta(x|y)=\cal{N}\left(x;\frac{y+\theta\bm{1}_{D_x}}{2},\frac{1}{2}I_{D_x}\right).
\end{align}
Because 
\[\nabla_\theta \log(p_\theta(y))=\bm{1}_{D_x}^T(y-\bm{1}_{D_x}\theta)=\bm{1}_{D_x}^Ty-D_x\theta,\]
it follows the data's empirical mean is the marginal likelihood's  unique maximizer $\theta_*$, and plugging it into~\eqref{eq:ndw7a8ndwa87bndwa} we obtain an expression for the corresponding posterior:
\begin{align}\label{eq:hieroptimum}
\theta_*&=\frac{\bm{1}^T_{D_x}y}{D_x},\quad p_{\theta_*}(x|y)=\cal{N}\left(x;\frac{1}{2}\left[y+\frac{\bm{1}^T_{D_x}y}{D_x}\right],\frac{1}{2}I_{D_x}\right).
\end{align}

\paragraph{Implementation details for PGD, PQN, and PMGD.} Taking derivatives of~\eqref{eq:hiervm}'s log, we find that
\begin{equation}\label{eq:hiergradients}
\nabla_\theta \ell(\theta,x)=\bm{1}^T_{D_x}(x-\theta\bm{1}_{D_x}),\quad\nabla_\theta^2\ell\equiv -D_x,\quad
\nabla_x \ell(\theta,x)=y-x-(x-\theta\bm{1}_{D_x}).
\end{equation}
Given that 
\begin{equation}\label{eq:hierman}\nabla_\theta F(\theta,q)=-\int \nabla_\theta \ell(\theta,x)q(x)dx=-\bm{1}^T_{D_x}\left[\int xq(x)dx-\theta\bm{1}_{D_x}\right],\end{equation}
Assumpt.~\ref{ass:Mstep} is satisfied with
\begin{equation}\label{eq:hierthetaopt}
\theta_*(q)=\frac{\bm{1}^T_{D_x}}{D_x}\int xq(x)dx\enskip\forall q\in\cal{P}(\cal{X})\quad\Rightarrow\quad\theta_*(x^{1:N})=\frac{\bm{1}_{ND_x}^Tx^{1:N}}{ND_x}\enskip\forall x^{1:N}\in\cal{X}^N.
\end{equation}
Given (\ref{eq:hiergradients},\ref{eq:hierthetaopt}), PGD's~(Alg.~\ref{alg:pgd}) updates then read~(\ref{eq:hiergrad1},\ref{eq:hiergrad2}), PQN's~(Alg.~\ref{alg:pqn}) read~(\ref{eq:hiergrad2},\ref{eq:hiernewton}), and PMGD's~(Alg.~\ref{alg:pmgd}) reads~\eqref{eq:hiermarginal}:
\begin{align}
\theta_{k+1} &= \theta_{k} + h D_x\left[\theta_*(X^{1:N}_k)-\theta_k\right],\label{eq:hiergrad1}\\
X_{k+1}^{1:N}&=X_k^{1:N}+h[y^N+\theta_k\bm{1}_{ND_x}-2X_k^{1:N}]+\sqrt{2h}W_k^{1:N},\label{eq:hiergrad2}\\
\theta_{k+1} &= \theta_{k} + h \left[\theta_*(X^{1:N}_k)-\theta_k\right],\label{eq:hiernewton}\\
X_{k+1}^{1:N}&=X_k^{1:N}+h\left[y^N+\theta_*(X_k^{1:N})\bm{1}_{ND_x}-2X_k^{1:N}\right]+\sqrt{2h}W_k^{1:N},\label{eq:hiermarginal}
\end{align}
where $y^N$ stacks $N$ copies of $y$, $X^{1:N}_k:=(X^1_k,\dots,X^N_k)$, and similarly for $X_{k+1}^{1:N}$ and $W_k^{1:N}$.  Because the $\theta$-gradient in~\eqref{eq:hiergradients} is a sum of $D_x$ terms, $\Lambda$ in~\eqref{eq:IW2gradalg2} simply equals $D_x^{-1}$ and the tweaked version  PGD parameter update~(\ref{eq:IW2gradalg2}) coincides with PQN's~(\ref{eq:hiernewton}).

\paragraph{Implementation details for EM.} Given (\ref{eq:ndw7a8ndwa87bndwa},\ref{eq:hierthetaopt}), the EM steps read
\[\textrm{(E)}\quad q_{k}:=\cal{N}\left(\frac{y+\theta_{k}\bm{1}_{D_x}}{2},\frac{1}{2}I_{D_x}\right),\qquad \textrm{(M)}\quad \theta_{k+1}:=\frac{1}{2}\left(\frac{\bm{1}_{D_x}^Ty}{D_x}+\theta_{k}\right).\]
\subsection{Bayesian logistic regression}\label{app:blr}

\paragraph{Dataset.}We use the Wisconsin Breast Cancer dataset $\cal{Y}$~\citep{Wolberg1990}, created by Dr.\ William H.\ Wolberg at the University of Wisconsin Hospitals, and freely available at
\begin{center}\href{https://archive.ics.uci.edu/ml/datasets/breast+cancer+wisconsin+(original)}{https://archive.ics.uci.edu/ml/datasets/breast+cancer+wisconsin+(original)}.\end{center}
It contains $683$ datapoints\footnote{After removal of the $16$ datapoints with missing features.} each with nine features $f\in\r^9$ extracted from  a digitized image of a fine needle aspirate of a breast mass and an accompanying label $l$ indicating whether the mass is benign ($l=0$) or malign ($l=1$). We normalize the features so that each has mean zero and unit standard deviation across the dataset. We split the dataset into $80/20$ training and testing sets, $\cal{Y}_{\text{train}}$ and $\cal{Y}_{\text{test}}$.

\paragraph{Model.}Emulating~\citet[Sec.~4.1]{Debortoli2021}, we employ standard Bayesian logistic regression with Gaussian priors. That is, we assume that the datapoints' labels are conditionally independent given the features $f$ and regression weights $x\in\r^{D_x:=9}$, each label with Bernoulli law and mean $s(f^Tx)$, where $s(z):=e^z/(1+e^z)$ denotes the standard logistic function; and we assign the prior $\cal{N}(\theta \bm{1}_{D_x},5 I_{D_x})$ to the weights $x$, where $\theta$ denotes the (scalar) parameter to be estimated. The model's density is given by:
\[p_\theta(x,\cal{Y}_{\text{train}})=\cal{N}(x;\theta \bm{1}_{D_x},5 I_{D_x})\prod_{(f,l)\in\cal{Y}_{\text{train}}}s(f^Tx)^l[1-s(f^Tx)]^{1-l};\]
and it follows that
\begin{equation}\label{eq:blrdens}\ell(\theta,x)=\sum_{(f,l)\in\cal{Y}_{\text{train}}}[lf^Tx-\log (1+e^{f^Tx})]-\frac{\norm{x-\bm{1}_{D_x}\theta}^2}{5}.\end{equation}
The marginal likelihood has a unique maximizer:
\begin{proposition}\label{prop:blr}If $f^T\bm{1}_{D_x}\neq 0$ for at least one $(l,f)$ in $\cal{Y}_{\text{train}}$, then $\theta\mapsto p_\theta(\cal{Y}_{\text{train}})=\int p_\theta(x,\cal{Y}_{\text{train}})dx$ has a single maximizer $\theta_*$ and no other stationary points.
\end{proposition}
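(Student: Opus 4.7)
The plan is to apply Theorem~\ref{thrm:strilogconc}: if the joint log-density $\ell(\theta,x)$ is twice continuously differentiable and has a strictly negative-definite Hessian everywhere, then $\theta\mapsto p_\theta(y_{\text{train}})$ has a unique maximizer and no other stationary points. Smoothness of $\ell$ in~\eqref{eq:blrdens} is immediate from the expression, so the work reduces to showing that $\nabla^2 \ell(\theta,x)\prec 0$ for every $(\theta,x)$ in $\r\times\r^{D_x}$, under the assumption that some datapoint $(l,f)\in\cal{Y}_{\text{train}}$ satisfies $f^T\bm{1}_{D_x}\neq 0$.

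The core computation is to evaluate the quadratic form $v^T\nabla^2\ell(\theta,x)v$ at an arbitrary $v=(v_\theta,v_x)\in\r\times\r^{D_x}$. Differentiating~\eqref{eq:blrdens} twice, the contribution from the log-likelihood terms yields $-\sum_{(f,l)} s(f^Tx)(1-s(f^Tx))(f^Tv_x)^2$ (with zero cross-terms in $v_\theta$, since the log-likelihood does not depend on $\theta$), while the Gaussian prior contributes a quadratic form that, after a small amount of algebra, can be rewritten as $-\tfrac{2}{5}\norm{v_x-v_\theta\bm{1}_{D_x}}^2$. Combining,
\begin{equation*}
v^T\nabla^2\ell(\theta,x)v \;=\; -\tfrac{2}{5}\norm{v_x-v_\theta\bm{1}_{D_x}}^2 \;-\; \sum_{(f,l)\in\cal{Y}_{\text{train}}} s(f^Tx)\bigl(1-s(f^Tx)\bigr)(f^Tv_x)^2.
\end{equation*}
Both terms are non-positive, so it remains only to show that their sum vanishes iff $v=0$.

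The main (and only) obstacle is that the prior term is \emph{degenerate}: it has a one-dimensional null direction $v_x=v_\theta\bm{1}_{D_x}$, reflecting the fact that the prior alone does not identify $\theta$. The hypothesis on $\cal{Y}_{\text{train}}$ is exactly what is needed to close this gap through the data term. Indeed, if $v_x\neq v_\theta\bm{1}_{D_x}$ the first term is already strictly negative; otherwise $v_x=v_\theta\bm{1}_{D_x}$ and $f^Tv_x=v_\theta\,f^T\bm{1}_{D_x}$, so the second term becomes $-v_\theta^2\sum_{(f,l)} s(f^Tx)(1-s(f^Tx))(f^T\bm{1}_{D_x})^2$. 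Since $s(1-s)>0$ pointwise and at least one $f^T\bm{1}_{D_x}$ is nonzero by assumption, this forces $v_\theta=0$, whence $v_x=0$ and $v=0$. Thus $\nabla^2\ell\prec0$ everywhere, and Theorem~\ref{thrm:strilogconc} delivers the claim.
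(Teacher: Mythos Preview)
Your proof is correct and follows essentially the same approach as the paper: both invoke Theorem~\ref{thrm:strilogconc} and show $\nabla^2\ell\prec0$ by observing that the prior contribution is negative semidefinite with a one-dimensional null space in the all-ones direction, and that the data term is strictly negative on that direction precisely because some $f^T\bm{1}_{D_x}\neq0$. The only cosmetic difference is that the paper identifies the null direction via an eigenvalue argument on the prior Hessian block, whereas you complete the square to write the prior quadratic form as $-\tfrac{2}{5}\norm{v_x-v_\theta\bm{1}_{D_x}}^2$; the content is identical.
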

\begin{proof}Given Thrm.~\ref{thrm:strilogconc} in App.~\ref{app:conv}, we need only argue that $\ell$ is strictly concave. Taking gradients of~\eqref{eq:blrdens}, we find that
\[\nabla^2\ell(\theta,x)=\frac{1}{5}\begin{bmatrix}-D_x&\bm{1}_{D_x}^T\\\bm{1}_{D_x}&-I_{D_x}\end{bmatrix}-\sum_{(f,l)\in\cal{Y}_{\text{train}}}s(f^Tx)[1-s(f^Tx)]f\otimes f.\]
The leftmost matrix has a single nonnegative eigenvalue. It equals zero, its geometric multiplicity is one, and its corresponding eigenvector is the vector of ones $\bm{1}_{D_x+1}$. However,
\begin{align*}v^T\left[\sum_{(f,l)\in\cal{Y}_{\text{train}}}s(f^Tx)[1-s(f^Tx)]f\otimes f\right] v=\sum_{(f,l)\in\cal{Y}_{\text{train}}}s(f^Tx)[1-s(f^Tx)](f^Tv)^2\geq 0\end{align*} 
for all $v$ in $\r^{D_x}$. By assumption, $f^T\bm{1}_{D_x}\neq 0$ for at least one feature vector $f$ in the test set, and the above inequality is strict if $v\neq\bm{1}_{D_x}$. It then follows that
\[z^T\nabla^2\ell(\theta,x)z<0\quad \forall z\in\r^{D_x+1},\enskip \theta\in \Theta,\enskip x\in\cal{X};\]
or, in other words, that $\ell$ is strictly concave.
\end{proof}
\paragraph{Implementation details.}Taking gradients of~\eqref{eq:blrdens}, we obtain
\begin{align*}
\nabla_\theta \ell(\theta,x)=\frac{\bm{1}^T_{D_x}x-D_x\theta}{5},\quad \nabla_\theta^2\ell\equiv- \frac{D_x}{5},\quad
\nabla_x \ell(\theta,x)=\frac{\theta\bm{1}_{D_x}-x}{5}+\sum_{(f,l)\in\cal{Y}_{\text{train}}}[l-s(f^Tx)]f,
\end{align*}
The same manipulations as in~\eqref{eq:hierman} show that Assumpt.~\ref{ass:Mstep} is satisfied with $\theta_*(q)$ and $\theta_*(x^{1:N})$ as in~\eqref{eq:hierthetaopt}. Hence, PGD's~(Alg.~\ref{alg:pgd}) updates read~(\ref{eq:blrgrad1},\ref{eq:blrgrad2}), PQN's~(Alg.~\ref{alg:pqn}) read~(\ref{eq:blrgrad2},\ref{eq:blrnewton}), and PMGD's~(Alg.~\ref{alg:pmgd}) reads~\eqref{eq:blrmarginal}:
\begin{align}
\theta_{k+1} &= \theta_{k} + h (D_x/5)[\theta_*(X^{1:N}_k)-\theta_k],\label{eq:blrgrad1}\\
X_{k+1}^{n}&=X_k^{n}+h\left(\frac{\theta_k\bm{1}_{D_x}-X_k^n}{5}+\sum_{(f,l)\in\cal{Y}_{\text{train}}}[l-s(f^TX_k^n)]f\right)+\sqrt{2h}W_k^{n}\enskip\forall n\in[N],\label{eq:blrgrad2}\\
\theta_{k+1} &= \theta_{k} + h [\theta_*(X^{1:N}_k)-\theta_k],\label{eq:blrnewton}\\
X_{k+1}^{n}&=X_k^{n}+h\left(\frac{\theta_*(X_k^{1:N})\bm{1}_{D_x}-X_k^n}{5}+\sum_{(f,l)\in\cal{Y}_{\text{train}}}[l-s(f^TX_k^n)]f\right)+\sqrt{2h}W_k^{n}\enskip\forall n\in[N].\label{eq:blrmarginal}
\end{align}
SOUL's (Sec.~\ref{sec:blr}) updates read \eqref{eq:blrgrad1}, $X_{k+1}^{0}=X_{k}^{N}$, and
\[X_{k+1}^{n+1}=X_{k+1}^n+h\left(\frac{\theta_k\bm{1}_{D_x}-X_{k+1}^n}{5}+\sum_{(f,l)\in\cal{Y}_{\text{train}}}[l-s(f^TX_{k+1}^n)]f\right)+\sqrt{2h}W_{k+1}^n\enskip\forall n\in[N-1].\]

For MFG VI, we use a product-form Gaussian $q_\phi:= \mathcal{N}(\mu, \textrm{diag}(\sigma^2(s))$ as the variational approximation, where $\textrm{diag}(\sigma^2(s))$ denotes a diagonal matrix with $\sigma^2(s):=\mathrm{Softplus}(s)$  on its diagonal and $\phi:=(\mu, s)$ in $\mathbb{R}^{2D_x}$ denote the variational parameters. The variational free energy then reads
\begin{align*}
F(\theta, \phi) = - \frac{1}{2} \left [\sum_{i=1}^{D_x} \log (\sigma_{i}^2(s)) + D_x [\log (2\pi) + 1] \right ] - \int \ell(\theta,x)q_\phi(x)dx.
\end{align*}
Using  the reparametrization trick \citep{Kingma2014}, we find that
\begin{align*}
F(\theta, \phi) \approx - \frac{1}{2} \left [\sum_{i=1}^{D_x} \log (\sigma_{i}^2(s)) + D_x [\log (2\pi) + 1] \right ] - \frac{1}{N}\sum_{n=1}^N \ell(\theta,\mu+\textrm{diag}(\sigma(s))\epsilon^n),
\end{align*}
where $\epsilon^1,\dots,\epsilon^N$ denote i.i.d.\ samples drawn from $\cal{N}(0,I_{D_x})$ (we set $N$ to $100$: the maximum number of particles we use for PGD, PQN, PMGD, and SOUL). We then minimize the RHS over $(\theta,\phi)$ using gradient descent.

\paragraph{Predictive performance metrics.}Given a new feature vector $\hat{f}$, we would ideally predict its label $\hat{l}$ using the posterior predictive distribution associated with the  marginal likelihood's maximizer  $\theta_*$. In other words, using
\[
p_{\theta_*}(\hat{l}|\hat{f},\cal{Y}_{\text{train}})=\int p(\hat{l}|\hat{f},x)p_{\theta_*}(x|\cal{Y}_{\text{train}})dx=\int s(\hat{f}^Tx)^{\hat{l}}[1-s(\hat{f}^Tx)]^{1-\hat{l}}p_{\theta_*}(x|\cal{Y}_{\text{train}})dx.\]
However, $p_{\theta_*}(x|\cal{Y}_{\text{train}})$ is unknown. So, we  replace it with a particle approximation $q=M^{-1}\sum_{m=1}^M\delta_{Z^m}$ thereof obtained using PGD, PQN, PMGD,  SOUL, or MFG VI\footnote{For MFG VI, we set $M$ to $20100$ and draw $Z^1,\dots,Z^M$ independently from $q_{\phi_*}$, where $\phi_*$ denotes the optimized variational parameters.}:
\begin{align}p_{\theta_*}(\hat{l}|\hat{f},\cal{Y}_{\text{train}})&\approx \int s(\hat{f}^Tx)^{\hat{l}}[1-s(\hat{f}^Tx)]^{1-\hat{l}}q(dx)\nonumber\\
&=\frac{1}{M}\sum_{m=1}^M s(\hat{f}^TZ^m)^{\hat{l}}[1-s(\hat{f}^TZ^m)]^{1-\hat{l}}=:g(\hat{l}|\hat{f})\label{eq:dnw8a7dbwyahdnaudaad}.\end{align}
We use two metrics to evaluate the approximation's predictive power. First, the average classification error over the test set $\cal{Y}_{\text{test}}$, i.e.\ the fraction of mislabelled test points were we to assign to each of them the label maximizing \eqref{eq:dnw8a7dbwyahdnaudaad}'s RHS:
\begin{align}
\text{Error}:=\frac{1}{\mmag{\cal{Y}_{\text{test}}}}\sum_{(f,l)\in\cal{Y}_{\text{test}}}\mmag{l-\hat{l}(f)},\quad\text{where}\quad \hat{l}(f):=\argmax_{\hat{l}\in\{0,1\}}g(\hat{l}|f).\label{eq:terror}
\end{align}
The second metric is the so-called log pointwise predictive density (LPPD, e.g.~\cite{Vehtari2017}):
\begin{align}
\text{LPPD}:=\frac{1}{\mmag{\cal{Y}_{\text{test}}}}\sum_{(f,l)\in\cal{Y}_{\text{test}}}\log(g(l|f)).\label{eq:lppd}
\end{align}
Interest in this metric stems from the assumption that the data is drawn independently from a `data-generating process' $p(dl,df)$, in which case, for large test sets,
\begin{align*}
\text{LPPD}&\approx \int\log(g(l|f))p(dl,df)\\
&=\int\left[\int\log\left(\frac{g(l|f))}{p(l|f)}\right)p(dl|f)\right]p(df)+\int \log(p(l|f))p(dl,df)\\
&=-\int KL(g(\cdot|f))||p(\cdot|f))p(df)+\int \log(p(l|f))p(dl,df).
\end{align*}
In other words, the larger LPPD is, the smaller we can expect the mean KL divergence between our classifier $g(l|f)$ and the optimal classifier $p(l|f)$.

\paragraph{Numerical results.}To investigate the algorithms' performances, we ran them $100$ times, each time using a different random $80/20$ training/testing split of the data. In all runs we employed a step size of $h=0.01$ (which ensured that no algorithm was on the verge of becoming unstable while simultaneously not being excessively small), $K=400$ steps, and $N=1,10,100$ particles. Tab.~\ref{tab:blrmt} shows the test errors~\eqref{eq:terror} and computation times, and Tab.~\ref{tab:blrapp} the corresponding LPPDs~\eqref{eq:lppd} and stationary empirical variances of the parameter estimates. For the predictive performance metrics, we initialized the estimates and particles at zero (as in~\cite{Debortoli2021}) and used the time-averaged approximations $\bar{q}_{400}$, cf.~\eqref{eq:estimators}, with a burn-in of $k_b=200$. (Warm-starting did not lead to any improvements here.) By $k=200$, the PQN parameter estimates have not yet reached the stationary phase (Fig.~\ref{fig:blr}a). Hence, for the variance estimates, we warm-start the algorithms using a preliminary run of PGD (with $K=400$, $h=0.01$, and a single particle $N=1$) and then compute the estimates using a full $K=400$ run of the corresponding algorithm.

\begin{table*}[b]  \caption{\textbf{Bayesian logistic regression.} Log pointwise predictive densities and stationary variances  achieved using time-averaged posterior approximation $\bar{q}_{400}$, with $N=1,10,100$, and corresponding computation times (averaged over $100$ replicates). See details in the text.}
  \label{tab:blrapp}
  \centering
  \begin{tabular}{lllllll}
    \toprule
    &\multicolumn{2}{c}{$N=1$}&\multicolumn{2}{c}{$N=10$}&\multicolumn{2}{c}{$N=100$}\\
    \midrule
    & LPPD ($\times10^{-2})$  & Var.\ ($\times10^{-4}$) & LPPD ($\times10^{-2})$  & Var.\ ($\times10^{-4}$)& LPPD ($\times10^{-2})$  & Var.\ ($\times10^{-4}$) \\
    \midrule
    PGD          & -9.73 $\pm$ 1.04 & 14.1 $\pm$ 13.6 & -9.40 $\pm$ 0.28 & 1.25 $\pm$ 1.01 & -9.38 $\pm$ 0.08 &  0.13 $\pm$ 0.10 \\
    PQN          & -9.65 $\pm$ 0.87 & 7.33 $\pm$ 6.63 & -9.41 $\pm$ 0.27 & 0.72 $\pm$ 0.73 & -9.41 $\pm$ 0.09 &  0.06 $\pm$ 0.06 \\
    PMGD         & -9.61 $\pm$ 0.86 & 106  $\pm$ 36.7 & -9.48 $\pm$ 0.27 & 10.7 $\pm$ 4.38 & -9.39 $\pm$ 0.07 &  1.03 $\pm$ 0.35 \\
    SOUL         & -9.73 $\pm$ 0.94 & 11.7 $\pm$ 10.7 & -9.41 $\pm$ 0.27 & 2.78 $\pm$ 2.23 & -9.39 $\pm$ 0.09 &  0.28 $\pm$ 0.6 \\
    \bottomrule
  \end{tabular}
\end{table*}
%

\subsection{Bayesian neural network}\label{app:bnn}
\paragraph{Dataset.}We use the MNIST~\citep{Lecun1998} dataset $\cal{Y}$, available under the terms of the Creative Commons Attribution-Share Alike 3.0 license at
\begin{center}\href{http://yann.lecun.com/exdb/mnist/}{http://yann.lecun.com/exdb/mnist/}.\end{center}
%
It contains $70,000$ $28\times28$ grayscale images $f\in \r^{784}$ of handwritten digits each accompanied its corresponding label $l$. We avoid big data issues by  subsampling $1000$ datapoints with labels $4$ and $9$ just as in~\citet{Yao2022} (except that we pick the labels $4$ and $9$ rather than $1$ and $2$ to make the problem more challenging). We normalize the $784$ features so that each has mean zero and unit standard deviation across the dataset. We split the dataset into $80/20$ training and testing sets, $\cal{Y}_{\text{train}}$ and $\cal{Y}_{\text{test}}$.
\paragraph{Model.}Following~\citet{Yao2022}, we employ a Bayesian two-layer neural network with tanh activation functions, a softmax output layer, and Gaussian priors on the weights (however, we simplify matters by setting all network biases to zero). That is, we assume that the datapoints' labels are conditionally independent given the features $f$ and network weights $x:=(w,v)$ (where $w\in\r^{D_w:=40\times 784=31360}$ and $w^0\in\r^{D_v:=2\times 40=80}$) with law
\begin{equation}\label{eq:bnnclass}p(l|f,x)\propto \exp\left(\sum_{j=1}^{40}v_{lj}\tanh\left(\sum_{i=1}^{784}w_{ji}f_i\right)\right).\end{equation}
Also as in \cite{Yao2022}, we assign we assign the prior $\cal{N}(\bm{0}_{D_w},e^{2\alpha}I_{D_w})$ to the input layer's weights  and $\cal{N}(\bm{0}_{D_v},e^{2\beta}I_{D_v})$ to those of the output layer, where $\bm{0}_d$ denotes the $d$-dimensional vector of zeros. However, rather than assigning a hyperprior to $\alpha, \beta$, we instead learn them from the data (i.e.\ $\theta:=(\alpha, \beta)$). The model's density is given by:
\[p_\theta(x,\cal{Y}_{\text{train}})=\cal{N}(w;\bm{0}_{D_w},e^{2\alpha}I_{D_w})\cal{N}(v; \bm{0}_{D_v},e^{2\beta}I_{D_v})\prod_{(f,l)\in\cal{Y}_{\text{train}}}p(l|f,x).\]
\paragraph{Implementation details.}The necessary $\theta$-gradients and $\theta$-Hessian are straightforward to compute by hand:
\[\nabla_\theta \ell(\theta,x)=\begin{bmatrix}\norm{w}^2e^{-2\alpha}-D_w\\
\norm{v}^2e^{-2\beta}-D_v\end{bmatrix},\quad \nabla_\theta^2\ell(\theta,x)=-\begin{bmatrix}2\norm{w}^2e^{-2\alpha}&0\\
0&2\norm{v}^2e^{-2\beta}\end{bmatrix}.
\]
For the $x$-gradients, we use JAX's grad function (implementing a version of autograd). Given that 
\begin{align*}
\nabla_\theta F(\theta,q)=-\int \nabla_\theta \ell(\theta,x)q(x)dx=-\begin{bmatrix}e^{-2\alpha}\int \norm{w}^2q(w)dw-D_w\\
e^{-2\beta}\int \norm{v}^2q(v)dv-D_v\end{bmatrix},
\end{align*}
where $q(w)$ and $q(v)$ respectively denote $q$'s $w$ and $v$ marginals, Assumpt.~\ref{ass:Mstep} is satisfied with
\begin{align*}
\theta_*(q)&=\begin{bmatrix}\alpha_*(q)\\\beta_*(q)\end{bmatrix}=\begin{bmatrix}\frac{1}{2}\log\left(D_w^{-1}\int \norm{w}^2q(w)dw\right)\\
\frac{1}{2}\log\left(D_v^{-1}\int \norm{v}^2q(v)dv\right)\end{bmatrix}\quad\forall q\in\cal{P}(\cal{X}),\\
\Rightarrow\theta_*(x^{1:N})&=\begin{bmatrix}\alpha_*(x^{1:N})\\\beta_*(x^{1:N})\end{bmatrix}=\begin{bmatrix}\frac{1}{2}\log\left([ND_w]^{-1}\sum_{n=1}^N\norm{w^n}^2\right)\\
\frac{1}{2}\log\left([ND_v]^{-1}\sum_{n=1}^N\norm{v^n}^2\right)\end{bmatrix}\enskip\forall x^{1:N}=(w^{1:N},v^{1:N})\in\cal{X}^N.
\end{align*}
The PGD, PQN, PMGD, and SOUL updates are obtained by plugging the  expressions above into~(\ref{eq:IW2gradalg11},\ref{eq:IW2gradalg12}), (\ref{eq:newtonalg},\ref{eq:IW2gradalg12}), \eqref{eq:W2gradalg}, and (\ref{eq:IW2gradalg11},\ref{eq:SOUL}), respectively. To avoid memory issues, we only store the current particle cloud and use its empirical distribution to approximate the posteriors (rather than a time-averaged version thereof). We initialize the parameter estimates at zero and the weights at samples drawn independently from the priors.

Given the high dimensionality of the latent variables, PGD and SOUL prove less stable than PQN and PMGD (the former lose stability around $h\approx 10^{-4}$ and the latter around $h\approx 1$). Hence, we stabilize PGD and SOUL using the heuristic~\eqref{eq:IW2gradalg2} discussed in Sec.~\ref{sec:pgd}. This simply entails respectively dividing the $\alpha$ and $\beta$ gradients by $D_w$ and $D_v$. We then set $h:=0.1$ which ensures that no algorithm is close to losing stability.

\paragraph{Predictive performance metrics.} We use the test error and log pointwise predictive density defined as in~(\ref{eq:terror},\ref{eq:lppd}), only with the classifier $g(l|f)$ now given by $N^{-1}\sum_{n=1}^Np(l|f,X^n_K)$, where $p(l|f,x)$ is as in~\eqref{eq:bnnclass} and $X^{1:N}_K$ denotes the final particle cloud produced by PGD, PQN, PMGD, or SOUL.

\subsection{Generator network}\label{app:gen}

\paragraph{Datasets.}We use two datasets of images, the MNIST dataset described in App.~\ref{app:bnn}, and the CelebA dataset~\citep{Liu2015}. The latter contains $202,599$ $178\times 218$ color images of celebrity faces. In both cases, we resize the images to be $32\times32$ and normalize  pixel values so that they lie in $[-1,1]$. We also randomly pick $M:=10,000$ (MNIST) or $M:=40,000$ (CelebA) images $y^{1:M}:=(y^m)_{m=1}^M$ for training and reserve the rest for testing.

\paragraph{Model.}
The model assumes that each image $y^m$ is generated independently of all others by:
\begin{compactenum}
\item drawing a latent variable $x^m$ from a zero-mean unit-variance Gaussian distribution $p(x):=\mathcal{N}(x|0,I_{d_x})$ on a $d_x:=64$-dimensional latent space $\mathbb{R}^{d_x}$;
\item mapping $x^m$ to the image space $\mathbb{R}^{d_y}$ (with $d_y=32\times 32$ for MNIST and $d_y=3\times32\times32$ for CelebA) via a generator $f_\theta$: a neural network parameterized by some parameters $\theta$ in $\mathbb{R}^{D_\theta}$;
\item adding zero-mean $0.01^2$-variance Gaussian noise: $y^m=f_\theta(x^m)+\epsilon^m$ where $(\epsilon^m)_{m=1}^M$ is a sequence of i.i.d. R.V.s with law $\mathcal{N}(0,0.01^2 I_{d_y})$.
\end{compactenum}
In full, the model's density is given by
\begin{equation}\label{eq:nlvmiid}
p_\theta (x^{1:M},y^{1:M}) = \prod_{m=1}^M p_\theta(x^{m}, y^{m}),\end{equation}
where
$$
p_\theta(x^m,y^m)= p_\theta(y^m|x^m)p(x^m),\quad\textrm{with}\quad p_\theta(y^m|x^m) := \mathcal{N}(y^m|f_\theta(x^m), 0.01^2 I_{d_y}).
$$
For $f_\theta$ we use a convolutional neural network with an architecture emulating that in~\cite{Nijkamp2020}, see below for details. In total, it has $355,457$ parameters and $64\times M=640,000$ latent variables for MNIST, and $357,507$ parameters and $64\times M=2,560,000$ latent variables for CelebA.

\paragraph{Network architecture.} The network is composed of layers of $4$ basic types:
\begin{compactitem}
\item $l_\theta$: fully-connected linear layers,
\item $c_\theta$: convolutional layers,
\item $c_\theta^T$: transpose convolutional layers,
\item $b_\theta:$ batch normalization layers.
\end{compactitem}
These are interwoven with GELU activation functions. First, the above are assembled to create $2$ further types of layers:
\begin{compactitem}
\item `projection' layers $\pi_\theta:=\textrm{GELU} \circ b_\theta\circ c_\theta^T \circ \textrm{GELU}\circ b_\theta\circ l_\theta$;
\item `deterministic' layers $d_\theta=\textrm{GELU} \circ b_\theta \circ c_\theta \circ \textrm{GELU}\circ b_\theta \circ c_\theta + I$ where $I$ denotes the identity operator (i.e., the layer has a skip connection).
\end{compactitem}
The network itself then consists of a projection layer followed by two deterministic layers, a transpose convolutional layer, and a $\tanh$ activation function:

$$f_\theta =  \tanh\circ c_\theta^T\circ d_\theta \circ d_\theta \circ \pi_\theta.$$


\paragraph{Training.}

Training the model entails searching for parameters $\theta_*$ maximizing the likelihood $\theta\mapsto p_\theta(y^{1:M})$ of the training set $y^{1:M}$, at least locally. We do so using $4$ different approaches: PGD (Alg.~\ref{alg:pgd}), alternating back propagation (ABP;~\citet{Han2017}), short-run MCMC (SR;~\citet{Nijkamp2020}), and variational inference (i.e.\ appending to the model an inference network, so turning it into a variational autoencoder, VAE;~\citet{Kingma2014}). In all cases, we use PyTorch to implement the algorithm and compute the necessary gradients. 

\paragraph{Training (PGD).} We use PGD slightly modified to better cope with the high evaluation cost of the log-likelihood's  gradients.  In particular,  we replace $\nabla_{\theta}$ in the parameter update~(\ref{eq:IW2gradalg11}) with an unbiased estimator thereof obtained by subsampling the training set:
\begin{align*}\nabla_{\theta} \ell(\theta,x^{1:M})&=\sum_{m=1}^M \nabla_\theta\log(p_\theta(y^m|x^m))=M\left[\frac{1}{M}\sum_{m=1}^M \nabla_\theta\log(p_\theta(y^m|x^m))\right]\\
&\approx M\left[\frac{1}{M_{\mathcal{B}}}\sum_{m\in\mathcal{B}}\nabla_\theta\log(p_\theta(y^m|x^m))\right]=\frac{M}{M_{\mathcal{B}}}\sum_{m\in\mathcal{B}}\nabla_\theta\log(p_\theta(y^m|x^m)),\end{align*}

where $\mathcal{B}$ denotes a random subset of $[M]:=\{1,\dots, M\}$ and $M_\mathcal{B}$ its cardinality. To mitigate the varying magnitudes among  the entries of
$$\nabla_\theta\log(p_\theta(y^m|x^m))=\frac{1}{0.01^2}[y^m-f_\theta(x^m)]^T\nabla_\theta f_\theta(x^m)$$
and improve the training, we use a modified version of the  heuristic~\eqref{eq:IW2gradalg2} discussed in Sec.~\ref{sec:pgd}. As in the heuristic, we rescale each entry by a scalar, only that, this time, we allow the scalars to vary with the iteration count $k$. We choose these scalars as in RMSprop~\citep{Hinton2012} using the default values of PyTorch $1.12$'s implementation (cf.\ the \href{https://pytorch.org/docs/stable/generated/torch.optim.RMSprop.html}{documentation}).
In full, we update the parameter estimates $\theta_k$ using
\begin{equation}\label{eq:genthetaupdate}
\theta_{k+1} = \theta_k + h\lambda\Lambda_k\frac{M}{NM_{\mathcal{B}}}\left[\sum_{n=1}^N \sum_{m\in\cal{B}_{k}}\nabla_\theta\log p_{\theta_k}
( y^{m}|X^{n,m}_k) \right],
\end{equation}
where $(X^n)_{n=1}^N=((X^{n,m})_{m=1}^M)_{n=1}^N$ denotes the particle cloud at the $k^{th}$ iteration,  $\Lambda_k$ a diagonal matrix containing the RMSprop step sizes, $\lambda$ a scalar that we tune by hand to mitigate differences between the scales of log-likelihood's $\theta$ and $x$ gradients, and $\mathcal{B}_k$ indexes the image batch used in the $k^{th}$ parameter update (these are drawn uniformly at random without replacement until the dataset is exhausted, at which point the dataset is shuffled and the procedure is repeated).

In the particle updates, we subsample the $x$-gradients using the same image batches. Given the product-form structure in~\eqref{eq:nlvmiid}, this has the effect of only updating particle components index by the batches. That is, the $k^{th}$ update reads:
\begin{align}\label{eq:genxupdate}
X^{n,m}_{k+1}&=X^{n,m}_k + h\nabla_x \log p_{\theta_k}
(X^{n,m}_k, y^{m}) + \sqrt{2h} W^{n,m}_k \quad \forall m\in\cal{B}_k,\enskip X^{n,m}_{k+1}=X^{n,m}_{k},\quad\forall m\not\in\cal{B}_k,\quad n\in [N].
\end{align}
We initialized all particles by drawing independent samples from the Gaussian prior  $p(x)$.

\paragraph{Training (ABP and SR).} As mentioned in Sec.~\ref{sec:gen}, both ABP~\citep{Han2017} and SR~\citep{Nijkamp2020} are variants\footnote{Our implementations of ABP and SR are slight tweaks of their original presentations. In particular, in~\citet{Han2017} where ABP was introduced, the problems considered were small enough that no gradient subsampling was required, while here it is. As for SR, in~\citet{Nijkamp2020}, the authors additionally adaptively set the step size $h$ using a variational optimization approach. We abstain from doing so to simplify the comparison and place PGD, ABP, and SR in as equal footing as possible.} of (\ref{eq:IW2gradalg11},\ref{eq:SOUL})  proposed specifically for training generator networks. Just as for PGD above, for ABP and SR we subsample the gradients and adapt the parameter step sizes using RMSProp. However, ABP and SR use only~\eqref{eq:SOUL}'s final state to approximate the posterior $p_{\theta_{k}}(\cdot|y^{1:M})$  (they approximate it with $\delta_{X_k^N}$);  and so the parameter updates read
\begin{equation}\label{eq:genthetaupdate2}
\theta_{k+1} = \theta_k + h\lambda\Lambda_k\frac{M}{M_{\mathcal{B}}}\left[ \sum_{m\in\cal{B}_{k}}\nabla_\theta\log p_{\theta_k}
(y^{m}|X^{N,m}_k) \right],
\end{equation}
where $M_{\mathcal{B}},\Lambda_k,\lambda$ are as in~\eqref{eq:genthetaupdate}. To update the particles, we run a version of \eqref{eq:SOUL} with gradients subsampled  similarly as in~\eqref{eq:genxupdate}:
\begin{equation}X_{k}^{n+1,m}=X_{k}^{n,m}+h\nabla_x  \log p_{\theta_k}(X_{k}^{n,m},y^m)+ \sqrt{2h} W^{n,m}_k \quad \forall m\in\cal{B}_k,\quad X_{k}^{n+1,m}=X_{k}^{n,m}\quad\forall m\not\in\cal{B}_k,\quad n\in [N-1].\label{eq:genxupdate2}\end{equation}
In the case of ABP, the above chain is `persistent': $X_k^1$ is initialized at $X_{k-1}^N$ for $k>0$. In that of SR, it is not: $X_k^1$ is drawn from the prior $p(x)$ for $k>0$. In both cases, $X_0^1$ is drawn from the prior.

With the above choices, PGD, ABP, and SR all carry a similar computational cost for the same iteration number $K$ and particle number $N$.

\paragraph{Hyperparameters (PGD, ABP, and SR).} We chose the hyper-parameters featuring in (\ref{eq:genthetaupdate}--\ref{eq:genxupdate2}) as follows:
\begin{itemize}
\item \textbf{$\pmb{K}$, $\pmb{N}$.} We found that, modulo some noise, test errors for all three algorithms decrease monotonically with increasing iteration count $K$ and particle number $N$. We chose $K$ large enough that increasing it further lead to no more noticeable improvements in test errors ($K=39,500$ for MNIST and $K=78,250$ for CelebA). We set the particle number $N$ to $10$ which we found to be a good compromise between training times and performance on test errors (larger $N$ values did result in small, but noticeable, decreases in test errors).
\item \textbf{$\pmb{h}$, $\pmb{\lambda}$.} We chose these parameters small enough that no algorithm was on the verge of becoming unstable but large enough that the training was not excessively slow ($h=10^{-3}$, $\lambda=10^{-4}$ for MNIST and $h=10^{-4}$, $\lambda=2.5\times10^{-4}$ for CelebA). We did not observe a significant change in test errors by varying these values by $\pm$ an order of magnitude (the only noticeable effect was that smaller values led to slower training).
\item \textbf{$\pmb{M_{\cal{B}}}$.} In general, we observed that the larger the batch size, the quicker the training. Its value seemed to  affect little the test errors after training. We set the batch size to $128$, which ensured that no virtual memory was required during training while not being excessively small.
\end{itemize}

\paragraph{Training (VAE).} VAEs~\citep{Kingma2014} are variational inference methods where a parametric approximation $q_\phi(x^{1:M})$ to the posterior $p(x^{1:M}|y^{1:M})$ is chosen and training consists of solving~\eqref{eq:vi} with an appropriate optimization algorithm. VAEs use approximations of the sort
\begin{align*}
q_\phi(x^{1:M}|y^{1:M})=\prod_{m=1}^Mq_\phi(x^{m}|y^{m}),\quad\text{where}\quad q_\phi(x^m|y^m)=\cal{N}(x^m;g_\phi^{\text{mean}}(y^m),\textrm{diag}(g_\phi^{\text{var}}(y^m))),
\end{align*}
with $g_\phi^{\text{mean}},g_\phi^{\text{var}}:\r^{d_y}\to\r^{d_x}$ denoting neural networks parametrized by $\phi$ and $\textrm{diag}(g_\phi^{\text{var}}(y^m))$ a diagonal matrix with $g_\phi^{\text{var}}(y^m)$ on its diagonal.  We follow the common choice, e.g.\ ~\citet[App.C.2]{Kingma2014}, of setting 
$$g_\phi^{\text{mean}}=l_\phi^{\text{mean}} \circ g_\phi, \quad g_\phi^{\text{var}}=\textrm{SoftPlus} \circ l_\phi^{\text{var}} \circ g_\phi$$
where $l_\phi^{\text{mean}}$ and $l_\phi^{\text{var}}$ are fully connected linear layers and $g_\phi$ denotes a third network whose parameters are shared across $g_\phi^{\text{mean}}$ and $g_\phi^{\text{var}}$. For $g_\phi$ we use a simple convolutional network with RELU activation functions:
$$g_\phi=l_\phi   \circ mp_\phi \circ \textrm{ReLU} \circ c_\phi \circ mp_\phi \circ \textrm{ReLU} \circ c_\phi,$$
where $l_\phi$ denotes a fully connected layer, $c_\phi$ convolutional layers, and $mp_\phi$ max pooling layers.  In total, the networks $g_\phi^{\text{mean}},g_\phi^{\text{var}}$ involve $1,119,552$ parameters for MNIST and $1,119,840$ for CelebA.

We train the model by simultaneously running RMSprop~\citep{Hinton2012} for the model parameters $\theta$ and Adam~\citep{Kingma2014a} for the variational parameters $\phi$, both with learning rates of $10^{-3}$ and all other values set to their defaults in PyTorch $1.12$'s implementations of RMSprop and Adam (cf.\  \href{https://pytorch.org/docs/stable/generated/torch.optim.RMSprop.html}{here} and \href{https://pytorch.org/docs/stable/generated/torch.optim.Adam.html}{here}). 

\paragraph{Inpainting.}

\begin{figure}
    \centering
    \includegraphics[width=1\linewidth]{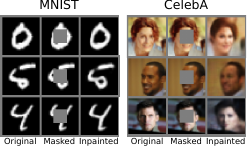}
    \caption{Inpainted images obtained using the generator $f_\theta$ trained with PGD as described in the text.}
    \label{fig:inpaint} 
\end{figure}

We use  generators $f_{\theta_K}$ trained with PGD, ABP, SR, and VAE to recover images $y=(y_i)_{i=1}^{d_y}$ that have been corrupted by masking some of their pixels. To do so, we follow the approach taken in~\citet[Sec.~5.3]{Nijkamp2020}: we search for latent variables that maximize the likelihood of the corrupted image $y_c$,
\begin{equation}\label{eq:recon}
x_{\text{mle}} = \textrm{argmax}_{x\in\mathbb{R}^{dx}} \log p (y_{c}|x)= \textrm{argmin}_{x\in\mathbb{R}^{dx}} \|y_c - f_{\theta_K}(x)\|^2_2=\textrm{argmin}_{x\in\mathbb{R}^{dx}}\sum_{i\not\in\mathcal{M}}[y_{i}-f_{\theta_K}(x)_i]^2,\end{equation}
where $\mathcal{M}$ indexes the masked pixels. Then, we recover the image by mapping $x_{\text{mle}}$ through the generator: $y\approx f_\theta(x_\textrm{mle})$. To (approximately) solve the above we use $4$ randomly initialized runs of Adam~\citep{Kingma2014a}, each a thousand steps long. We set the learning rate adaptively using Pytorch's \href{https://pytorch.org/docs/stable/generated/torch.optim.lr_scheduler.ReduceLROnPlateau.html}{ReduceLROnPlateau} scheduler with an initial learning rate of $1$ and all other Adam parameters set to their Pytorch 1.12 \href{https://pytorch.org/docs/stable/generated/torch.optim.Adam.html}{defaults}. With this approach, and the PGD-trained generator, we obtained the inpaintings shown in Fig.~\ref{fig:inpaint}.

\begin{figure}[t]
  \centering
  \includegraphics[width=1\linewidth]{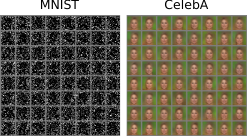}
  \caption{Images synthesized by drawing samples from the Gaussian prior $p(x)$ and mapping them through the generator $f_{\theta_K}$ trained with PGD.}\label{fig:syn_prior}
\end{figure}

\paragraph{Image synthesis.}Regardless of the algorithm we used for training, synthesizing images following the usual approach of drawing latent variables $x$ from the prior $p(x)$ and mapping them through the trained generator $f_{\theta_K}$ gave poor results (e.g.~Fig.~\ref{fig:syn_prior}). This is a known issue for these types of models. For example, as explained in~\citet{Aneja2021}:
\begin{quotation}
``Variational autoencoders (VAEs) are one of the powerful likelihood-based generative models with applications in many domains. However, they struggle to generate high-quality images, especially when samples are obtained from the prior without any tempering. One explanation for VAEs’ poor generative quality is the prior hole problem: the prior distribution fails to match the aggregate approximate posterior. Due to this mismatch, there exist areas in the latent space with high density under the prior that do not correspond to any encoded image. Samples from those areas are decoded to corrupted images.''
\end{quotation}
More specifically, in all four algorithms, we approximate the posterior $p_{\theta_k}(x^{1:M}|y^{1:M})$ using a product-form\footnote{This is not quite true for PGD, ABP, and SR as~(\ref{eq:genthetaupdate}--\ref{eq:genxupdate2}) correlate $(X^{n,m}_k)_{n\in[N],m\in[M]}$ through $\theta_k$. However, for large $M$ and $M_{\cal{B}}$, these correlations are small, and we ignore them here to simplify the discussion.} distribution: 
$$
q_k(dx^{1:M})=\prod_{m=1}^Mq_k^m(dx^m);$$
 where $q_k^m(dx) :=N^{-1}\sum_{n=1}^N\delta_{X^{n,m}_k}(dx)$ for PGD, $q_k^m(dx) :=\delta_{X^{N,m}_k}(dx)$ for ABP and SR, and $q_k^m(dx):=q_{\phi_k}(dx|y^m)$ for VAE. Emulating\footnote{These calculations are only formal in the case of PGD, ABP, and SR because $q_k^1,\dots,q_k^M$ have no densities w.r.t.\ the Lebesgue measure.} the calculations in~\citet{Hoffman2016}, we find that 
\begin{align}\label{eq:priorbound} 
F(\theta_k,q_k)=\sum_{m=1}^M\int\log\left(\frac{q_k^m(x^m)}{p(x^m)p_{\theta_k}(y^m|x^m)}\right)q^m(x^m)dx^m\geq \sum_{m=1}^MKL(q^m_k||p)\geq MKL(q^{agg}_k||p),
\end{align}
where 
$$q^{agg}_k(dx):=\frac{1}{M}\sum_{m=1}^Mq^m(dx)$$ 
denotes the aggregate (approximate) posterior~\citep{Aneja2021}. To derive the rightmost inequality in~\eqref{eq:priorbound} note that
\begin{align*}
&\sum_{m=1}^MKL(q^m_k||p)-MKL(q^{agg}_k||p)=\sum_{m=1}^M\int\log\left(\frac{q^m_k(x)}{q^{agg}_k(x)}\right)q^m_k(x)dx\\
&\qquad\qquad=\sum_{m=1}^M\int\log\left(\frac{q^m_k(x)}{Mq^{agg}_k(x)}\right)q^m_k(x)dx+M\log(M)\\
&\qquad\qquad=M\int\left[\sum_{m=1}^M\log\left(\frac{q^m_k(x)}{Mq^{agg}_k(x)}\right)\frac{q^m_k(x)}{Mq^{agg}_k(x)}\right]q^{agg}_k(x)dx+Mlog(M)
\end{align*}
For each $x$, the term inside the square brackets is the negative entropy of the distribution $\left(\frac{q^m_k(x)}{Mq^{agg}_k(x)}\right)_{m=1}^M$ and, hence, bounded below by $-\log(M)$; and~\eqref{eq:priorbound} follows.

\begin{figure}[t]
  \centering
  \includegraphics[width=1\linewidth]{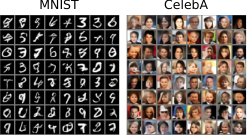}
  \caption{Images synthesized by sampling  a Gaussian approximation of the PGD aggregate posterior.}\label{fig:syn_1}
\end{figure}

\eqref{eq:priorbound}  shows  that the free energy is bounded below by $M$ times the KL divergence between the aggregate posterior $q_k^{agg}$ and the prior $p$. As noted in~\citep{Rosca2018}: 

\begin{quotation}
``VAEs are unable to match the marginal latent posterior (aggregate posterior) to the prior. This will result in a failure to learn the data distribution, and manifests in a discrepancy in quality between samples and reconstructions from the model.''
\end{quotation}
In our experiments, we observed the same phenomenon for PGD, ABP, SR, and VAE. In all four cases, it appears that the model learns to generate qualitatively meaningful images in the regions of the latent space where the aggregate posterior places mass but not in those where the prior does.

To overcome the bottleneck in~\eqref{eq:priorbound} and improve the image generation, a variety of schemes that learn the prior as well as the generator $f_\theta$ have been proposed in the literature (e.g.\ \citet{Tomczak2018,Bauer2019,Klushyn2019,Dai2019,Pang2020,Aneja2021}). We limit ourselves to simply fitting a Gaussian  $\mathcal{N}(\mu,\Sigma)$ to the (trained) aggregate posterior $q_K^{agg}$. For PGD, ABP, and SR, $q_K^{agg}$ is an empirical distribution of the form $J^{-1}\sum_{j=1}^J\delta_{Z^j}$ and we fit $\mathcal{N}(\mu,\Sigma)$ using $q_K^{agg}$'s empirical mean and covariance:
$$\mu:=\frac{1}{J}\sum_{j=1}^JZ^{j},\quad \Sigma:=\frac{1}{J-1}\sum_{j=1}^J(Z^j-\mu)(Z^{j}-\mu)^T.$$
In the case of VAE, we first build an empirical distribution $J^{-1}\sum_{j=1}^J\delta_{Z^j}$ by drawing $10$ samples from each $q^1_k,\dots, q^M_k$, and then proceed as above. Synthesizing images by drawing samples from $\cal{N}(\mu,\Sigma)$ and mapping them through $f_\theta$ then produces substantially higher quality images~(Fig.~\ref{fig:syn_1}). Building more refined approximations of the aggregate posterior further improves the images~(Fig.~\ref{fig:syn_500}).

\begin{figure}[t]
  \centering
  \includegraphics[width=1\linewidth]{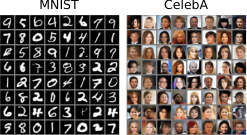}
  \caption{Images synthesized using  by sampling a $500$-component mixture of Gaussian approximation of the PGD aggregate posterior. To fit the mixture, we applied scikit-learn's \href{https://scikit-learn.org/stable/modules/generated/sklearn.mixture.GaussianMixture.html}{default procedure} to $J^{-1}\sum_{j=1}^J\delta_{Z^j}$.}\label{fig:syn_500}
\end{figure}
  
\paragraph{Performance metrics.} To evaluate the performance the trained generators $f_{\theta_K}$ in the inpainting task, we mask and inpaint $1000$ images $y^1,\dots,y^{1000}$ randomly chosen from the test set. For each of these, we solve~\eqref{eq:recon} to obtain matching latent variable vectors $x^1,\dots,x^{1000}$ and inpainted images $f_\theta(x^1),\dots,f_\theta(x^{1000})$. We then compute the latter's mean squared error (averaged over both pixels and test images):
\begin{align*}
MSE = \frac{1}{1000 d_y}\sum_{m=1}^{1000}\sum_{i=1}^{d_y}[y^m_i-f_\theta(x^m)_i]^2.
\end{align*}

To evaluate the performance of the trained generators $f_{\theta_K}$ in the synthesis task, we synthesize $200$ images as described above, randomly pick $200$ images from the test set, and compute the corresponding  Fr\'echet Inception Distance (FID; \cite{Heusel2017}) with the  Inception v3 classifier~\citep{Szegedy2016} between these two ensembles --- we use TorchMetrics's~\citep{Detlefsen2022} implementation of FID. In the case of the greyscale MNIST images, this requires mirroring the image across the three colour channels. 
We recognize that there are conceptual difficulties with this ad hoc approach, especially given that the training data for Inception v3 differs qualitatively from the MNIST images. However, we verified that there is a qualitative (as judged by eye) improvement in image quality associated with increasing FID score computed in this way and felt it sensible to follow this now-established approach for this dataset (e.g.\ see the papers reporting FID scores for MNIST on \href{https://paperswithcode.com/sota/image-generation-on-mnist}{paperswithcode.com}).

\section{THE MEAN-FIELD LIMITS AND THE TIME-DISCRETIZATION BIAS}\label{app:biasdisc}

The mean-field ($N\to\infty$) limits  of PGD's update equations, (\ref{eq:IW2gradalg11},\ref{eq:IW2gradalg12}), are (\ref{eq:gradmf1},\ref{eq:gradmf2}), those of PQN's, (\ref{eq:newtonalg},\ref{eq:IW2gradalg12}), are  (\ref{eq:newtonmf},\ref{eq:gradmf2}), and that of PMGD's, \eqref{eq:W2gradalg}, is \eqref{eq:margradmf}:
\begin{align}
\theta_{k+1} &= \theta_{k} + h\int \nabla_\theta \ell(\theta_k,x)q_k(x)dx,\label{eq:gradmf1}\\ 
X_{k+1}&=X_k+h\nabla_x \ell(\theta_k,X_k)+\sqrt{2h}W_k,\label{eq:gradmf2}\\
\theta_{k+1} &= \theta_{k} - h\left[ \nabla_\theta^2\ell(\theta,x)q_k(x)dx\right]^{-1}\int \nabla_\theta \ell(\theta_k,x)q_k(x)dx,\label{eq:newtonmf}\\
X_{k+1}&=X_k+h\nabla_x \ell(\theta_*(q_k),X_k)+\sqrt{2h}W_k.\label{eq:margradmf}
\end{align}
where, in all cases, $q_k$ denotes $X_k$'s law and, in~\eqref{eq:margradmf} we are assuming that Assumpt.~\ref{ass:Mstep} holds. We can re-write (\ref{eq:gradmf1},\ref{eq:newtonmf}) as
\begin{equation}
\label{eq:mfupdate}\theta_{k+1}=u(\theta_k,q_k),
\end{equation}
where $u:\Theta\times\cal{P}(\cal{X})\to\Theta$ denotes an `update' operator satisfying
\begin{equation}\label{eq:update}\forall (\theta,q)\in\Theta\times\cal{P}(\cal{X}),\quad \nabla_\theta F(\theta,q) =0\enskip \Rightarrow\enskip u(\theta,q)=\theta.\end{equation}
Now, (\ref{eq:gradmf2},\ref{eq:mfupdate})'s joint law $q_k(d\theta,dx)$ satisfies
\begin{equation}\label{eq:mfFP}
q_{k+1}(d\psi,dz)=\int_{\theta,x}q_{k}(d\theta,dx)\delta_{u(\theta,q_k)}(d\psi)K_\theta(x,dz),
\end{equation}
where $K_\theta$ denotes the ULA kernel:
\begin{equation}\label{eq:ULAkernel}K_{\theta}(x,dz)=K_{\theta}(x,z)dz:=\cal{N}\left(z;x+h\nabla_x \ell(\theta,x),2hI_{D_x}\right)dz\quad\forall x\in\cal{X},\enskip\theta\in \Theta.\end{equation}
What we would like is for 
\[\pi(d\theta,dx):=\delta_{\theta_*}(d\theta)p_{\theta_*}(dx|y)\]
to be a fixed point of \eqref{eq:mfFP} whenever $\theta_*$ is a maximizer of $\theta\mapsto p_\theta(y)$ and $p_{\theta_*}(dx|y)$ is the corresponding posterior. However, applying Thrm.~\ref{thrm:statpoints} and~\eqref{eq:update}, we find that
\begin{align*}
\int_{\theta,x}\pi(d\theta,dx)\delta_{u(\theta,\pi(dx))}(d\psi)K_\theta(x,dz)&=\int_{\theta,x}\delta_{\theta_*}(d\theta)p_{\theta_*}(dx|y)\delta_{u(\theta,p_{\theta_*}(dx|y))}(d\psi)K_\theta(x,dz)\\
&=\delta_{u(\theta_*,p_{\theta_*}(dx|y))}(d\psi)\int_x p_{\theta_*}(dx|y)K_{\theta_*}(x,dz)\\
&=\delta_{\theta_*}(d\psi)\int_x p_{\theta_*}(dx|y)K_{\theta_*}(x,dz).
\end{align*}
Hence, $\pi$ is a fixed point of~\eqref{eq:mfFP} if and only $p_{\theta_*}(dx|y)$ is a stationary distribution of $K_{\theta_*}(x,dz)$:
\begin{equation}
\label{eq:ULAstat}p_{\theta_*}(z|y)=\int p_{\theta_*}(x|y) K_{\theta_*}(x,z)dx
\end{equation}
However, we know this is not the case because the ULA kernel is biased, e.g.\ see \cite{Roberts1996}.

The case of~\eqref{eq:margradmf} is similar: $q_k$ satisfies
\[q_{k+1}(z)=\int q_{k}(x)K_{\theta_*(q_k)}(x,z)dx.\]
Given that $\theta_*(p_{\theta_*}(\cdot|y))=\theta_*$ (Thrm.~\ref{thrm:statpoints}), we have that $p_{\theta_*}(\cdot|y)$ is a fixed point of the above if and only if \eqref{eq:ULAstat} holds, which it does not.

An obvious way to get~\eqref{eq:ULAstat} to hold is replacing the ULA kernel $K_\theta$ with a kernel whose stationary distribution is the posterior $p_\theta(\cdot|y)$ (e.g.~by adding an accept-reject step to the ULA kernel). This removes the time-discretization bias in the mean-field regime. In App.~\ref{app:MH}, we will see another (slightly less obvious) way to do so.
\subsection{Rates of convergence for Ex.~\ref{ex:hier}}\label{app:hierconvrates}
To investigate the rate of convergence of PGD, PQN, and PMGD in the case of the toy hierarchical model (Ex.~\ref{ex:hier}), we examine the mean-field limits~(\ref{eq:gradmf1}--\ref{eq:margradmf}) which respectively read:
\begin{align}
\theta_{k+1} &= \theta_{k} + hD_x[\nu_k-\theta_k],\label{eq:gradmf1hier}\\ 
X_{k+1}&=X_k+h[y+\theta_k\bm{1}_{D_x}-2X_k]+\sqrt{2h}W_k,\label{eq:gradmf2hier}\\
\theta_{k+1} &= \theta_{k} + h\left[\nu_k-\theta_k\right],\label{eq:newtonmfhier}\\
X_{k+1}&=X_k+h\left[y+\nu_k\bm{1}_{D_x}-2X_k\right]+\sqrt{2h}W_k.\label{eq:margradmfhier}
\end{align}
where, in all cases,  $\nu_k:=\Ebb{\bm{1}_{D_x}^TX_k/D_x}$ denotes the mean of the average of $X_k$'s components. Left-multiplying~(\ref{eq:gradmf2hier},\ref{eq:margradmfhier}) by $D_x^{-1}\bm{1}_{D_x}^T$ and taking expectations in, we respectively find that
\begin{align}
\nu_{k+1}&=\nu_k+h[\bm{1}_{D_x}^Ty/D_x+\theta_k-2\nu_k]\label{eq:gradmf2hiermu}\\
\nu_{k+1}&=\nu_k+h\left[\bm{1}_{D_x}^Ty/D_x-\nu_k\right].\label{eq:margradmfhiermu}
\end{align}
Note that both (\ref{eq:gradmf1hier},\ref{eq:gradmf2hiermu}) and (\ref{eq:newtonmfhier},\ref{eq:gradmf2hiermu}) have a unique fixed point $(\theta_\infty,\nu_\infty)$ given by $\theta_\infty=\nu_\infty=\theta_*$, where $\theta_*= \bm{1}_{D_x}^Ty/D_x$ denotes the marginal likelihood's unique maximizer (cf.\ App.~\ref{app:hier}). Re-writing (\ref{eq:gradmf1hier},\ref{eq:gradmf2hiermu}) and (\ref{eq:newtonmfhier},\ref{eq:gradmf2hiermu}) in matrix-vector notation,
\begin{align*}
\begin{bmatrix}\theta_{k+1}\\\nu_{k+1}\end{bmatrix} 
&=A_h^G\begin{bmatrix}\theta_{k}\\\nu_{k}\end{bmatrix} +\begin{bmatrix}0\\h\bm{1}_{D_x}^Ty/D_x\end{bmatrix}\quad\text{where}\quad A_h^G:=\begin{bmatrix}1-hD_x&hD_x\\h&1-2h\end{bmatrix},\\
\begin{bmatrix}\theta_{k+1}\\\nu_{k+1}\end{bmatrix} 
&=A_h^N\begin{bmatrix}\theta_{k}\\\nu_{k}\end{bmatrix} 
+\begin{bmatrix}0\\h\bm{1}_{D_x}^Ty/D_x\end{bmatrix}\quad\text{where}\quad A_h^N:=\begin{bmatrix}1-h&h\\h&1-2h\end{bmatrix},
\end{align*}
then clarifies that $\theta_k$'s speed of convergence to $\theta_*$ is $\cal{O}(\rho_{G,h}^k)$ in the case of (\ref{eq:gradmf1hier},\ref{eq:gradmf2hiermu}) and $\cal{O}(\rho_{N,h}^k)$ in that of (\ref{eq:newtonmfhier},\ref{eq:gradmf2hiermu}), where $\rho_{G,h}$ denotes $A_h^G$'s spectral radius and $\rho_{N,h}$ denotes $A_h^N$'s.  After some quick algebra, we find that
\[\rho_{G,h}=\max\left\{\mmag{1-h\left(1+\frac{D_x}{2}\pm\frac{\sqrt{D^2_x+4}}{2}\right)}\right\},\quad \rho_{N,h}=\max\left\{\mmag{1-h\left(\frac{3}{2}\pm\frac{\sqrt{5}}{2}\right)}\right\}.\]
As for PMGD's mean-field limit~\eqref{eq:margradmfhier}, recall that we use $\theta_*(q_k)=D_x^{-1}\bm{1}^T_{D_x}\int xq_k(x)dx=\nu_k$ to estimate $\theta_*$, where $q_k$ denotes $X_k$'s law, and note that $\nu_k$ in~\eqref{eq:margradmfhiermu} converges to $\theta_*$ at a   rate of $\cal{O}(\rho_{M,h}^k)$, where  $\rho_{M,h}:=\mmag{1-h}$. Two observations are in order:
\begin{itemize}
\item\textbf{Dependence on $D_x$.} In the case of PGD, the radius $\rho_{G,h}$ depends on the dimension $D_x$ of the latent space. For large dimensions, $\rho_{G,h}\approx \mmag{1-h(1+D_x)}$ implying that PGD is  stable only for very small step sizes (roughly, those smaller than $2/(1+D_x)$), which explains the need for the tweak~\eqref{eq:IW2gradalg2}. On the other hand, the radii for PQN and PMGD are independent of $D_x$. For these reasons, tuning the step size for PGD proves challenging and delicately depends on $D_x$, while tuning it for PQN and PMGD is straightforward and does not require taking $D_x$ into account.
\item\textbf{Relative speeds.} For all step sizes, $\rho_{G,h}$ and $\rho_{N,h}$ are both bounded below by $\rho_{M,h}$ (see Fig.~\ref{fig:radii}), implying that PMGD always converges faster than PGD and PQN, at least in the mean-field regime.  It is not necessarily the case that  $\rho_{N,h}\leq \rho_{G,h}$: for small step sizes, this fails to hold. However, the range of $h$s for which $\rho_{N,h}> \rho_{G,h}$ decreases precipitously with the latent space dimension $D_x$ (Fig.~\ref{fig:radii}). Hence, we expect PQN to outperform PGD unless we use very small step sizes (likely, those too small to achieve any reasonable convergence speed). 
\end{itemize}

\begin{figure}[h!]
    \begin{center}
    \includegraphics[width=0.9\textwidth]{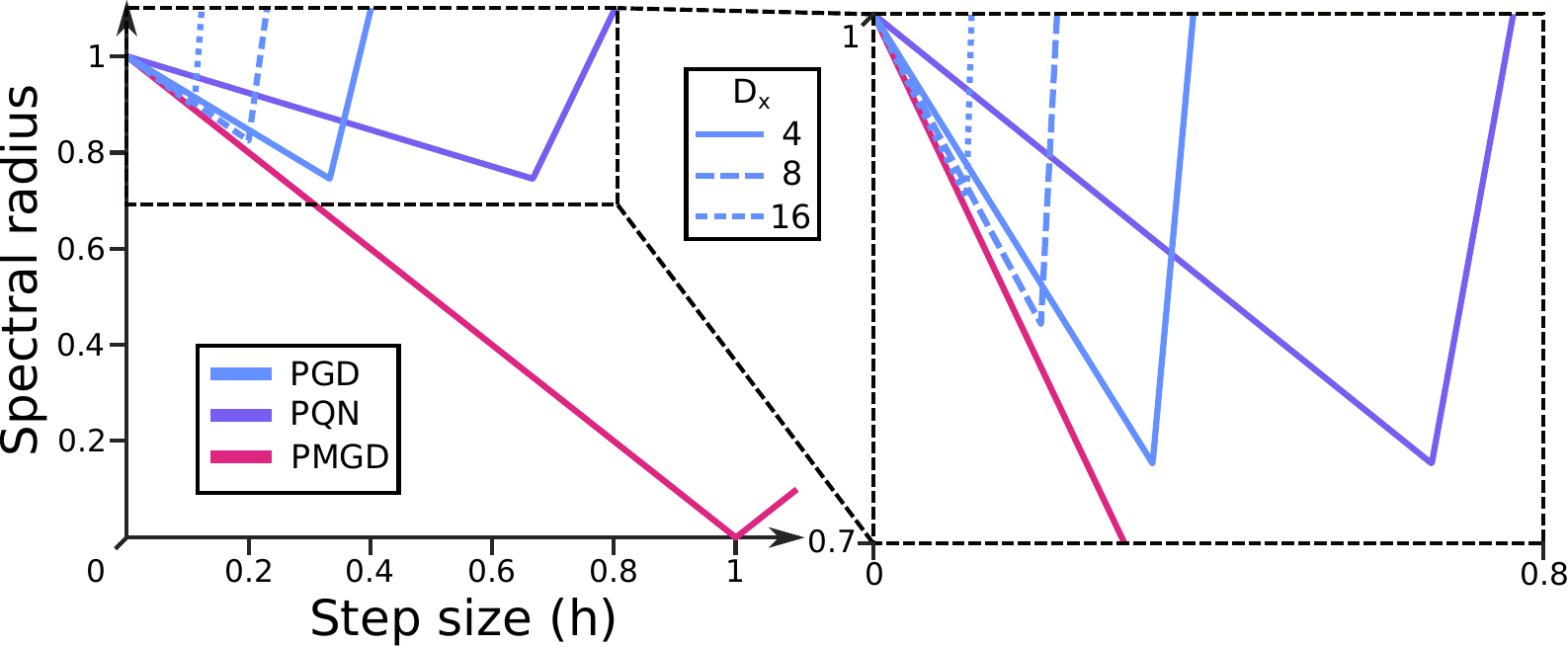}
    \end{center}
\caption{Spectral radii $\rho_{G,h}$ (PGD), $\rho_{N,h}$ (PQN), and $\rho_{M,h}$ (PMGD) as a function of step size $h$.}\label{fig:radii}
\end{figure}

Consider now the matter of choosing the step size $h$ that achieves the fastest convergence for each algorithm. That is, the $h$ that minimizes the corresponding radius. It is straightforward to verify that
\begin{align*}
h_G=\frac{2}{2+D_x},\enskip h_N=\frac{2}{3},\enskip  h_M=1, \quad\text{with} \quad \rho_{G}=\frac{\sqrt{D_x^2+4}}{D_x+2},\enskip\rho_{N}=\frac{\sqrt{5}}{3},\enskip\rho_{M}=0,
\end{align*}
where $h_G,h_N,h_M$ respectively denote the optimal  step sizes for PGD, PQN, and PMGD, and $\rho_G,\rho_N,\rho_M$ the corresponding radii. Note that,  $\rho_{G}\geq \rho_{N}\geq \rho_{M}$ whenever $D_x\geq 4$. Hence, except for very low dimensional cases with well-tuned step sizes, PQN will outperform PGD. PMGD  will always outperform either. Moreover, $\rho_G\to1$ as $D_x\to\infty$ and, hence, PGD's convergence speed degenerates with increasing latent space dimension regardless of the step size $h$ that we use. That of PQN does not and, if we tune the algorithm well, will be $\cal{O}([2/3]^k)$ for all $D_x$. Setting $h:=1$ in~\eqref{eq:margradmfhiermu}, we find that,  regardless of $D_x$, PMGD's parameter estimates will converge in a single step, at least in the mean-field regime. Of course, this fails to materialize when we run the algorithm in practice because the noise in~\eqref{eq:margradmfhier} exacts an $\cal{O}(1/\sqrt{KND_x})$ error in our time-averaged estimates, which is what we see in Fig.~\ref{fig:hier}\textbf{c} (similar considerations also apply to PGD and PQN in stationarity).  Lastly, we ought to mention that we have observed these behaviours replicated across other numerical experiments, hinting that they might hold more widely. However, until an analysis establishing so becomes available, we only count this as anecdotal evidence.
\section{THE CONTINUUM LIMITS AND THE FINITE-POPULATION-SIZE BIAS}\label{app:biasfinitepop}

For the sake of simplicity, suppose that Assumpt.~\ref{ass:Mstep} holds. The continuum limits ($h\to0$) of PGD's update equations, (\ref{eq:IW2gradalg11},\ref{eq:IW2gradalg12}), are (\ref{eq:gradcl1},\ref{eq:gradcl2}), those of PQN's, (\ref{eq:newtonalg},\ref{eq:IW2gradalg12}), are  (\ref{eq:newtoncl},\ref{eq:gradcl2}), and that of PMGD's, \eqref{eq:W2gradalg}, is \eqref{eq:margradcl}:
\begin{align}
d\theta_t &= \frac{1}{N}\left[\sum_{n=1}^N \nabla_\theta \ell(\theta_t, X_t^n)\right]dt,\label{eq:gradcl1}\\ 
dX_t^n&=\nabla_x \ell(\theta_t, X_t^n)dt+\sqrt{2}dW_t^n\quad\forall n\in[N],\label{eq:gradcl2}\\
d\theta_t &= -\left[\sum_{n=1}^{N} \nabla_\theta^2\ell(\theta_t,X_t^n)\right]^{-1}\left[\sum_{n=1}^N \nabla_\theta \ell(\theta_t, X_t^n)\right]dt,\label{eq:newtoncl}\\
dX_t^n&=\nabla_x \ell(\theta_*(X_t^{1:N}), X_t^n)dt +\sqrt{2}dW_t^n\quad\forall n\in[N].\label{eq:margradcl}
\end{align}
As shown below, \eqref{eq:margradcl}'s law satisfies 
\begin{equation}\label{eq:marginalclfp}
\dot{q}_t(x^{1:N})=\nabla_{x^{1:N}}\cdot\left[ q_t(x^{1:N})\nabla_{x^{1:N}} \log\left(\frac{q_t(x^{1:N})}{\rho_N(x^{1:N})}\right)\right],
\end{equation}
where $\rho_N$ is the (unnormalized) distribution on $\cal{X}^N$ given by
\begin{equation}\label{eq:rhoN}\rho_N(x^{1:N}):=\prod_{n=1}^Np_{\theta_*(x^{1:N})}(x^n,y)\end{equation}
Clearly, the (unique) normalized fixed point of \eqref{eq:marginalclfp} (i.e.\ the stationary distribution of~\eqref{eq:margradcl}) is 
\begin{equation}\label{eq:piN}\pi_N(x^{1:N}):=\frac{\rho_N(x^{1:N})}{\cal{Z}_N}\quad\text{where}\quad\cal{Z}_N:=\int\rho_N(x^{1:N})dx^{1:N}.\end{equation}
As also shown below, (\ref{eq:gradcl1},\ref{eq:gradcl2})'s law satisfies
\begin{align}\label{eq:gradclfp}
\dot{q}_t(\theta,x^{1:N})=&\nabla_{x^{1:N}}\cdot\left[ q_t(\theta,x^{1:N})\nabla_{x^{1:N}} \log\left(\frac{q_t(\theta,x^{1:N})}{\prod_{n=1}^Np_{\theta}(x^{n},y)}\right)\right]\\
&-\nabla_\theta\cdot  \left[\frac{q_t(\theta,x^{1:N})}{N}\sum_{n=1}^N \nabla_\theta \ell(\theta,x^n)\right],\nonumber
\end{align}
where, if necessary, the above should be interpreted weakly. Because $\sum_{n=1}^N \nabla_\theta \ell(\theta_*(x^{1:N}),x^n)=0$ by $\theta_*(x^{1:N})$'s definition, it is easy to check that 
\[\pi_N(d\theta,dx^{1:N})=\delta_{\theta_*(x^{1:N})}(d\theta)\pi_N(x^{1:N})dx^{1:N}\]
is a fixed point of~\eqref{eq:gradclfp} (i.e.\ a stationary distribution of~(\ref{eq:gradcl1},\ref{eq:gradcl2})). Similar manipulations show that the above is also a   stationary distribution of~(\ref{eq:gradcl2},\ref{eq:newtoncl}). Our algorithms use the empirical distribution of the particles to approximate the posterior. Hence, for the estimates they produce to be `unbiased', it would have to be the case that
\[\int \left(\theta_*(x^{1:N}),\frac{1}{N}\sum_{n=1}^N\delta_{x^n}(dx)\right)\pi_N(dx^{1:N})=(\theta_*,p_{\theta_*}(dx|y)),\]
for some stationary point $\theta_*$ of $\theta\mapsto p_\theta(y)$. However, because $\theta_*(x^{1:N})$ is invariant to permutations of $x^{1:N}$'s components, $\pi_N$'s definition in~(\ref{eq:rhoN},\ref{eq:piN}) implies that its marginals $\pi_N(dx^1),\dots,\pi_N(dx^N)$ all equal the same distribution, $\mu_N$ on $\cal{X}$. Hence,
\begin{align*}\int \left(\frac{1}{N}\sum_{n=1}^N\delta_{x^n}(dx)\right)\pi_N(dx^{1:N})&=\frac{1}{N}\sum_{n=1}^N\int \delta_{x^n}(dx)\pi_N(dx^n)=\frac{1}{N}\sum_{n=1}^N\int \delta_{x^n}(dx)\mu_N(dx^{n})\\
&=\frac{1}{N}\sum_{n=1}^N\int \mu_N(dx)=\mu_N(dx).\end{align*}
In summary, for our algorithms to yield unbiased estimates, it would need to be the case that
\[\left(\int \theta_*(x^{1:N})\pi_N(dx^{1:N}),\mu_N(dx)\right)=(\theta_*,p_{\theta_*}(dx|y)).\]
It is easy to find examples in which the above fails to hold (see, for instance, App.~\ref{app:hierfinitepopbias}). 

\begin{proof}[Proof of \eqref{eq:marginalclfp}]\eqref{eq:margradcl}'s Fokker-Planck equation (e.g. see~\citet{Chaintron2022}) reads
\begin{align}
\dot{q}_t(x^{1:N})=&-\sum_{n=1}^N\nabla_{x^{n}} \cdot[q_t(x^{1:N})\nabla_{x^n}\ell(\theta_*(x^{1:N}),x^n)]+\sum_{n=1}^N\nabla_{x^{n}}\cdot\nabla_{x^{n}}q_t(x^{1:N}).\label{eq:ndsay8fbwna8f}
\end{align}
But
\begin{align}
-\nabla_{x^{1:N}}\cdot\left[ q_t(x^{1:N})\nabla_{x^{1:N}} \log\left(\frac{\rho_N(x^{1:N})}{q_t(x^{1:N})}\right)\right]=-\sum_{n=1}^N\nabla_{x^{n}}\cdot\left[ q_t(x^{1:N})\nabla_{x^{n}} \log\left(\frac{\rho_N(x^{1:N})}{q_t(x^{1:N})}\right)\right],\label{eq:ndsay8fbwna8f2}
\end{align}
and, using $\rho_N(x^{1:N})$'s definition in~\eqref{eq:rhoN},
\begin{align}
\nabla_{x^{n}} \log\left(\frac{\rho_N(x^{1:N})}{q_t(x^{1:N})}\right)=&\nabla_{x^{n}} \log(\rho_N(x^{1:N}))-\nabla_{x^{n}}\log(q_t(x^{1:N}))\nonumber\\
=&\nabla_{x^{n}} \ell(\theta_*(x^{1:N}),x^n) +\sum_{m=1}^N\nabla_\theta \ell(\theta_*(x^{1:N}),x^m)  \cdot \nabla_{x^m}\theta_*(x^{1:N})\nonumber\\
&-\frac{\nabla_{x^{n}}q_t(x^{1:N})}{q_t(x^{1:N})}.\label{eq:ndsay8fbwna8f3}
\end{align}
But, $\theta_*(x^{1:N})$'s definition implies that 
\[\nabla_{x^1}\theta_*(x^{1:N})=\dots=\nabla_{x^N}\theta_*(x^{1:N}),\qquad \sum_{m=1}^N\nabla_\theta \ell(\theta_*(x^{1:N}),x^m)=0.\]
Hence, the middle term in \eqref{eq:ndsay8fbwna8f3}'s RHS equals zero and \eqref{eq:marginalclfp} follows from (\ref{eq:ndsay8fbwna8f},\ref{eq:ndsay8fbwna8f2}).
\end{proof}

\begin{proof}[Proof of \eqref{eq:gradclfp}] This is straightforward: (\ref{eq:gradcl1},\ref{eq:gradcl2})'s Fokker-Planck equation (e.g. see~\cite{Chaintron2022}) reads
\begin{align*}
\dot{q}_t(\theta,x^{1:N})=&- \nabla_\theta \cdot \left[\frac{q_t(\theta,x^{1:N})}{N}\sum_{n=1}^N \nabla_\theta \ell(\theta,x^n)\right]-\sum_{n=1}^N\nabla_{x^{n}} \cdot[q_t(\theta,x^{1:N})\nabla_{x^n} \ell(\theta,x^n)]\\
&+\sum_{n=1}^N\nabla_{x^{n}}\cdot\nabla_{x^{n}}q_t(\theta,x^{1:N}).
\end{align*}
But,
\begin{align*}
&\nabla_{x^{1:N}}\cdot\left[ q_t(\theta,x^{1:N})\nabla_{x^{1:N}} \log\left(\frac{\prod_{n=1}^Np_{\theta}(x^{n},y)}{q_t(\theta,x^{1:N})}\right)\right]\\
&=\sum_{n=1}^N\nabla_{x^{n}}\cdot\left[ q_t(\theta,x^{1:N})\nabla_{x^{n}} \log\left(\frac{\prod_{n=1}^Np_{\theta}(x^{n},y)}{q_t(\theta,x^{1:N})}\right)\right]\\
&=\sum_{n=1}^N\nabla_{x^{n}}\cdot \left[ q_t(\theta,x^{1:N})\left(\nabla_{x^{n}} \ell(\theta,x^n)-\frac{\nabla_{x^{n}}q_t(\theta,x^{1:N})}{q_t(\theta,x^{1:N})}\right)\right]\\
&=\sum_{n=1}^N\nabla_{x^{n}}\cdot [ q_t(\theta,x^{1:N})\nabla_{x^{n}}\ell(\theta,x^n)]+\sum_{n=1}^N\nabla_{x^{n}}\cdot \nabla_{x^{n}}q_t(\theta,x^{1:N}).
\end{align*}
\end{proof}

\subsection{Continuum limits for Ex.~\ref{ex:hier}}\label{app:hierfinitepopbias}

By~\eqref{eq:hierthetaopt} and $\rho_N$'s definition in~\eqref{eq:rhoN}, 
\begin{align*}
\rho_N(x^{1:N})&= \prod_{n=1}^N\exp\left(-\frac{1}{2}\norm{y-x^n}^2-\frac{1}{2}\norm{x^n-\frac{\bm{1}^T_{ND_x}x^{1:N}}{ND_x}\bm{1}_{D_x}}^2\right)\\
&=\exp\left(-\frac{1}{2}\norm{y^{1:N}-x^{1:N}}^2-\frac{1}{2}\norm{x^{1:N}-\frac{\bm{1}^T_{ND_x}x^{1:N}}{ND_x}\bm{1}_{ND_x}}^2\right),
\end{align*}
where $y^{1:N}$ stacks $N$ copies of $y$. Applying the expressions in  \citet[p.\ 92]{Bishop2006} and the Sherman–Morrison formula, we find that
\[
\pi_N(x^{1:N})=\frac{\rho_N(x^{1:N})}{\int \rho_N(x^{1:N})dx^{1:N}}=\cal{N}\left(x^{1:N};\frac{1}{2}\left(y^N+\frac{\bm{1}^T_{D_x}y}{D_x}\bm{1}_{ND_x}\right),\frac{1}{2}\left(I_{ND_x}+\frac{\bm{1}_{ND_x}\bm{1}^T_{ND_x}}{ND_x}\right)\right);\]
whose marginals are equal to
\[\mu_N(x)=\cal{N}\left(x;\frac{1}{2}\left(y+\frac{\bm{1}^T_{D_x}y}{D_x}\bm{1}\right),\frac{1}{2}\left(I_{D_x}+\frac{\bm{1}_{D_x}\bm{1}^T_{D_x}}{ND_x}\right)\right).\]
Comparing with~\eqref{eq:hieroptimum}, we see that the covariance matrix is slightly off with a $\cal{O}(N^{-1})$ error. However, due to the linearity in the model, there is no bias in the $\theta$ estimates:
\begin{align*}
\int \theta_*(x^{1:N}) \pi_N(x^{1:N})dx^{1:N}&=\frac{\bm{1}^T_{ND_x}\int x^{1:N}\pi_N(x^{1:N})dx^{1:N}}{ND_x}\\
&=\frac{1}{2ND_x}\left(\bm{1}^T_{ND_x}y^N+\frac{\bm{1}^T_{ND_x}\bm{1}_{ND_x}\bm{1}^T_{D_x}y}{D_x}\right)=\frac{\bm{1}^T_{D_x}y}{D_x}=\theta_*.
\end{align*}

\section{METROPOLIS-HASTINGS METHODS}\label{app:MH}
As mentioned at the end of App.~\ref{app:biasdisc}, one fairly obvious way to try to remove the bias (B1, Sec.~\ref{sec:pgd}) from the estimates produced by PGD, PQN, and PMGD is to replace the ULA kernels with `exact' kernels whose stationary distributions coincide with the posteriors (e.g.\ by adding an accept-reject step to each individual particle update). Here, we consider other, slightly less obvious and (to the best of our knowledge) novel extensions of the Metropolis-Hastings algorithm (e.g.\ see \citet{Andrieu2003}) that also tackle (S1,2 in Sec.~\ref{sec:intro}). While these methods need not necessarily be associated with an optimization routine, their comprehension is also aided by viewing  (S1,2) as a joint problem over $\theta$ and $q$.   
The methods have one practical downside that limits their scalability: similar to standard Metropolis-Hastings algorithms (e.g.\ see \citet{Beskos2013,Vogrinc2022,Kuntz2018,Kuntz2019} and references therein), the acceptance probability degenerates with increasing latent variable dimensions $D_x$ and the particle numbers $N$. This, in turn, forces us to choose small step sizes $h$ for large $D_x$ and $N$, which leads to slow convergence. This is why we focused on the `unadjusted' methods in the main text rather than the Metropolized ones in this appendix.

\subsection{Marginal variants}\label{sec:marginalMH}

\begin{algorithm}[t]
\begin{algorithmic}[1]
\STATE{\textit{Initial conditions:} $X^{1:N}_0:=(X^1_0,\dots,X_0^N)$. }
\FOR{$k=0,\dots, K-1$}
\STATE{\textit{Propose:}  draw $Z^{1:N}=(Z^1,\dots,Z^N)$  from $K_N$.}
\STATE{\textit{Generate uniform R.V.:}  draw $U_k$  from the uniform distribution on $[0,1]$.}
\IF{If $U_k\leq a(X^{1:N}_k,Z^{1:N})$, with $a(\cdot,\cdot)$ as in~\eqref{eq:acceptprob}}
\STATE{\textit{Accept:} set $X_{k+1}^{1:N}:=Z^{1:N}$.}
\ELSE
\STATE{\textit{Reject:} set $X_{k+1}^{1:N}:=X_{k}^{1:N}$.}
\ENDIF
\ENDFOR
\end{algorithmic}
 \caption{The marginal MH method.}
 \label{alg:mmh}
\end{algorithm}

Suppose  that Assumpt.~\ref{ass:Mstep} holds and, for the sake of discussion, that the marginal likelihood $\theta\mapsto p_\theta(y)$ has a unique maximizer $\theta_*$. Notice that the entire particle system $(X_k^{1:N})_{k=0}^\infty:=(X_k^1,\dots,X_k^N)_{k=0}^\infty$ generated by PMGD~\eqref{eq:W2gradalg} is a Markov chain taking values in $\cal{X}^N$. Its kernel is given by
\[K_N(x^{1:N},z^{1:N}):=\prod_{n=1}^NK_{\theta_*(x^{1:N})}(x^n,z^n),\]
where $K_\theta$ denotes the ULA kernel in~\eqref{eq:ULAkernel}. (The precise form of $K_\theta$ is immaterial to the ensuing discussion as long as, for each $\theta$, $K_\theta$ is a Markov kernel on $\cal{X}$.) Ideally,   $K_N$'s stationary distribution would be  $\prod_{n=1}^Np_{\theta_*}(x^n|y)$ but   (B1,2 in Sec.~\ref{sec:pgd}) preclude it. Correcting for this using an accept-reject step requires evaluating $\prod_{n=1}^Np_{\theta_*}(x^n,y)$, which we cannot do because $\theta_*$ is unknown. We can, however, evaluate $\rho_N(x^{1:N})$ in~\eqref{eq:rhoN} and instead add an accept-reject step with acceptance probability
\begin{equation}\label{eq:acceptprob}a_N(x^{1:N},z^{1:N}):=1\wedge\left(\frac{\rho_N(z^{1:N})K_N(z^{1:N},x^{1:N})}{\rho_N(x^{1:N})K_N(x^{1:N},z^{1:N})}\right).\end{equation}
Running Alg.~\ref{alg:mmh}, we then obtain a chain  $(X_k^{1:N})_{k=0}^\infty$ whose stationary distribution is given by $\rho_N(x^{1:N})$'s normalization, $\pi_N(x^{1:N})$ in~\eqref{eq:piN}. Under Assumpt.~\ref{ass:Mstep}, $\pi_N(x^{1:N})$ is the unique fixed point of the continuum limits of PGD, PQN, and PMGD, see App.~\ref{app:biasfinitepop}. In other words, by imposing the accept-reject step, we have removed the (B1, Sec.~\ref{sec:pgd}) source of bias (see Fig.~\ref{fig:hierbias}c).

The other source, (B2, Sec.~\ref{sec:pgd}) due to the finite population size, remains, but it can be mitigated by growing $N$. In particular, by its definition, $\theta_*(x^{1:N})$ is invariant to permutations $x^{1:N}$'s components. Hence, the components of any vector $(X^1,\dots,X^N)$ drawn from $\pi_N$ are exchangeable  (similarly  for $(X_k^1,\dots,X_k^N)$ in Alg.~\ref{alg:mmh}). By chaos (e.g.~\cite{Chaintron2022,Hauray2014}), we expect that, under appropriate technical conditions, there exists a distribution $\pi$ in $\cal{P}(\cal{X})$ to which all of $\pi_N$'s marginals converge. It  should  follow that, under $\pi_N$,
\[
\frac{1}{N}\sum_{n=1}^N\delta_{x^n}\approx  \pi\quad\Rightarrow\quad \theta_*(x^{1:N})=\theta_*\left(\frac{1}{N}\sum_{n=1}^N\delta_{x^n}\right)\approx \theta_*(\pi),
\]
with the above holding exactly in the $N\to\infty$ limit. Marginalising (\ref{eq:rhoN},\ref{eq:piN}) and taking limits we would then find that $\pi(x) \propto p_{\theta_*(\pi)}(x,y)$. In other words, $\pi$ satisfies the first order optimality condition for $F_*$ in Thrm.~\ref{thrm:statpointsmar}:  $\nabla F_*(\pi)=0$. It then follows from $\theta_*(q)$'s definition and Thrm.~\ref{thrm:statpointsmar}, that $\theta_*(\pi)$ is a stationary point of $\theta\mapsto p_\theta(y)$ and $\pi$ is the corresponding posterior $p_{\theta_*(\pi)}(\cdot|y)$. While we yet lack rigorous statements formalizing this discussion, it is easy to verify that it holds true for analytically tractable models (e.g.\ App.~\ref{app:hierfinitepopbias}), and our numerical experiments seem to corroborate it further (e.g.~Fig.~\ref{fig:hierbias}c).
\subsection{Joint variants}\label{sec:MHjoint}

We can also mitigate (B1, Sec.~\ref{sec:pgd}) for PGD (Alg.~\ref{alg:pgd}) and PQN (Alg.~\ref{alg:pqn}) using a population-wide accept-reject step along the lines of that in App.~\ref{sec:marginalMH}. To begin, note that these algorithms are special cases of
\begin{align}
\theta_{k+1}&=u(\theta_k,q_k^N)\quad\text{with}\quad q_k^N:=\frac{1}{N}\sum_{n=1}^N\delta_{X_k^n},\label{eq:jmh1}\\
X_{k+1}^{1:N}&\sim K_N^{\theta_k}(X_{k}^{1:N},\cdot)\quad\text{with}\quad K_N^\theta(x^{1:N},z^{1:N}):=\prod_{n=1}^NK_{\theta}(x^n,z^n),\label{eq:jmh2}
\end{align}
where $K_\theta$ denotes a Markov kernel on $\cal{X}$ for each $\theta$ in $\Theta$ (in particular, the ULA kernel in~\eqref{eq:ULAkernel}, but this is once again unimportant) and $u$ denotes an update operator satisfying~\eqref{eq:update}.
Clearly, $(\theta_k,X_k^1,\dots,X_k^N)_{k=0}^\infty$ forms a Markov chain with transition kernel
\[K_N((\theta,x^{1:N}),(d\psi,dz^{1:N}))=\delta_{u\left(\theta,\bar{\delta}_{x^{1:N}}\right)}(d\psi)K_N^{\theta}(x^{1:N},dz^{1:N})\quad\text{where}\quad \bar{\delta}_{x^{1:N}}:=\frac{1}{N}\sum_{n=1}^N\delta_{x^n}.\]
Emulating our steps in App.~\ref{sec:marginalMH}, we impose an accept-reject step with acceptance probability
\begin{equation}\label{eq:acceptprobjoint}a_N((x^{1:N},\theta),(z^{1:N},\psi)):=1\wedge\left(\frac{K^\psi_N(z^{1:N},x^{1:N})}{K^\theta_N(x^{1:N},z^{1:N})}\prod_{n=1}^N\frac{p_{\psi}(z^n,y)}{p_{\theta}(x^n,y)}\right),\end{equation}
so obtaining Alg.~\ref{alg:jmh}.

\begin{algorithm}[t]
\begin{algorithmic}[1]
\STATE{\textit{Initial conditions:} $\theta_0$ and $X^{1:N}_0:=(X^1_0,\dots,X_0^N)$. }
\FOR{$k=0,\dots, K-1$}
\STATE{\textit{Propose:} set $\psi:=u\left(\theta_k,N^{-1}\sum_{n=1}^N\delta_{X_k^n}\right)$ and  draw $Z^{1:N}$  from $K_N^{\theta_k}$.}
\STATE{\textit{Generate uniform R.V.:} draw $U_k$ independently from the uniform distribution on $[0,1]$.}
\IF{$U_k\leq a((\theta_k,X^{1:N}_k,(\psi,Z^{1:N}))$, with $a(\cdot,\cdot)$ as in~\eqref{eq:acceptprobjoint}}
\STATE{\textit{Accept:} set $\theta_{k+1}:=\psi$ and $X_{k+1}^{1:N}:=Z^{1:N}$.}
\ELSE
\STATE{\textit{Reject:} set $\theta_{k+1}:=\theta_k$ and $X_{k+1}^{1:N}:=X_{k}^{1:N}$.}
\ENDIF
\ENDFOR
\end{algorithmic}
 \caption{The joint MH method.}
 \label{alg:jmh}
\end{algorithm}

We have so far failed to obtain analytical expressions for the resulting chain's stationary distributions. However, it is straightforward to find heuristic arguments suggesting that,  as $N\to\infty$, the $(\theta,x^n)$-marginal,  for any $n$, of these distributions approaches measures of the form $\delta_{\theta_*}(d\theta)p_{\theta_*}(x^n|y)dx^n$,  where $\theta_*$ is a stationary point of the marginal likelihood. In particular, note that the i.i.d.\ structure in $K_N^\theta$'s definition ensures that the  particle system is exchangeable (hence, $X_k^1,\dots,X_k^N$ all have the same law $q_k$). By propagation of chaos (e.g.\ see~\cite{Chaintron2022}), it is reasonable to expect that, for large $N$, the particle's empirical distribution, $q_k^N$ in~\eqref{eq:jmh1}, closely approximates  $q_k$:
\[q_k^N\approx q_k.\]
Suppose that the above holds exactly, and consider the resulting `idealized version' $(\tilde{\theta}_k,\tilde{X}^{1:N}_k)$ of the chain $(\theta_k,X^{1:N}_k)$ produced by Alg.~\ref{alg:jmh}: just as in~(\ref{eq:jmh1},\ref{eq:jmh2}), except that the empirical distribution $q_k^N$ in~\eqref{eq:jmh1} is replaced by the exact law $q_k$. The idealized chain's one-dimensional law $\mu_k(d\theta,dx^{1:N})$ satisfies
\begin{equation}\label{eq:ngeua9gneuagnuaw2}\mu_{k+1}(d\psi,dz^{1:N})=\int \mu_k(d\theta,dx^{1:N})P_{N}^{q_k}((\theta,x^{1:N}),(d\psi,dz^{1:N})),\end{equation}
where the `idealized kernel' $P_N^q$ is given by
\begin{align*}P_N^q((\theta,x^{1:N}),(d\psi,dz^{1:N}))=&a_N((\theta,x^{1:N}),(\psi,z^{1:N}))\delta_{u(\theta,q)}(d\psi)K^\theta_N(x^{1:N},z^{1:N})dz^{1:N}\\
&+[1-a((\theta,x^{1:N}),(\psi,z^{1:N}))]\delta_\theta(d\psi)\delta_{x^{1:N}}(dz^{1:N}).\end{align*}

It is then straightforward to show that, for any stationary point $\theta_*$ of the marginal likelihood, 
\[\pi_N^*(d\theta,dx):=\delta_{\theta_*}(d\theta)\prod_{n=1}^Np_{\theta_*}(x^n|y)dx^n\]
is a stationary distribution of the idealized chain in the sense that
\begin{equation}\label{eq:jmhstat}(\tilde{\theta}_0,\tilde{X}^{1:N}_0)\sim \pi_N^*\quad\Rightarrow \quad(\tilde{\theta}_k,\tilde{X}^{1:N}_k)\sim \pi_N^*\enskip\forall k=1,2,\dots\end{equation}
In particular, detailed balance holds: for any $(\theta,x^{1:N})\neq(\psi,z^{1:N})$ satisfying 
\begin{align}&\qquad\qquad K^\psi_N(z^{1:N},x^{1:N})\prod_{n=1}^Np_{\psi}(z^n,y)\leq K^\theta_N(x^{1:N},z^{1:N})\prod_{n=1}^Np_{\theta}(x^n,y),\label{eq:dna8dbnsa6y8bdsaydsadsa2}\\
&\Rightarrow a_N((\theta,x^{1:N}),(\psi,z^{1:N}))=\frac{K^\psi_N(z^{1:N},x^{1:N})}{K^\theta_N(x^{1:N},z^{1:N})}\prod_{n=1}^N\frac{p_{\psi}(z^n,y)}{p_{\theta}(x^n,y)},\quad a_N((\psi,z^{1:N}),(\theta,x^{1:N}))=1,\nonumber\end{align}
we have that, with $\pi_*(\cdot):=p_{\theta_*}(\cdot|y)$,
\begin{align*}
&\pi_N^*(d\theta,dx^{1:N})P^{\pi_*}_N((\theta,x^{1:N}),(d\psi,dz^{1:N}))\\
&=\delta_{\theta_*}(d\theta)\left(\prod_{n=1}^N p_{\theta_*}(x^n|y)\right)a_N((\theta,x^{1:N}),(\psi,z^{1:N}))\delta_{u(\theta,\pi_*)}(d\psi)K^\theta_N(x^{1:N},z^{1:N})dx^{1:N}dz^{1:N}\\
&=\delta_{\theta_*}(d\theta)\left(\prod_{n=1}^N p_{\theta_*}(x^n|y)\frac{p_\psi(z^n,y)}{p_\theta(x^n,y)}\right)\delta_{u(\theta,\pi_*)}(d\psi)K^{\psi}_N(z^{1:N},x^{1:N})dx^{1:N}dz^{1:N}\\
&=\delta_{\theta_*}(d\theta)\left(\prod_{n=1}^N p_{\theta_*}(x^n|y)\frac{p_\psi(z^n,y)}{p_{\theta_*}(x^n,y)}\right)\delta_{u(\theta_*,\pi_*)}(d\psi)K^{\psi}_N(z^{1:N},x^{1:N})dx^{1:N}dz^{1:N}\\
&=\delta_{u(\theta_*,\pi_*)}(d\theta)\left(\prod_{n=1}^N \frac{p_{\theta_*}(z^n,y)}{p_{\theta_*}(y)}\right)\delta_{\theta_*}(d\psi)K^{\psi}_N(z^{1:N},x^{1:N})dx^{1:N}dz^{1:N}\\
&=\delta_{u(\psi,\pi_*)}(d\theta)\left(\prod_{n=1}^Np_{\theta_*}(z^n|y)\right)\delta_{\theta_*}(d\psi)K^{\psi}_N(z^{1:N},x^{1:N})dx^{1:N}dz^{1:N}\\
&=\delta_{u(\psi,\pi_*)}(d\theta)\left(\prod_{n=1}^Np_{\theta_*}(z^n|y)\right)a_N((\psi,z^{1:N}),(\theta,x^{1:N}))\delta_{\theta_*}(d\psi)K^{\psi}_N(z^{1:N},x^{1:N})dx^{1:N}dz^{1:N}z\\
&=\pi_N^*(d\psi,dz^{1:N})P_N^{\pi_*}((\psi,z^{1:N}),(d\theta,dx^{1:N})).
\end{align*}
Reversing the roles of $(\theta,x^{1:N})$ and $(\psi,z^{1:N})$, we find that the above also holds should the inequality~\eqref{eq:dna8dbnsa6y8bdsaydsadsa2} be reversed. Because it holds trivially if $(\theta,x^{1:N})=(\psi,z^{1:N})$, integrating both sides over $(\theta,x^{1:N})$, we find that 
\[\int\pi_N^*(d\theta,dx^{1:N})P^{\pi_*}_N((\theta,x^{1:N}),(d\psi,dz^{1:N}))=\pi_N^*(d\psi,dz^{1:N});\]
and~\eqref{eq:jmhstat} follows from~\eqref{eq:ngeua9gneuagnuaw2}.
\end{document}